\newtheorem{theorem}{Theorem}[section]
\newtheorem{corollary}[theorem]{Corollary}
\newtheorem{property}[theorem]{Property}
\newtheorem{lemma}[theorem]{Lemma}
\newcommand{\set}[1]{\left\{ #1 \right\}}
\newcommand{\pspar}{\psset{arrows=->, arrowscale=2, linewidth=1pt, linecolor=black, linestyle=solid, rowsep=1.5cm, colsep=1.5cm, levelsep=1.5cm, nodesep=3pt, labelsep=3pt, arcangle=30, border=0.05cm}}
\newcommand{\msout}[1]{\text{\sout{\ensuremath{#1}}}}
\title{Parsing methods streamlined}
\author{Luca Breveglieri \quad Stefano Crespi Reghizzi \quad Angelo Morzenti \\ \\ Dipartimento di Elettronica, Informazione e Bioingegneria (\emph{DEIB}) \\ Politecnico di Milano \\ Piazza Leonardo Da Vinci $32$, I-$20133$ Milano, Italy \\ email: $\{$ \text{luca.breveglieri, stefano.crespireghizzi, angelo.morzenti} $\}$@polimi.it}
\begin{abstract}
This paper has the goals ($1$) of unifying top-down parsing with shift-reduce parsing to yield a single simple and consistent framework, and ($2$) of producing provably correct parsing methods, deterministic as well as tabular ones, for extended context-free grammars (\emph{EBNF}) represented as state-transition networks. Departing from the traditional way of presenting as independent algorithms the deterministic bottom-up $\emph{LR} \, (1)$, the top-down $\emph{LL} \, (1)$ and the general tabular (Earley) parsers, we unify them in a coherent minimalist framework. We present a simple general construction method for \emph{EBNF} $\emph{ELR} \, (1)$ parsers, where the new category of convergence conflicts is added to the classical shift-reduce and reduce-reduce conflicts; we prove its correctness and show two implementations by deterministic push-down machines and by vector-stack machines, the latter to be also used for Earley parsers. Then the Beatty's theoretical characterization of $\emph{LL} \, (1)$ grammars is adapted to derive the extended $\emph{ELL} \, (1)$ parsing method, first by minimizing the $\emph{ELR} \, (1)$ parser and then by simplifying its state information. Through using the same notations in the $\emph{ELR} \, (1)$ case, the extended Earley parser is obtained. Since all the parsers operate on compatible representations, it is feasible to combine them into mixed mode algorithms.
\end{abstract}
\keywords{extended \emph{BNF} grammar, \emph{EBNF} grammar, deterministic parsing, shift-reduce parser, bottom-up parser, $\emph{ELR} \, (1)$, recursive descent parser, top-down parser, $\emph{ELL} \, (1)$, tabular Earley parser}
\begin{document}
\maketitle\today\footnotetext{See \emph{Formal Languages and Compilation}, S. Crespi Reghizzi, L. Breveglieri, A. Morzenti, Springer, London, $2$nd edition, (planned) $2014$}
\newpage
\tableofcontents
\newpage
\section{Introduction}\label{sectIntroduction}
In many applications such as compilation, program analysis, and document or natural language processing, a language defined by a formal grammar has to be processed using the classical approach based on syntax analysis (or parsing) followed by syntax-directed translation. Efficient deterministic parsing algorithms have been invented in the $1960$'s and improved in the following decade; they are described in  compiler related textbooks (for instance \cite{AhoLamSethiUl2006,GruneJacobs2004,Crespi09}), and efficient implementations are available and widely used. But research in the last decade or so has focused on issues raised by technological advances, such as  parsers for XML-like data-description languages, more efficient general tabular parsing for  context-free grammars, and  probabilistic parsing for natural language processing,  to mention just some  leading research lines.
\par
This paper has the goals ($1$) of unifying top-down parsing with shift-reduce parsing and, to a minor extend, also with tabular Earley parsing, to yield a single simple and consistent framework, and (2) of producing  provably correct parsing  methods, deterministic and also tabular ones, for extended context-free grammars (\emph{EBNF}) represented as state-transition networks.
\par
We address the first goal. Compiler and language developers, and those who are familiar with the technical aspects of parsing, invariably feel that there ought to be room for an improvement of the classical  parsing methods:  annoyingly similar yet incompatible notions are used in shift-reduce (i.e., $\emph{LR} \, (1)$), top-down (i.e., $\emph{LL} \, (1)$) and  tabular Earley parsers. Moreover, the parsers are presented as independent algorithms, as indeed they were when first invented, without taking  advantage of the known grammar inclusion properties ($\emph{LL} \, (k) \subset \emph{LR} \, (k) \subset \text{context-free}$) to tighten and simplify constructions and proofs. This may be a consequence of the excellent quality of the original presentations (particularly but not exclusively \cite{Knuth1965LR(k),Rosenkrantz:Stearns:ic:1970,Earley70}) by distinguished scientists, which made revision and systematization less necessary.
\par
Our first contribution is to conceptual economy and clarity. We have reanalyzed the traditional deterministic parsers on the view provided by Beatty's \cite{Beatty82} rigorous characterization of $\emph{LL} \, (1)$ grammars as a special case of the $\emph{LR} \, (1)$. This allows us to show how to transform shift-reduce parsers into the top-down parsers by merging and simplifying parser states and by anticipating parser decisions. The result is that only one set of technical definitions suffices to present all parser types.
\par
Moving to the second goal, the best known shift-reduce and tabular parsers do not accept the extended form of \emph{BNF} grammars known as \emph{EBNF} (or \emph{ECFG} or also as regular right-part \emph{RRPG}) that make use of regular expressions, and are widely used in the language reference manuals and are popular among designers of top-down parsers using recursive descent methods. We  systematically use a graphic representation of \emph{EBNF} grammars by means of state-transition networks (also known as syntax charts. Brevity prevents to discuss in detail the long history of the research on parsing methods for \emph{EBNF} grammars. It suffices to say that the existing top-down deterministic method is well-founded and very popular, at least since its use in the elegant recursive descent Pascal compiler \cite{Wirth75} where \emph{EBNF} grammars are represented by a network of finite-state machines, a formalism already in use since at least \cite{journals/jacm/Lomet73}). On the other hand, there have been numerous interesting but non-conclusive proposals for shift-reduce methods for \emph{EBNF} grammars, which operate under more or less general assumptions and are implemented by various types of push-down parsers (more of that in Section \ref{sectRelatedWorkELR}). But a  recent survey  \cite{conf/lata/Hemerik09} concludes rather negatively:
\begin{quotation}\noindent
What has been published about \emph{LR}-like parsing theory is so complex that not many feel tempted to use it; \ldots it is a striking phenomenon that the ideas behind recursive descent parsing of \emph{ECFG}s can be grasped and applied immediately, whereas most of the literature on \emph{LR}-like parsing of \emph{RRPG}s is very difficult to access. \ldots Tabular parsing seems feasible but is largely unexplored.
\end{quotation}
We have decided to represent extended grammars as transition diagram systems, which are of course equivalent to grammars, since any regular expression can be easily mapped to its finite-state recognizer; moreover, transition networks (which we dub machine nets) are often more readable than grammar rules. We stress that \emph{all} our constructions operate on machine nets that represent \emph{EBNF} grammars and, unlike many past proposals, do not make restrictive assumptions on the form of regular expressions.
\par
For \emph{EBNF} grammars, most past shift-reduce methods had met with a difficulty: how to formulate a condition that ensures deterministic parsing in the presence of recursive invocations and of cycles in the transition graphs. Here we offer a simple and rigorous formulation that adds to the two classical conditions (neither shift-reduce nor reduce-reduce conflicts) a third one: no convergence conflict. Our shift-reduce parser is presented in two variants, which use different devices for identifying the right part or handle (typically a substring of unbounded length) to be reduced:  a deterministic pushdown automaton, and  an implementation, to be named vector-stack machine,  using unbounded integers as pointers into the stack.
\par
The vector-stack device is also used in our last development: the tabular Earley-like parser for \emph{EBNF} grammars. At last, since all parser types described operate on uniform assumptions and use compatible notations, we suggest the possibility to combine them into mixed-mode algorithms.
\par
After half a century research on parsing, certain facts and properties that had to be formally proved in the early studies have become obvious, yet the endemic presence of errors or inaccuracies (listed in \cite{conf/lata/Hemerik09}) in published  constructions for \emph{EBNF} grammars, warrants that all new constructions be proved correct. In the interest of  readability and brevity, we first present the enabling properties and the constructions semi-formally also relying on significant examples; then, for the properties and constructions that are new, we provide correctness proofs in the appendix.
\par
The paper is organized as follows. Section \ref{sectionPreliminaries} sets the terminology and notation for grammars and transition networks. Section \ref{sectionShiftReduce} presents the construction of shift-reduce $\emph{ELR} \, (1)$ parsers. Section \ref{sectDeterministicTopDownParsing} derives top-down $\emph{ELL} \, (1)$ parsers, first by transformation of shift-reduce parsers and then also directly. Section \ref{sectionTabular} deals with tabular Earley parsers.
\section{Preliminaries} \label{sectionPreliminaries} \label{SectGramNetwork}
The concepts and terminology for grammars and automata are classical (e.g., see \cite{AhoLamSethiUl2006,Crespi09}) and we only have to introduce some specific notations.
\par
A \emph{BNF} or context-free \emph{grammar} $G$ is specified by the \emph{terminal alphabet} $\Sigma$, the \emph{set of nonterminal symbols} $V$, the \emph{set of rules} $P$ and the \emph{starting symbol} or \emph{axiom} $S$. An element of $\Sigma \cup V$ is called a \emph{grammar symbol}. A rule has the form $A \to \alpha$, where $A$, the \emph{left part}, is a nonterminal and $\alpha$, \emph{the right part}, is a possibly empty (denoted by $\varepsilon$) string over $\Sigma \cup V$. Two rules such as $A \to \alpha$ and $A \to \beta$ are called \emph{alternative} and can be shortened to $A \to \alpha \; \mid \; \beta$.
\par
An \emph{Extended BNF} (\emph{EBNF}) grammar $G$ generalizes the rule form by allowing the right part to be a \emph{regular expression} (r.e.) over $\Sigma \cup V$. A r.e. is a formula that as operators uses union  (written ``$|$''), concatenation, Kleene star and parentheses.
The language defined by a r.e. $\alpha$ is denoted $R \, (\alpha)$. For each nonterminal $A$ we assume, without any loss of generality, that $G$ contains exactly one rule $A \to \alpha$.
\par
A \emph{derivation} between strings is a relation $u \, A \, v \Rightarrow u \, w \, v$, with $u$, $w$ and $v$ possibly empty strings, and $w \in R \, (\alpha)$. The derivation is \emph{leftmost} (respectively \emph{rightmost}), if $u$ does not contain a nonterminal (resp. if $v$ does not contain a nonterminal). A series of derivations is denoted by $\stackrel \ast \Rightarrow$. A derivation $A\stackrel \ast \Rightarrow A \, v$, with $v \neq \varepsilon$, is called \emph{left-recursive}. For a derivation $u \Rightarrow v$, the reverse relation is named \emph{reduction} and denoted by $v \leadsto u$.
\par
The language $L \, (G)$ generated by grammar $G$ is the set  $L \, (G) = \{ \; x \in \Sigma^\ast \; \vert \quad S \stackrel \ast \Rightarrow x \; \}$. The language $L \, (A)$ generated by nonterminal $A$ is the set $L_A \, (G) = \{ \; x \in \Sigma^\ast \; \vert \quad A \stackrel \ast \Rightarrow x \; \}$. A language is \emph{nullable} if it contains the empty string. A nonterminal that generates a nullable language is also called nullable.
\par
Following a tradition dating at least to \cite{journals/jacm/Lomet73}, we are going to represent a grammar rule $A \to \alpha$ as a graph: the state-transition graph of a finite automaton to be named \emph{machine}\footnote{To avoid confusion we call ``machines'' the finite automata of grammar rules, and we reserve the term ``automaton'' for the pushdown automaton that accepts language $L \, (G)$.} $M_A$ that recognizes a regular expression $\alpha$. The collection $\mathcal{M}$ of all such graphs for the set $V$ of nonterminals, is named a \emph{network of finite machines} and is a graphic representation of a grammar (see Fig. \ref{figuraReteMacchineEsEsprAritm}). This has well-known advantages: it offers a pictorial representation, permits to directly handle $EBNF$ grammars and maps quite nicely on the recursive-descent parser implementation.
\par
In the simple case when $\alpha$ contains just terminal symbols, machine $M_A$ recognizes the language $L_A \, (G)$. But if $\alpha$ contains a nonterminal $B$, machine $M_A$ has a state-transition edge labeled with $B$. This can be thought of as the invocation of the machine $M_B$ associated to rule $B \to \beta$; and if nonterminals $B$ and $A$ coincide, the invocation is recursive.
\par
It is convenient although not necessary, to assume that the machines are deterministic, at no loss of generality since a nondeterministic finite state machine can always be made deterministic.
\begin{figure}[h!]
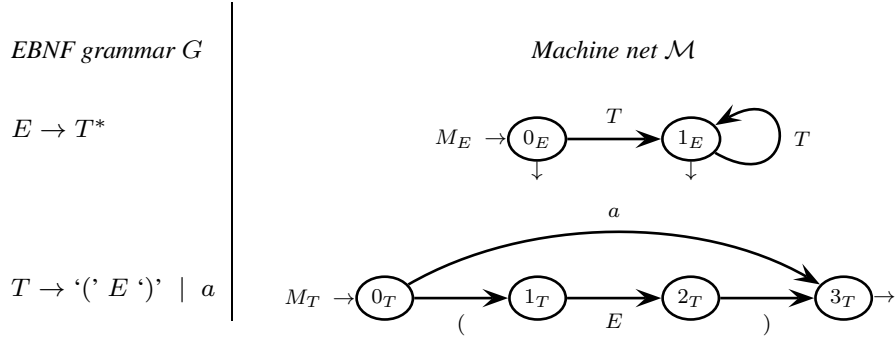

\begin{center}
\def\arraystretch{3}\begin{tabular}{l|p{1cm}c}
\scalebox{1.2}{\emph{EBNF grammar $G$}} && \scalebox{1.2}{\emph{Machine net $\mathcal{M}$}} \\

\scalebox{1.2}{$E \to T^\ast$}

&&

\pspar\psset{border=0pt,nodesep=0pt,labelsep=5pt,colsep=1.25cm}
\begin{psmatrix}
\ovalnode{0E}{$0_E$} & \ovalnode{1E}{$1_E$}

\nput[labelsep=0pt]{180}{0E}{$M_E \ \to$}

\nput[labelsep=0pt]{-90}{1E}{$\downarrow$}

\nput[labelsep=0pt]{-90}{0E}{$\downarrow$}

\ncline{0E}{1E} \naput{$T$}

\nccurve[angleA=-30,angleB=30,ncurvA=7,ncurvB=7]{1E}{1E} \nbput{$T$}
\end{psmatrix}

\\ \\

\scalebox{1.2}{$T \to \text{`$($'} \; E \; \text{`$)$'} \; \mid \; a$}

&&

\pspar\psset{border=0pt,nodesep=0pt,labelsep=5pt,,colsep=1.25cm}
\begin{psmatrix}
\ovalnode{0T}{$0_T$} & \ovalnode{1T}{$1_T$} & \ovalnode{2T}{$2_T$} & \ovalnode{3T}{$3_T$}

\nput[labelsep=0pt]{180}{0T}{$M_T \ \to$}

\nput[labelsep=0pt]{0}{3T}{$\to$}

\ncline{0T}{1T} \nbput{$($}

\ncline{1T}{2T} \nbput{$E$}

\ncline{2T}{3T} \nbput{$)$}

\ncarc{0T}{3T} \naput{$a$}
\end{psmatrix}

\end{tabular}
\end{center}
\caption{$EBNF$ grammar $G$ (axiom $E$) and machine network $\mathcal{M} = \set{ \, M_E, \, M_T \, }$ (axiom $M_E$) of the running example \ref{exRunningELRpart1}; initial (respectively final) states are tagged by an incoming (resp. outgoing) dangling dart.} \label{figRunningExELRpart1} \label{figuraReteMacchineEsEsprAritm}
\end{figure}
\begin{definition}Recursive net of finite deterministic machines. \label{retiRicorsMacch}
\begin{itemize}
\item Let $G$ be an $EBNF$ grammar with nonterminal set $V = \{ \; S, \, A, \, B, \, \ldots \; \}$ and
grammar rules $S \to \sigma$, $A \to \alpha$, $B \to \beta$, \, \ldots
\item
Symbols $R_S$, $R_A$, $R_B$, \, \ldots \, denote the regular languages over alphabet $\Sigma \cup V$, respectively defined by the r.e. $\sigma$, $\alpha$, $\beta$, \, \ldots
\item
Symbols $M_S$, $M_A$, $M_B$, \, \ldots \, are the names of finite deterministic machines that accept the corresponding regular languages $R_S$, $R_A$, $R_B$, \, \ldots \, .  As usual, we assume the machines to be \emph{reduced}, in the sense that every state is reachable from the initial state and reaches a final state.  The set of all machines, i.e., the \emph{(machine) net}, is denoted by $\mathcal{M}$.
\item
To prevent confusion, the names of the states of any two machines are made different by appending the machine name as subscript. The set of states of machine $M_A$ is  $Q_A = \{ \; 0_A, \; \ldots, \; q_A, \; \ldots \; \}$, the initial state is $0_{A}$ and the set of final states is $F_A \subseteq Q_A$. The \emph{state set of the net} is the union of all states $Q = \bigcup_{M_A \in \mathcal{M}} Q_A$. The transition function of every machine is denoted by the same symbol $\delta$, at no risk of confusion as the state sets are disjoint.
\item
For state $q_A$ of machine $M_A$, the symbol $R \, (M_A, \, q_A)$ or for brevity $R \, (q_A)$, denotes the \emph{regular language} of alphabet $\Sigma \cup V$ accepted by the machine starting from state $q_A$. If $q_A$ is the initial state $0_A$, language $R \, (0_A) \equiv R_A$ includes every string that labels a path, qualified as \emph{accepting}, from the initial to a final state.
\item Normalization disallowing reentrance into initial states. To simplify the parsing algorithms, we stipulate that for every machine $M_A$
no edge $q_A \stackrel c \longrightarrow 0_A$ exists, where $q_A \in Q_A$ and $c$ is a grammar symbol. In words, no edges may enter the initial state.
\end{itemize}
\end{definition}
The above normalization ensures that the initial state $0_A$ is not visited again within a computation that stays inside machine $M_A$;  clearly, any machine can be so normalized by adding one state and a few transitions, with a negligible overhead. Such minor adjustment greatly simplifies the reduction moves of bottom-up parsers.
\par
For an arbitrary r.e. $\alpha$, several well-known algorithms such as MacNaughton-Yamada and Berry-Sethi (described for instance in \cite{Crespi09}) produce a machine recognizing the corresponding regular language. In practice, the r.e. used in the right parts of grammars are so simple that they can be immediately translated by hand into an equivalent machine. We neither assume nor forbid that the machines be  minimal with respect to the number of states. In facts, in syntax-directed translation it is not always desirable to use the minimal machine, because different semantic actions may be required in two states that would be indistinguishable by pure language theoretical definitions.
\par
If the grammar is purely $BNF$ then each right part has the form $\alpha = \alpha_1 \; \mid \; \alpha_2 \; \mid \; \ldots \; \mid \; \alpha_k$, where every alternative $\alpha_i$ is a finite string; therefore $R_A$ is a finite language and any machine for $R_A$ has an acyclic graph, which can be made into a tree if we accept to have a non-minimal machine. In the general case, the graph of machine $M_A$ representing rule $A \to \alpha$ is not acyclic. In any case there is a one-to-one mapping between the strings in language $R_A$ and the set of accepting paths of machine $M_A$. Therefore, the  net $\mathcal{M} = \set{ \; M_S, \; M_A, \; \ldots \; }$ is essentially a notational variant of a grammar $G$, as witnessed by the common practice to include both  $EBNF$ productions  and  syntax diagrams in language specifications.
We indifferently denote the language as $L \, (G)$ or as $L \, (\mathcal{M})$.
\par
We need also the context-free \emph{terminal} language defined by the net, starting from a state $q_A$ of machine $M_A$ possibly other than the initial one:
\[
L \, (M_A, \, q_A) \equiv L \, (q_A) = \set{ \; y \in \Sigma^\ast \; \vert \quad \eta \in R \, (q_A) \ \text{and} \ \eta \overset \ast \Rightarrow y \; }
\]
In the formula, $\eta$ is a string over terminals and nonterminals, accepted by machine $M_A$ starting from state $q_A$. The derivations originating from $\eta$ produce the terminal strings of language $L \, (q_A)$. In particular, from previous definitions it follows that:
\[
L \, (M_A, \, 0_A) \equiv L \, (0_A) =  L_{A} \, (G) \quad \text{and} \quad L \, (M_S, \, 0_S) \equiv L \, (0_S) \equiv L \, (\mathcal{M}) = L \, (G)
\]
\noindent
\begin{example}Running example.\label{exRunningELRpart1}\label{reteEsprAritm}
\par
The $EBNF$ grammar $G$ and machine net $\mathcal{M}$ are shown in Figure \ref{figRunningExELRpart1}.
The language generated can be viewed as obtained from the language of well-parenthesized strings, by allowing character $a$ to replace a well-parenthesized substring.
All machines are deterministic and the initial states are not reentered. Most features needed to exercise different aspects of parsing are present:  self-nesting, iteration, branching, multiple final states and the nullability of a nonterminal.
\par
To illustrate, we list a language defined by its net and component machines, along with their aliases:
\[
\def\arraystretch{1.5}
\begin{array}{rcl}
R \, (M_E, \, 0_E) & = & R \, (0_E) = T^\ast \\
R \, (M_T, \, 1_T) & = & R \, (1_T) =  E \; \Big) \\
L \, (M_E, \, 0_E) & = & L \, (0_E) = L \, (G) = L \, (\mathcal{M}) \\
& = & \set { \; \varepsilon, \; a, \; a\,a, \; ( \, ), \; a\,a\,a, \; ( \, a \, ), \; a \, ( \, ), \; ( \, ) \, a, \; ( \, ) \, ( \, ), \; \ldots \; } \\
L \, (M_T, \, 0_T) & = & L \, (0_T) = L_T \, (G) = \set{ \; a, \; ( \, ), \; ( \, a \, ), \; ( \, a \, a \, ), \; ( \, ( \, ) \, ), \; \ldots \; }
\end{array}
\]
\end{example}
\par
To identify machine states, an alternative  convention  quite used for $BNF$ grammars, relies on \emph{marked grammar rules}: for instance the states of machine $M_T$ have the aliases:
\begin{align*}
0_T & \equiv T \to \bullet \, a \; \mid \; \bullet \, ( \, E \,) &  1_T & \equiv T \to ( \, \bullet \, E \, ) \\
2_T & \equiv T \to ( \, E \, \bullet \, ) & 3_T & \equiv a \, \bullet \; \mid \; T \to ( \, E \, ) \, \bullet
\end{align*}
where the bullet is a character not in $\Sigma$.
\par
We need to define the set of \emph{initial} characters for the strings recognized starting from a given state.
\noindent
\begin{definition}\label{defInitials}Set of initials.
\[
Ini \, (q_A) =  Ini \, \big( \; L \, (q_A) \; \big) = \set{ \; a \in \Sigma \; \vert \quad a \; \Sigma^\ast \cap L \, (q_A) \neq \emptyset \; }
\]
The set can be computed by applying the following logical clauses until a fixed point is reached. Let $a$ be a terminal, $A$, $B$, $C$ nonterminals, and $q_A$, $r_A$ states of the same machine. The clauses are:
\begin{align*}
a \in Ini \, (q_A) & \quad \text{if} \ \exists \; \text{edge} \ q_A \stackrel a \longrightarrow r_A \\
a \in Ini \, (q_A) & \quad \text{if} \ \exists \; \text{edge} \ q_A \stackrel { B } \longrightarrow r_A \ \land \ a \in Ini \, (0_B) \\
a \in Ini \, (q_A) & \quad \text{if} \ \exists \; \text{edge} \ q_A \stackrel { B } \longrightarrow r_A \ \land \ L \, (0_B) \; \text{is nullable} \ \land \
a \in Ini \, (r_A)
\end{align*}
\end{definition}
To illustrate, we have $Ini \, (0_E) = Ini \, (0_T) = \{ \; a, \; ( \; \}$.
\subsection{Derivation for machine nets}\label{subSectDerivation}
For machine nets and $EBNF$ grammars, the preceding definition of derivation, which models a rule such as $E \to T^*$ as the infinite set of $BNF$ alternatives $E \to \varepsilon \; \mid \; T \; \mid \; T \, T  \; \mid \; \ldots$, has shortcomings because a derivation step, such as $E \Rightarrow T\,T\,T$, replaces a nonterminal $E$ by a string of possibly \emph{unbounded} length; thus a multi-step computation inside a machine is equated to just one derivation step. For application to parsing, a more analytical definition is needed to split such large step into a series of state transitions.
\par
We recall that a $BNF$ grammar is \emph{right-linear} ($RL$) if the rule form is $A \to a \, B$ or $A \to \varepsilon$, where $a \in \Sigma$ and $B \in V$. Every finite-state machine can be represented by an equivalent $RL$ grammar that has the machine states  as nonterminal symbols.
\subsubsection{Right-linearized grammar}\label{rightLinearizedGram}
Each machine $M_A$ of the net can be replaced with an equivalent right linear $RL$ grammar, to be used to provide a rigorous semantic for the derivations constructed by our parsers.
It is straightforward to write the $RL$ grammar, named $\hat{G}_A$, equivalent with respect to the regular language $R \, (M_A, \, 0_A)$. The nonterminals of $\hat{G}_A$ are the states of $Q_A$, and the axiom is  $0_A$; there exists a rule $p_A \to X \, r_A$ if an edge $p_A \stackrel X \longrightarrow r_A$ is in $\delta$, and the empty rule $p_A \to \varepsilon$ if $p_A$ is a final state.
\par
Notice that $X$ can be a nonterminal $B$ of the original grammar, therefore a rule of $\hat{G}_A$  may have the form $p_A \to B \, r_A $, which is still $RL$ since the first symbol of the right part is viewed  as a ``terminal'' symbol for grammar $\hat{G}_A$. With this provision, the identity $L \, \left( \, \hat{G}_A \, \right) = R \, (M_A, \, 0_A)$ clearly holds.
\par
Next, for every $RL$ grammar of the net we replace with $0_B$ each symbol $B \in V$ occurring in a rule such as $p_A \to  B \, r_A$, and thus we obtain rules of the form $ p_A \to 0_B \, r_A$. The resulting $BNF$ grammar, non-$RL$, is denoted $\hat{G}$ and named the \emph{right-linearized grammar} of the net: it has terminal alphabet $\Sigma$, nonterminal set $Q$ and axiom $0_S$. The right parts have length zero or two, and may contain two nonterminal symbols, thus the grammar is \textit{not} $RL$. Obviously, $\hat{G}$ and $G$ are equivalent, i.e., they both generate language $L \, (G)$.
\begin{example}\label{reteEsprAritmRightLin}Right-linearized grammar $\hat{G}$ of the running example.
\noindent
\begin{equation*}
\begin{array}{l l l l l}
\hat{G}_E \colon & 0_E \to 0_T 1_E \; \mid \; \varepsilon \qquad \qquad & 1_E \to 0_T \; 1_E \; \mid \; \varepsilon \qquad \\ \\
\hat{G}_T \colon & 0_T \to a \; 3_T \; \mid \; ( \; 1_T & 1_T \to 0_E \; 2_T & 2_T \to ) \; 3_T \qquad \qquad & 3_T \to \varepsilon
\end{array}
\end{equation*}
As said, in right-linearized grammars we choose to name nonterminals  by their alias states; %(e.g. $0_E$ is an alias of  $E$)
an instance of non-$RL$ rule is $1_T \to 0_E \, 2_T$.
\end{example}
Using $\hat{G}$ instead of $G$, we obtain derivations the steps of which are elementary state transitions instead of an entire sub-computation on a machine. An example should suffice.
\noindent
\begin{example}\label{exLeftDerivForNet}Derivation.
\par
For grammar $G$ the ``classical'' leftmost derivation:
\begin{equation}\label{eqClassicDerivation}
E \underset G \Rightarrow T \, T \underset G \Rightarrow a \, T \underset G \Rightarrow  a \, ( \, E \, ) \underset G \Rightarrow a \, ( \, )
\end{equation}
is expanded into the series of truly atomic derivation steps of the right-linearized grammar:
\begin{equation}\label{eqRLGderivation}
\begin{array}{rclclcl}
0_E & \underset {\hat{G}} \Rightarrow & 0_T \, 1_E & \underset {\hat{G}} \Rightarrow & a \, 1_E & \underset {\hat{G}} \Rightarrow & a \, 0_T \, 1_E \\
& \underset {\hat{G}} \Rightarrow & a \, ( \, 1_T \, 1_E & \underset {\hat{G}} \Rightarrow & a \, ( \, 0_E \, 2_T \, 1_E & \underset {\hat{G}} \Rightarrow & a \, ( \, \varepsilon \, 2_T \, 1_E \\
& \underset {\hat{G}} \Rightarrow & a \, ( \, ) \, 3_T \, 1_E & \underset {\hat{G}} \Rightarrow & a \, ( \, ) \, \varepsilon \, 1_E & \underset {\hat{G}} \Rightarrow & a \, ( \, ) \, \varepsilon  \\
& = & a \, ( \, ) \\
\end{array}
\end{equation}
\end{example}
We may also work bottom-up and consider  reductions such as $ a \, (\, 1_T \, 1_E \leadsto  a \, 0_T \, 1_E$.
\par
As said, the right-linearized grammar is only used in our proofs to assign a precise semantic to the parser steps, but has otherwise no use as a readable specification of the language to be parsed. Clearly, an $RL$ grammar has many more rules than the original $EBNF$ grammar and is  less readable than a syntax diagram or machine net, because it needs to introduce a plethora of nonterminal names to identify the machine states.
\subsection{Call sites, machine activation, and look-ahead} \label{callSitesActivationLookahead}
An edge labeled with a nonterminal, $q_A \stackrel {B} \longrightarrow r_A$, is named a \emph{call site} for machine $M_B$, and $r_A$ is the corresponding \emph{return state}. Parsing can be viewed as a process that at call sites activates a machine, which on its own graph performs scanning operations and further calls until it reaches a final state, then performs a reduction and returns. Initially the axiom machine, $M_S$, is activated by the program that invokes the parser. At any step in the derivation, the nonterminal suffix of the derived string contains the current state of the active machine followed by the return points of the suspended machines; these are ordered from right to left according to their activation sequence.
\par
\begin{example}\label{derivAndRetPoints}Derivation and machine return points.
\par
Looking at derivation \eqref{eqRLGderivation}, we find
$0_E \stackrel * \Rightarrow a \, ( \, 0_E \, 2_T \, 1_E$: machine $M_E$ is active and its current state is $0_E$; previously machine $M_T$ was suspended and will resume in state $2_T$; an earlier activation of $M_E$ was also suspended and will resume in state $1_E$.
\end{example}
Upon  termination of $M_B$  when $M_A$ resumes in return state $r_A$, the collection of all the first legal tokens that can be scanned is named the \emph{look-ahead set} of this activation of $M_B$; this intuitive concept is made more precise in the following definition of a candidate. By inspecting the next token, the parser can avoid invalid machine call actions. For uniformity, when the input string has been entirely scanned, we assume the next token to be the special character $\dashv$ (string terminator or end-marker).
\par
A \emph{$1$-candidate} (or $1$-item), or simply a \emph{candidate} since we exclusively deal with look-ahead length $1$, is a pair $\langle q_B, \, a \rangle$ in $Q \times \left( \, \Sigma \cup \{ \dashv \} \, \right)$. The intended meaning is that token $a$ is a legal look-ahead token for the current activation of machine $M_B$. We reformulate the classical \cite{Knuth1965LR(k)} notion of closure function for a machine net and we use it to compute the set of legal candidates.
\noindent
\begin{definition}\label{defLookaheadSet}Closure functions.
\par
The initial activation of machine $M_S$ is encoded by candidate $\langle 0_S, \, \dashv \rangle$.
\par
Let $C$ be a set candidates initialized to $\{\langle 0_S, \, \dashv \rangle\}$. The closure of $C$ is the function defined by applying the following clauses until a fixed point is reached:
\begin{align}
c & \in closure \, (C) \quad \text{if} \ c \in C \nonumber \\
\label{eqDef1-closure}
\langle 0_B, \, b \rangle & \in closure \, (C) \quad \text{if} \ \exists \; \langle q, \, a \rangle \in closure \, (C) \ \text{and} \ \exists \; \text{edge} \; q \stackrel {B} \longrightarrow r \; \text{in} \; \mathcal{M} \\
\phantom{\langle 0_B, \, b \rangle} & \phantom{\in closure \, (C)} \quad \ \text{and} \ b \in {Ini} \, \big( \; L \, (r) \cdot a \; \big) \nonumber
\end{align}
\end{definition}
Thus closure functions  compute the set of machines reachable from a given call site through one or more invocations, without any intervening state transition.
\par
For conciseness, we group together the candidates that have the same state and we write $\big\langle q, \, \{ \, a_1, \, \ldots, \, a_k \, \} \big\rangle$ instead of $\{ \, \langle q, \, a_1 \rangle, \, \ldots, \, \langle q, \, a_k \rangle \, \}$. The collection $\set{ \, a_1, \, a_2, \, \ldots, \, a_k \, }$ is termed \emph{look-ahead set} and by definition it cannot be empty.
\par
We list a few values of closure function for the  grammar of Ex. \ref{exRunningELRpart1}:
\begin{center}
$\def\arraystretch{1.5}\begin{array}{l|ccc}
& \multicolumn{3}{l}{\text{function} \; closure} \\ \hline	
\langle 0_E, \, \dashv \rangle & \langle 0_E, \, \dashv \rangle \qquad & \big\langle 0_T, \set{ \, \dashv, \, a, \, ( \, } \big\rangle \\
\langle 1_T, \, \dashv \rangle & \langle 1_T, \dashv \rangle \qquad & \big\langle 0_E, \, ) \, \big\rangle \qquad & \big\langle 0_T, \{ \, a, \, (, \, ) \, \} \big\rangle \\
\end{array}$
\end{center}
\section{Shift-reduce parsing} \label{sectionShiftReduce}
We show how to construct deterministic bottom-up parsers directly for $EBNF$ grammars represented by machine nets. As we deviate from the classical Knuth's method, which operates on pure $BNF$ grammars, we call our method $ELR \, (1)$ instead of $LR \, (1)$. For brevity, whenever some passages are identical or immediately obtainable from classical ones, we do not spend much time to justify them. On the other hand, we  include correctness proofs of the main constructions because  past works on Extended $BNF$ parsers have been found not rarely to be flawed \cite{conf/lata/Hemerik09}. At the end of this section we briefly compare our method with  older ones.
\par
An $ELR \, (1)$ parser is a deterministic pushdown automaton ($DPDA$) equipped with a set of states named \textit{macrostates} (for short m-states) to avoid confusion with net states. An m-state consists of a set of $1$-candidates (for brevity candidate).
\par
The automaton performs moves of two types. A \emph{shift} action reads the current input character (i.e., a token) and applies the $PDA$ state-transition function to compute the next m-state; then the token and the next m-state are pushed on the stack. A \emph{reduce} action is applied when the grammar symbols from the stack top match a recognizing path on a machine $M_A$ and the current token is admitted by the look-ahead set. For the parser to be deterministic in any configuration, if a shift is permitted then reduction should be impossible, and in any configuration at most one should be possible. A reduce action grows bottom-up the syntax forest, pops the matched part of the stack, and pushes the nonterminal symbol recognized and the next m-state. The $PDA$ accepts the input string if the last move reduces to the axiom and the input is exhausted; the latter condition can be expressed by saying that the special end-marker character $\dashv$ is the current token.
\par
The presence of convergent paths in a machine graph complicates reduction moves because two such paths may require to pop  different stack segments (so-called reduction handles); this difficulty is acknowledged in  past research on  shift-reduce methods for EBNF grammars, but the proposed solutions differ in the technique used and in generality. To implement such reduction moves, we enrich the stack organization with pointers, which enable the parser to trace back a recognizing path while popping the stack. Such a pointer can be implemented in two ways: as a bounded integer offset that identifies a candidate in the previous stack element, or as an unbounded integer pointer to a distant stack element. In the former case the parser still qualifies as $DPDA$ because the stack symbols are taken from a a finite set. Not so in the latter case, where the pointers are unbounded integers; this organization is an indexable stack to be called a \emph{vector-stack}  and will be also used by Earley parsers.
\subsection{Construction of $\emph{ELR} \, (1)$ parsers}
Given an $EBNF$ grammar represented by a  machine net, we show how to construct an $ELR \, (1)$ parser if certain conditions are met. The method operates in three phases:
\begin{enumerate}
\item From the net we construct a $DFA$, to be called a \emph{pilot} automaton.\footnote{Its traditional lengthy name is ``recognizer
of viable $LR \, (1)$ prefixes''.} A pilot state, named \emph{macro-state} (m-state),  includes  a non empty set of candidates, i.e., of  pairs of  states and terminal tokens (look-ahead set).
\item The pilot is examined to check the conditions for deterministic bottom-up parsing; the check involves an inspection of the components
of each m-state and of the transitions outgoing from it. Three types of failures may occur: \emph{shift-reduce} or \emph{reduce-reduce} conflicts, respectively signify that in a parser configuration both a shift and a reduction are possible or multiple reductions; a \emph{convergence} conflict occurs when two different parser computations that share a look-ahead character, lead to the same machine state.
\item If the test is passed, we construct the deterministic $PDA$, i.e., the parser, through using the pilot $DFA$ as its finite-state control and adding  the operations needed for managing reductions of unbounded length.
\end{enumerate}
At last, it would be a simple  exercise to encode the $PDA$ in a programming language.
\par
For a candidate $\left\langle p_A, \, \rho \right\rangle$ and a terminal or nonterminal symbol $X$, the \emph{shift}\footnote{Also known as ``go to'' function.} under $X$ (qualified as \emph{terminal/nonterminal} depending on $X$ being a terminal or non- ) is:
\[\def\arraystretch{1.5}
\begin{cases}
\vartheta \, \left( \, \left\langle p_A, \, \rho \right\rangle, \, X \, \right) = \left\langle q_A, \, \rho \right\rangle & \text{if edge} \ p_A \stackrel {X} {\longrightarrow} q_A \ \text{exists} \\
\text{the empty set} & \text{otherwise}
\end{cases}
\]
For a set $C$ of candidates, the shift under a symbol $X$ is the union of the shifts of the candidates in $C$.
\begin{algorithm}\label{algMacchPilotaELR(1)}Construction of the $ELR \, (1)$ pilot graph.
\par
The pilot is the $DFA$, named $\mathcal{P}$, defined by:
\begin{itemize}
\item the set $R$ of m-states
\item the \emph{pilot alphabet} is the union $\Sigma \cup V$ of the terminal and nonterminal alphabets
\item the initial m-state $I_0$ is the set: $I_0 = closure \, (\left\langle 0_S, \, \dashv \right\rangle)$
\item the  m-state set $R = \{ \, I_0, \, I_1, \, \ldots \, \}$ and the state-transition function\footnote{Named theta, $\vartheta$, to avoid confusion
with the traditional name delta, $\delta$, of the transition function of the net.} $\vartheta \colon R \times \left( \Sigma \cup V \right) \to R$ are computed starting from $I_0$ by the following steps:
\end{itemize}
\begin{tabbing}
\hspace{0.75cm} \= \hspace{0.75cm} \= \hspace{0.75cm} \= \hspace{0.75cm} \= \hspace{0.75cm} \= \kill
\> $R' : = \set{ \, I_0 \, }$ \\
\> \textbf{do} \\
\>\> $R := R'$ \\
\>\> \textbf{for} \ $\big( \, \text{all the m-states $I \in R$ and symbols $X \in \Sigma \cup V$} \, \big)$ \ \textbf{do} \\
\>\>\> $I' := closure \, \big( \vartheta \, \left(I, \, X \right) \big)$ \\
\>\>\> \textbf{if} \ $\big( \, I' \neq \emptyset \, \big)$ \ \textbf{then} \\
\>\>\>\> add the edge $I \stackrel X \longrightarrow I'$ to the graph of $\vartheta$ \\
\>\>\>\> \textbf{if} \ $\big( \, I' \not \in R \, \big)$ \ \textbf{then} \\
\>\>\>\>\> add the m-state $I'$ to the set $R'$ \\
\>\>\>\> \textbf{end if} \\
\>\>\> \textbf{end if} \\
\>\> \textbf{end for} \\
\> \textbf{while} $\big( \, R \neq R' \, \big)$  \\
\end{tabbing}
\end{algorithm}
\subsubsection{Base closure and kernel of a m-state}
For every m-state $I$ the set of candidates is partitioned into two subsets: base and closure. The \emph{base} includes the non-initial candidates:
\[
I_{ \vert base} = \set{ \, \langle q, \, \pi \rangle \in I \; \vert \quad q \ \text{is not an initial state} \, }
\]
Clearly, for the m-state $I'$ computed at line $3$ of the algorithm, the base $I'_{ \vert base}$ coincides with the pairs computed by $\vartheta \, (I, \, X)$.
\par
The \emph{closure} contains the remaining candidates of m-state $I$:
\[
I_{ \vert closure} = \set{ \, \langle q, \, \pi \, \rangle \in I \; \vert \quad q \ \text{is  an initial state} \, }
\]
The initial m-state $I_0$ has an empty base by definition. All other m-states have a non-empty base, while the closure may be empty.
\par
The \emph{kernel} of a m-state is the projection on the first component:
\[
I_{ \vert kernel} = \set{ \, q \in Q \; \vert \quad \left\langle q, \, \pi \right\rangle \in I \, }
\]
A particular condition that may affect determinism occurs when, for two states that belong to the same m-state $I$, the outgoing transitions are defined under the same grammar symbol.
\par
\begin{definition}\label{defConvergence}Multiple Transition Property and Convergence.
\par
A pilot m-state $I$ has the \emph{multiple transition property} ($MTP$) if it includes two candidates $\langle q, \, \pi \rangle$ and $\langle r, \, \rho \rangle$, such that for some grammar symbol $X$ both transitions $\delta \, (q, \, X)$ and $\delta \, (r, \, X)$ are defined.
\par
Such a m-state $I$ and the pilot transition $\vartheta \, (I, \, X)$ are called \emph{convergent} if $\delta \, (q, \, X) = \delta \, (r, \, X)$.
\par
A convergent transition has a \emph{convergence conflict} if the look-ahead sets overlap, i.e., if $\pi \cap \rho \neq \emptyset$.
\end{definition}
To illustrate, we consider two examples.
\begin{example}Pilot of the running example.\label{exRunningPilot}
\par
The pilot graph $\mathcal{P}$ of the $EBNF$ grammar and net of Example \ref{exRunningELRpart1} (see Figure \ref{figRunningExELRpart1}) is shown in Figure \ref{figRunningExELRpilot}. In each m-state the top and bottom parts contain the base and the closure, respectively; when either part is missing, the double-line side of the m-state shows which part is present (e.g., $I_0$ has no base and $I_2$ has no closure); the look-ahead tokens are grouped by state; and final states are evidenced by encircling.
\begin{figure}[h!]
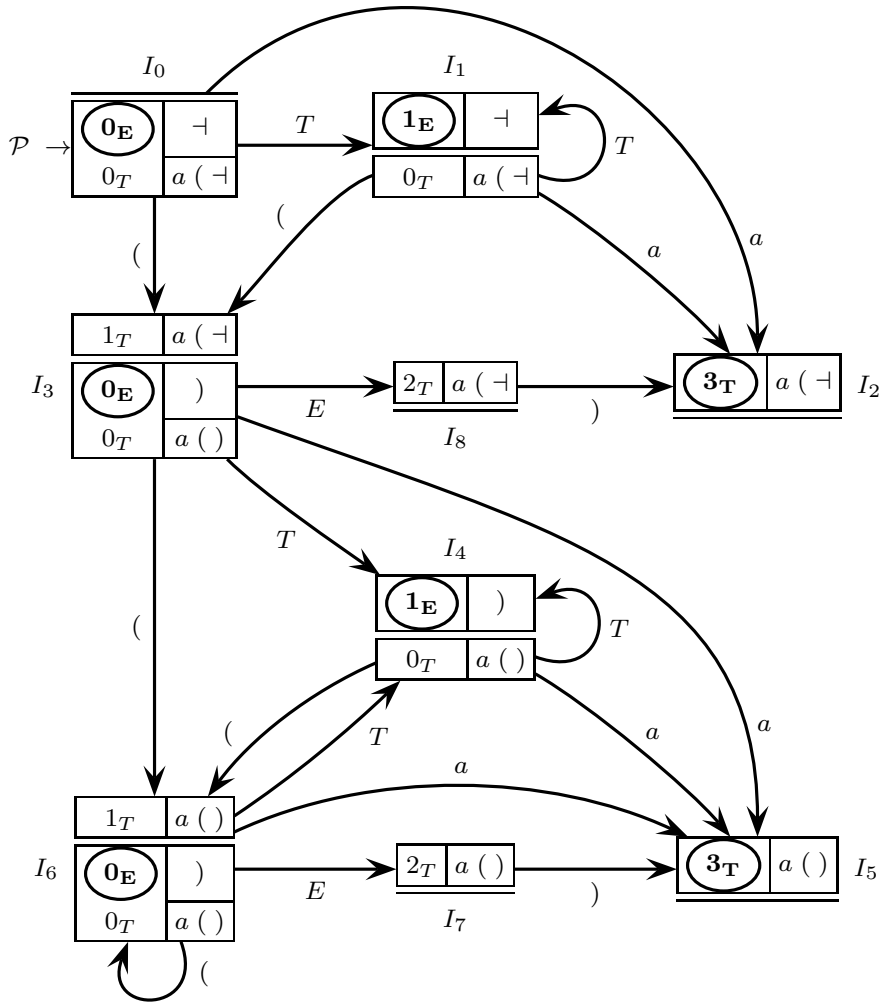

\begin{center}
\vspace{1.5cm}
\begin{minipage}{1.0\textwidth}
\begin{center}
\scalebox{1.2}{
\pspar\psset{arrows=->,border=0pt,nodesep=0pt,rowsep=1.25cm, colsep=1.5cm}
\begin{psmatrix}
\rnode{I0}{$\def\arraystretch{1.25}\begin{array}{|c|c|} \hline\hline
\ovalnode{\relax}{\mathbf{0_E}} & \dashv \\ \cline{2-2}
0_T & a \; ( \; \dashv \\ \hline
\end{array}$}
&
\rnode{I1}{$\def\arraystretch{1.25}\begin{array}{|c|c|} \hline
\ovalnode{\relax}{\mathbf{1_E}} & \dashv \\ \hline\hline
0_T & a \; ( \; \dashv \\ \hline
\end{array}$} \\

\rnode{I3}{$\def\arraystretch{1.25}\begin{array}{|c|c|} \hline
1_T & a \; ( \; \dashv \\
\hline \hline \ovalnode{o}{\mathbf{0_E}} & ) \\ \cline{2-2}
0_T & a \; ( \; ) \\ \hline
\end{array}$}
&
\rnode{I8}{$\def\arraystretch{1.25}\begin{array}{|c|c|} \hline
2_T & a \; ( \; \dashv \\ \hline\hline
\end{array}$}
&
\rnode{I2}{$\def\arraystretch{1.25}\begin{array}{|c|c|} \hline
\ovalnode{\relax}{\mathbf{3_T}} & a \; ( \; \dashv \\ \hline\hline
\end{array}$} \\
&
\rnode{I4}{$\def\arraystretch{1.25}\begin{array}{|c|c|} \hline
\ovalnode{\relax}{\mathbf{1_E}} & ) \\ \hline\hline
0_T & a \; ( \; ) \\ \hline
\end{array}$} \\

\rnode{I6}{$\def\arraystretch{1.25}\begin{array}{|c|c|} \hline
1_T & a \; ( \; ) \\ \hline \hline
\ovalnode{\relax}{\mathbf{0_E}} & ) \\ \cline{2-2}
0_T & a \; ( \; ) \\ \hline
\end{array}$}
&
\rnode{I7}{$\def\arraystretch{1.25}\begin{array}{|c|c|} \hline
2_T & a \; ( \; ) \\ \hline\hline
\end{array}$}
&
\rnode{I5}{$\def\arraystretch{1.25}\begin{array}{|c|c|} \hline
\ovalnode{\relax}{\mathbf{3_T}} & a \; ( \; ) \\ \hline\hline
\end{array}$}

\nput[labelsep=5pt]{90}{I0}{$I_0$}

\nput[labelsep=5pt]{90}{I1}{$I_1$}

\nput[labelsep=5pt]{0}{I2}{$I_2$}

\nput[labelsep=5pt]{180}{I3}{$I_3$}

\nput[labelsep=5pt]{90}{I4}{$I_4$}

\nput[labelsep=5pt]{0}{I5}{$I_5$}

\nput[labelsep=5pt]{180}{I6}{$I_6$}

\nput[labelsep=5pt]{-90}{I7}{$I_7$}

\nput[labelsep=5pt]{-90}{I8}{$I_8$}

\nput[labelsep=0pt]{180}{I0}{$\mathcal{P} \; \to$}

\ncline{I0}{I1} \naput{$T$}

\nccurve[angleA=45,angleB=90,ncurvA=1,ncurvB=0.8]{I0}{I2} \aput(0.85){$a$}

\nccurve[angleA=-30,angleB=130,ncurvA=0.5,ncurvB=0.5]{I1}{I2} \naput{$a$}

\nccurve[angleA=-20,angleB=20,ncurvA=3,ncurvB=3]{I1}{I1} \nbput{$T$}

\ncline{I0}{I3} \nbput{$($}

\nccurve[angleA=-160,angleB=45,ncurvA=0.5,ncurvB=0.5]{I1}{I3} \nbput{$($}

\ncline{I3}{I8} \nbput{$E$}

\ncline{I8}{I2} \nbput{$)$}

\nccurve[angleA=-20,angleB=90,ncurvA=1,ncurvB=0.8]{I3}{I5} \aput(0.85){$a$}

\nccurve[angleA=-30,angleB=130,ncurvA=0.5,ncurvB=0.5]{I4}{I5} \naput{$a$}

\nccurve[angleA=-45,angleB=145,ncurvA=0.5,ncurvB=0.5]{I3}{I4} \nbput{$T$}

\ncline{I3}{I6} \nbput{$($}

\nccurve[angleA=-20,angleB=20,ncurvA=3,ncurvB=3]{I4}{I4} \nbput{$T$}

\ncarc[arcangle=-5]{I6}{I4} \bput(0.75){$T$}

\ncarc[arcangle=-15]{I4}{I6} \bput(0.75){$($}

\nccurve[angleA=-70,angleB=-110,ncurvA=3,ncurvB=3]{I6}{I6} \aput(0.15){$($}

\ncline{I6}{I7} \nbput{$E$}

\ncarc[arcangle=25]{I6}{I5} \naput{$a$}

\ncline{I7}{I5} \nbput{$)$}

\end{psmatrix}}
\end{center}
\end{minipage}
\vspace{0.5cm}
\end{center}
\caption{$ELR \, (1)$ pilot graph $\mathcal{P}$ of the machine net in Figure \ref{figRunningExELRpart1}.}\label{figRunningExELRpilot}
\end{figure}
None of the edges of the graph is convergent.
\end{example}
\par
Two m-states, such as $I_1$ and $I_4$, having the same kernel, i.e., differing just for some look-ahead sets, are called \emph{kernel-equivalent}. Some simplified parser constructions to be later introduced, rely on kernel equivalence to reduce the number of m-states. We observe that for any two kernel-equivalent m-states $I$ and $I'$, and for any grammar symbol $X$, the m-states $\vartheta \, (I, \, X)$ and $\vartheta \, (I', \, X)$ are either both defined or neither one, and are kernel-equivalent.
\par
To illustrate the notion of convergent transition with and without conflict, we refer to Figure \ref{figPilotWithConvergence}, where m-states $I_{10}$ and $I_{11}$ have convergent transitions, the latter with a conflict.
\subsection{$\emph{ELR} \, (1)$ condition}
The presence of a final candidate in a m-state tells the parser that a reduction move ought to be considered. The look-ahead set specifies which tokens should occur next, to confirm the decision to reduce. For a machine net, more than one reduction may be applied in the same final state, and to choose the correct one, the parser stores additional information in the stack, as later explained. We formalize the conditions ensuring that all parser decisions are deterministic.
\begin{definition}$ELR \, (1)$ condition.\label{defELR1condition}
\par
A grammar or machine net  meets condition $ELR \, (1)$ if the corresponding pilot satisfies the following conditions:
\begin{description}
\item[Condition 1 ] Every m-state $I$ satisfies the next two clauses:
\par
no \emph{shift-reduce conflict}:
\begin{equation}\label{eqNoReduceShiftConfl}
\text{for all candidates $\langle q, \, \pi \rangle \in I$ s.t. $q$ is final and for all edges $I \stackrel a \longrightarrow I' \colon a \not \in \pi$}
\end{equation}
no \emph{reduce-reduce conflict}:
\begin{equation}\label{eqNoReduceReduceConfl}
\text{for all candidates $\langle q, \, \pi \rangle, \; \langle r, \, \rho \rangle \in I$ s.t. $q$ and $r$ are final $\colon \pi \cap \rho = \emptyset$}
\end{equation}
\item[Condition 2 ] No transition of the pilot graph has a convergence conflict.
\end{description}
\end{definition}
The pilot in Figure \ref{figRunningExELRpilot} meets  conditions \eqref{eqNoReduceShiftConfl} and \eqref{eqNoReduceReduceConfl}, and no edge of $\vartheta$ is convergent.
\subsubsection{$\emph{ELR} \, (1)$ versus classical $\emph{LR} \, (1)$ definitions}
First, we discuss the relation between this definition and the classical one \cite{Knuth1965LR(k)} in the case that the grammar is $BNF$, i.e., each nonterminal $A$ has finitely many alternatives $A \to \alpha \; \mid \; \beta \; \mid \; \ldots$. Since the alternatives do not contain star or union operations, a straightforward (nondeterministic) $NFA$ machine, $N_A$,  has an acyclic graph shaped as a tree with as many legs originating from the initial state $0_A$ as there are alternative rules for $A$. Clearly the graph of $N_A$ satisfies the no reentrance hypothesis for the initial state. In general $N_A$ is not minimal. No m-state in the classical $LR \, (1)$ pilot machine can exhibit the multiple transition property, and the only requirement for parser determinism comes from clauses \eqref{eqNoReduceShiftConfl} and \eqref{eqNoReduceReduceConfl} of Def. \ref{defELR1condition}.
\par
In our representation, machine $M_A$ may differ from $N_A$ in two ways. First, we assume that machine $M_A$ is deterministic, out of convenience not of necessity. Thus consider a nonterminal $C$ with two alternatives, $C \to \textit{if} \ E \ \textit{then} \ I \; \mid \; \textit{if} \ E \ \textit{then} \ I \ \textit{else} \ I$. Determinization has the effect of normalizing the alternatives by left factoring the longest common prefix, i.e., of using the equivalent $EBNF$ grammar $C \to \ \textit{if} \ E \ \textit{then} \ I \left( \, \varepsilon \; \mid \; \textit{else} \ I \, \right)$.
\par
Second, we allow (and actually recommend) that the graph of $M_A$ be minimal with respect to the number of states. In particular, the final states
of $N_A$ are merged together by state reduction if they are \emph{undistinguishable} for the $DFA$ $M_A$. Also a non-final state of $M_A$ may correspond to multiple states of $N_A$.
\par
Such state reduction may cause some pilot edges to become convergent. Therefore, in addition to checking conditions \eqref{eqNoReduceShiftConfl} and \eqref{eqNoReduceReduceConfl}, Def. \ref{defELR1condition} imposes that any convergent edge be free from conflicts. Since this point is quite subtle, we illustrate it by the next example.
\begin{example}State-reduction and convergent transitions.\label{exConvergenceProperty}
\par
Consider the equivalent $BNF$ and $EBNF$ grammars and the corresponding machines in Fig. \ref{figNetWithConvergence}.
\begin{figure}[h!]
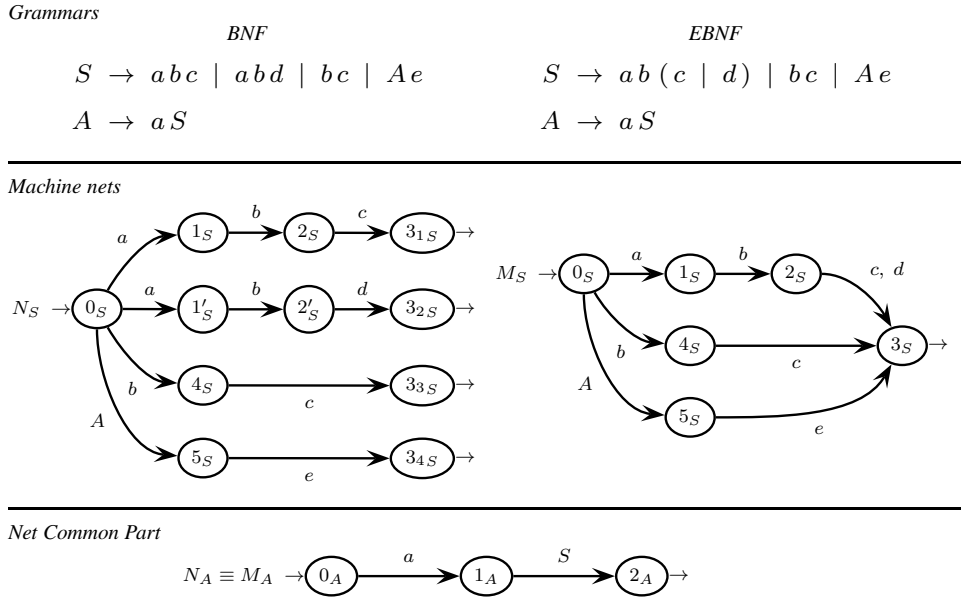

\begin{center}
\begin{flushleft}
\emph{Grammars}
\end{flushleft}
\begin{tabular}{cp{0.5cm}c}
\emph{BNF} && \emph{EBNF} \\

\scalebox{1.2}{$\def\arraystretch{1.5}\begin{array}{rcl}
S & \to & a\,b\,c \; \mid \; a\,b\,d \; \mid \; b\,c \; \mid \; A\,e \\
A & \to & a\,S
\end{array}$}

&&

\scalebox{1.2}{$\def\arraystretch{1.5}\begin{array}{rcl}
S & \to & a\,b\,\left(\,c \; \mid \; d\,\right) \; \mid \; b\,c \; \mid \; A\,e \\
A & \to & a\,S
\end{array}$}
\end{tabular}
\vspace{0.25cm}
\hrule
\begin{flushleft}
\emph{Machine nets}
\end{flushleft}
\vspace{0.0cm}
\begin{tabular}{cp{0.5cm}c}
\begin{minipage}{0.45\textwidth}
\qquad \scalebox{0.9}{
\pspar\psset{border=0pt,nodesep=0pt,labelsep=5pt,colsep=23pt,rowsep=12pt}
\begin{psmatrix}
& \ovalnode{1S}{$1_S$} & \ovalnode{2S}{$2_S$} & \ovalnode{3.1}{$3{_1}_S$} & & \\

\ovalnode{0S}{$0_S$} & \ovalnode{1'S}{$1'_S$} & \ovalnode{2'S}{$2'_S$} & \ovalnode{3.2}{$3{_2}_S$} \\

& \ovalnode{4S}{$4_S$}  && \ovalnode{3.3}{$3{_3}_S$} \\

& \ovalnode{5S}{$5_S$} && \ovalnode{3.4}{$3{_4}_S$}

\nput[labelsep=0pt]{180}{0S}{$N_S \ \to$}

\nput[labelsep=0pt]{0}{3.1}{$\to$}

\nput[labelsep=0pt]{0}{3.2}{$\to$}

\nput[labelsep=0pt]{0}{3.3}{$\to$}

\nput[labelsep=0pt]{0}{3.4}{$\to$}

\nccurve[angleA=60,angleB=-175]{0S}{1S} \naput{$a$}

\ncline{0S}{1'S} \naput{$a$}

\nccurve[angleA=-60,angleB=175]{0S}{4S} \bput(0.6){$b$}

\nccurve[angleA=-90,angleB=175]{0S}{5S} \nbput{$A$}

\ncline{1S}{2S} \naput{$b$}

\ncline{1'S}{2'S} \naput{$b$}

\ncline{2S}{3.1} \naput{$c$}

\ncline{2'S}{3.2} \naput{$d$}

\ncline{4S}{3.3} \nbput{$c$}

\ncline{5S}{3.4} \nbput{$e$}
\end{psmatrix}}
\end{minipage}

&&

\begin{minipage}{0.45\textwidth}
\scalebox{0.9}{
\pspar\psset{border=0pt,nodesep=0pt,labelsep=5pt,colsep=23pt,rowsep=12pt}
\begin{psmatrix}

\ovalnode{0S}{$0_S$} & \ovalnode{1S}{$1_S$} & \ovalnode{2S}{$2_S$} \\

& \ovalnode{4S}{$4_S$} & & \ovalnode{3S}{$3_S$} \\

& \ovalnode{5S}{$5_S$}

\nput[labelsep=0pt]{180}{0S}{$M_S \ \to$}

\nput[labelsep=0pt]{0}{3S}{$\to$}

\ncline{0S}{1S} \naput{$a$}

\nccurve[angleA=-60,angleB=175]{0S}{4S} \bput(0.6){$b$}

\nccurve[angleA=-90,angleB=175]{0S}{5S} \nbput{$A$}

\ncline{1S}{2S} \naput{$b$}

\nccurve[angleA=0,angleB=120]{2S}{3S} \naput{$c, \; d$}

\ncline{4S}{3S} \nbput{$c$}

\nccurve[angleA=0,angleB=-120]{5S}{3S} \nbput{$e$}
\end{psmatrix}}
\end{minipage}
\end{tabular}
\vspace{0.35cm}
\hrule
\begin{flushleft}
\emph{Net Common Part}
\end{flushleft}
\vspace{0.0cm}
\begin{minipage}{0.9\textwidth}
\begin{center}
\scalebox{0.9}{
\pspar\psset{border=0pt,nodesep=0pt,labelsep=5pt}
\begin{psmatrix}
\ovalnode{0A}{$0_A$} & \ovalnode{1A}{$1_A$} & \ovalnode{2A}{$2_A$}

\nput[labelsep=0pt]{180}{0A}{$N_A \equiv M_A \ \to$}

\nput[labelsep=0pt]{0}{2A}{$\to$}

\ncline{0A}{1A} \naput{$a$}

\ncline{1A}{2A} \naput{$S$}

\end{psmatrix}}
\end{center}
\end{minipage}
\end{center}
\caption{$BNF$ and $EBNF$ grammars and networks $\set{ \, N_S, \, N_A \, }$ and $\set{ \, M_S, \, M_A \equiv N_A \, }$.} \label{figNetWithConvergence}
\end{figure}
After determinizing, the states $3{_1}_S$, $3{_2}_S$, $3{_3}_S$, $3{_4}_S$ of machine $N_S$ are equivalent and are merged into the state $3_S$ of machine $M_S$. Turning our attention to the $LR$ and $ELR$ conditions, we find that the $BNF$ grammar has a reduce-reduce conflict caused by the derivations below:
\[
S \Rightarrow A \; e \Rightarrow a \; S \; e \Rightarrow a \; b \; c \; e \quad \text{and} \quad
S \Rightarrow A \; e \Rightarrow a \; S \; e \Rightarrow a \; a \; b \; c \; e
\]
\begin{figure}[h!]
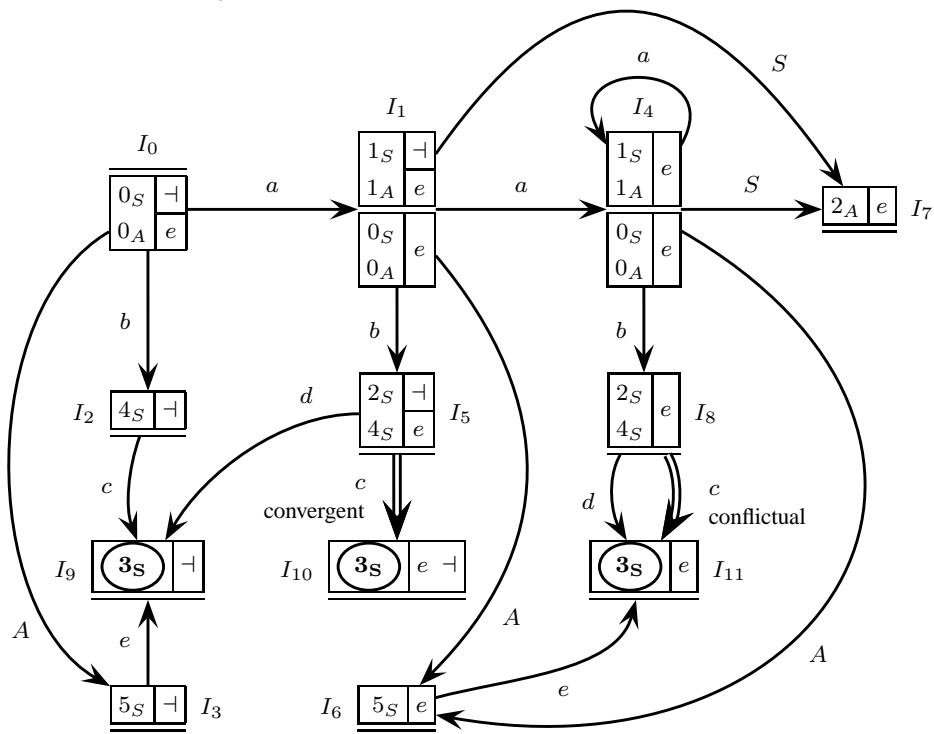

\begin{minipage}{1.0\textwidth}
\vspace{1.25cm}
\begin{center}
\scalebox{1.1}{
\pspar\psset{border=0pt,nodesep=0pt,labelsep=5pt, arrows=->, colsep=1.5cm, rowsep=1cm}
\begin{psmatrix}

\rnode{I0}{$\def\arraystretch{1.25}\begin{array}{|c|c|} \hline\hline
0_S & \dashv \\ \cline{2-2}
0_A & e \\ \hline
\end{array}$}

\nput{90}{I0}{$I_0$}

&

\rnode{I1}{$\def\arraystretch{1.25}\begin{array}{|c|c|} \hline
1_S & \dashv \\ \cline{2-2}
1_A & e \\ \hline\hline
0_S & \multirow{2}{*}{$e$} \\
0_A & \\\hline
\end{array}$}

\nput{90}{I1}{$I_1$}

&

\rnode{I4}{$\def\arraystretch{1.25}\begin{array}{|c|c|} \hline
1_S & \multirow{2}{*}{$e$} \\
1_A & \\ \hline\hline
0_S & \multirow{2}{*}{$e$} \\
0_A & \\ \hline
\end{array}$}

\nput{90}{I4}{$I_4$}

&

\rnode{I7}{$\def\arraystretch{1.25}\begin{array}{|c|c|} \hline
2_A & e \\ \hline\hline
\end{array}$}

\nput{0}{I7}{$I_7$}

\\

\rnode{I2}{$\def\arraystretch{1.25}\begin{array}{|c|c|} \hline
4_S & \dashv \\ \hline\hline
\end{array}$}

\nput{180}{I2}{$I_2$}

&

\rnode{I5}{$\def\arraystretch{1.25}\begin{array}{|c|c|} \hline
2_S & \dashv \\ \cline{2-2}
4_S & e \\ \hline\hline
\end{array}$}

\nput{0}{I5}{$I_5$}

&

\rnode{I8}{$\def\arraystretch{1.25}\begin{array}{|c|c|} \hline
2_S & \multirow{2}{*}{$e$} \\
4_S & \\ \hline\hline
\end{array}$}

\nput{0}{I8}{$I_8$}

\\

\rnode{I9}{$\def\arraystretch{1.25}\begin{array}{|c|c|} \hline
\ovalnode{\relax}{\mathbf{3_S}} & \dashv \\ \hline\hline
\end{array}$}

\nput{180}{I9}{$I_9$}

&

\rnode{I10}{$\def\arraystretch{1.25}\begin{array}{|c|c|} \hline
\ovalnode{\relax}{\mathbf{3_S}} & e \; \dashv \\ \hline\hline
\end{array}$}

\nput{180}{I10}{$I_{10}$}

&

\rnode{I11}{$\def\arraystretch{1.25}\begin{array}{|c|c|} \hline
\ovalnode{\relax}{\mathbf{3_S}} & e \\ \hline\hline
\end{array}$}

\nput{0}{I11}{$I_{11}$}

\\

\rnode{I3}{$\def\arraystretch{1.25}\begin{array}{|c|c|} \hline
5_S & \dashv \\ \hline\hline
\end{array}$}

\nput{0}{I3}{$I_3$}

&

\rnode{I6}{$\def\arraystretch{1.25}\begin{array}{|c|c|} \hline\
5_S & e \\ \hline\hline
\end{array}$}

\nput{180}{I6}{$I_6$}

\ncline{I0}{I1} \naput{$a$}

\ncline{I0}{I2} \nbput{$b$}

\nccurve[angleA=-150,angleB=150]{I0}{I3} \bput(0.8){$A$}

\ncline{I1}{I4} \naput{$a$}

\ncline{I1}{I5} \nbput{$b$}

\nccurve[angleA=-50,angleB=45,ncurvA=0.9,ncurvB=0.9]{I1}{I6} \aput(00.8){$A$}

\nccurve[angleA=55,angleB=125,ncurvA=1.25,ncurvB=1.25]{I1}{I7} \aput(0.8){$S$}

\nccurve[angleA=60,angleB=120,ncurvA=2.75,ncurvB=2.75]{I4}{I4} \nbput{$a$}

\ncline{I4}{I8} \nbput{$b$}

\nccurve[angleA=-30,angleB=-10,ncurv=1.5]{I4}{I6} \aput(0.6){$A$}

\ncline{I4}{I7} \naput{$S$}

\ncarc[arrowscale=1.25,doubleline=true,arcangle=30]{I8}{I11} \naput{$\begin{array}{l} c \\ \text{conflictual} \end{array}$}

\ncarc[arcangle=-30]{I8}{I11} \nbput{$d$}

\ncarc[arcangle=-20]{I2}{I9} \nbput{$c$}

\ncline{I3}{I9} \naput{$e$}

\nccurve[angleA=-180,angleB=60]{I5}{I9} \bput(0.2){$d$}

\ncline[arrowscale=1.25,doubleline=true]{I5}{I10} \nbput{$\begin{array}{r} c \\ \text{convergent} \end{array}$}

\nccurve[angleA=15,angleB=-105]{I6}{I11} \nbput{$e$}

\end{psmatrix}}
\end{center}
\vspace{0.25cm}
\end{minipage}
\caption{Pilot graph of machine net $\set{ \, M_S, \, M_A \, }$ in Fig. \ref{figNetWithConvergence}; the double-line edges are convergent.}\label{figPilotWithConvergence}
\end{figure}
\par
On the other hand, for the $EBNF$ grammar the pilot $\mathcal{P}_{\{ \, M_S, \, M_A \, \}}$, shown in Fig. \ref{figPilotWithConvergence}, has the two convergent edges highlighted as double-line arrows, one with a conflict and the other without. Arc $I_8 \stackrel c \to I_{11}$ violates the $ELR \,(1)$ condition because the look-aheads of $2_S$ and $4_S$ in $I_8$ are not disjoint.
\par
Notice in Fig.\ref{figPilotWithConvergence} that, for explanatory purposes, in m-state $I_{10}$ the two candidates $\langle 3_S, \, \dashv \rangle$ and $\langle 3_S, \, e \rangle$ deriving from a convergent transition with no conflict, have been kept separate and are not depicted as only one candidate with a single look-ahead, as it is usually done with candidates that have the same state.
\end{example}
We observe that in general a state-reduction in the machines of the net has the effect, in the pilot automaton, to transform reduce-reduce violations into convergence conflicts.
\par
Next, we prove the essential property that justifies the practical value of our theoretical development: an $EBNF$ grammar is $ELR \, (1)$ if, and only if, the equivalent right-linearized grammar (defined in Sect. \ref{rightLinearizedGram}) is $LR \, (1)$.
\begin{theorem}\label{theorELR1}
Let $G$ be an $EBNF$ grammar represented by a machine net $\mathcal{M}$ and let $\hat{G}$ be the equivalent right-linearized grammar. Then net $\mathcal{M}$ meets the $ELR \, (1)$ condition if, and only if, grammar $\hat{G}$ meets the $LR \, (1)$ condition.
\end{theorem}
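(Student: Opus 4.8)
The plan is to exhibit a tight structural correspondence between the canonical $LR(1)$ automaton of $\hat{G}$ and the $ELR(1)$ pilot $\mathcal{P}$, and then to match conflicts across the two. The starting observation is that every rule of $\hat{G}$ has the shape $p_A \to X\, r_A$ (with $X \in \Sigma$ or $X = 0_B$) or $p_A \to \varepsilon$ (for final $p_A$); consequently each $LR(1)$ item of $\hat{G}$ has one of four forms: a closure/shift item $[q \to \bullet\, \gamma, a]$, a mid-dot item $[p \to X \bullet q, a]$, a complete binary item $[p \to X\, q \,\bullet, a]$, or an empty item $[p \to \bullet, a]$ with $p$ final. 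I would read both $[q \to \bullet\, \gamma, a]$ and $[p \to X \bullet q, a]$ as the single assertion ``the parse is at net-state $q$ with look-ahead $a$'', i.e. as the pilot candidate $\langle q, a\rangle$: the two item shapes carry identical forward information and differ only in the backward data $(p, X)$ that the pilot deliberately discards. This is the conceptual key, since convergence is precisely a merging of that discarded backward data.

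First I would set up the correspondence between reachable m-states and a distinguished class of $LR(1)$ states of $\hat{G}$, by induction on the access string $\gamma \in (\Sigma \cup V)^\ast$. Reading $X \in \Sigma$ in $\mathcal{P}$ is mirrored by an $LR(1)$ goto on the same terminal, whereas reading a net-nonterminal $B$ is mirrored by a goto on the initial-state symbol $0_B$; write $\hat{\gamma}$ for the resulting string over $\Sigma \cup \{0_B\}$, and call an $LR(1)$ state \emph{call-reachable} if it is reached from the initial state by some such $\hat{\gamma}$. The induction step hinges on matching the two closure operations: the pilot clause $b \in Ini\big(L(r) \cdot a\big)$ of Definition \ref{defLookaheadSet} and the classical condition $b \in \mathrm{FIRST}_{\hat{G}}(r\, a)$ must coincide, which holds because the terminal language $L(r)$ equals the set of terminal strings derived from the $\hat{G}$-nonterminal $r$ and $Ini = \mathrm{FIRST}$. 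This yields a well-defined correspondence $\Phi$ sending each candidate $\langle q, a\rangle$ to the family of items ``at $q$ with look-ahead $a$'', and conversely collapsing the item families of a call-reachable state onto candidates by forgetting backward data and unioning look-aheads.

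Next I would classify the reachable $LR(1)$ states of $\hat{G}$ by their entry symbol. Using the normalization of Definition \ref{retiRicorsMacch} that forbids edges into initial states, a state entered by a terminal or by some $0_B$ (together with the initial state) has only \emph{non-complete} binary kernel items $[p \to X \bullet q, a]$, whose closure contributes only shift items and empty items $[t \to \bullet, a]$ with $t$ final; whereas a state entered by a \emph{non-initial} state symbol $s$ has only \emph{complete} binary kernel items $[p_i \to X_i\, s\, \bullet, a]$ and an empty closure. Because in an $LR$ automaton all kernel items share the symbol preceding the dot, the $X_i$ in such a ``return'' state are forced equal, so the net edges $p_i \stackrel{X}{\to} s$ are convergent. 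This dichotomy makes the conflict translation transparent: in call-reachable states the only complete items are the empty ones $[t \to \bullet, a]$, which $\Phi$ identifies with final candidates $\langle t, a\rangle$ while identifying outgoing terminal edges with shift items, so $LR(1)$ shift-reduce and reduce-reduce conflicts there are exactly the pilot's shift-reduce conflict \eqref{eqNoReduceShiftConfl} and reduce-reduce conflict \eqref{eqNoReduceReduceConfl}; in return states the only possible clash is a reduce-reduce conflict between $[p_i \to X\, s\, \bullet, a]$ and $[p_j \to X\, s\, \bullet, a]$, which is exactly a convergence conflict of Definition \ref{defConvergence}. Since no reachable $LR(1)$ state mixes complete binary items with empty or shift items, no conflict type is left unaccounted for.

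Assembling these, an $LR(1)$ conflict of $\hat{G}$ lives either in a call-reachable state, matching a pilot shift-reduce or reduce-reduce conflict, or in a return state, matching a pilot convergence conflict, and conversely; hence $\hat{G}$ is $LR(1)$ if and only if $\mathcal{P}$ exhibits none of the three conflicts, i.e. if and only if $\mathcal{M}$ is $ELR(1)$. I expect the main obstacle to be the convergence case of the third step: I must show that a look-ahead shared by two convergent candidates $\langle q, \pi\rangle$ and $\langle r, \rho\rangle$ of an m-state survives intact to the return state reached after the convergent transition and after the entire sub-parse of $s$ has been reduced. This needs a lemma that, since $s$ is the last symbol of its rule, $LR(1)$ goto transports the look-ahead of $[q \to X \bullet s, a]$ unchanged to $[q \to X\, s\, \bullet, a]$ regardless of the internal derivation of $s$; establishing this faithfully, and verifying that every return state is reached in exactly this way from a unique convergent m-state, is the delicate heart of the argument, whereas the remaining steps become routine once the item/candidate dictionary is fixed.
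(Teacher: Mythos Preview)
Your proposal is essentially the paper's own proof: your call-reachable versus return dichotomy is exactly the paper's intermediate versus sink-reduction classification, your item/candidate dictionary is the content of the paper's supporting Lemma, and the three-way conflict matching (shift-reduce and reduce-reduce in intermediate states, convergence $\leftrightarrow$ reduce-reduce in sink states) is identical. Two small corrections: the equality of the $X_i$ should be argued in the \emph{predecessor} of the return state, where $X_i$ is the symbol before the dot in the kernel items $[p_i \to X_i \bullet s, a_i]$, not in the return state itself; and the ``delicate heart'' you anticipate---look-ahead survival through the $s$-goto---is in fact immediate, since the $LR(1)$ goto always carries kernel look-aheads verbatim.
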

The proof is in the Appendix.
\par
Although this proposition may sound  intuitively obvious to knowledgeable readers, we believe a formal proof is  due:  past proposals to extend $LR \, (1)$ definitions to the $EBNF$ (albeit often  with restricted types of regular expressions), having omitted formal proofs, have  been later found to be inaccurate (see  Sect. \ref{sectRelatedWorkELR}). The fact that shift-reduce conflicts are preserved by the two pilots is quite easy to prove. The less obvious part of the proof concerns the correspondence between convergence conflicts in the $ELR \, (1)$ pilot and reduce-reduce conflicts in the right-linearized grammar. To have a grasp without reading the proof, the convergence conflict in Fig. \ref{figPilotWithConvergence}  corresponds to the reduce-reduce conflict in the m-state $I_{11}$ of Fig. \ref{figMultiCandidatesInBaseRightLin}.
\par
We address a possible criticism to the significance of Theorem \ref{theorELR1}: that, starting from an $EBNF$ grammar, several equivalent $BNF$ grammars can be obtained by removing the regular expression operations in different ways. Such grammars may or may not be $LR \, (1)$, a fact that would seem to make somewhat arbitrary our definition of $ELR \, (1)$, which is based on the highly constrained left-linearized form. We defend the significance and generality of our choice on two grounds. First, our original grammar specification is not a set of r.e.'s, but a set of machines ($DFA$), and the choice to transform the $DFA$ into a right-linear grammar is standard and almost obliged because, as already shown by \cite{Heilbrunner1979}, the other standard form - left-linear - would exhibit conflicts in most cases. Second, the same author proves that if a right-linearized grammar equivalent to $G$ is $LR \, (1)$, then \emph{every} right-linearized grammar equivalent to $G$, provided it is not ambiguous, is $LR \, (1)$; besides,  he shows that this definition of $ELR \, (1)$ grammar dominates all the preexisting alternative definitions. We believe that also the new definitions of later years are dominated by the present one.
\par
To illustrate the discussion, it helps us consider a simple example where the machine net is $ELR \, (1)$, i.e., by Theorem \ref{theorELR1} $\hat{G}$ is $LR \, (1)$, yet another equivalent grammar obtained by a very natural transformation, has conflicts.
\begin{example}
A phrase $S$ has the structure $E \, ( \, s \, E)^\ast$, where a construct $E$ has either the form $b^+ \, b^{n} \, e^{n}$ or $b^{n} \, e^{n} \, e$ with $n \geq 0$. The language is defined by the $ELR \, (1)$ net below:
\begin{center}
\vspace{0.5cm}
\scalebox{1.0}{
\pspar\psset{border=0pt,nodesep=0pt,colsep=1cm}
\begin{psmatrix}
%S:
\ovalnode{0S}{$0_S$} & \ovalnode{1S}{$1_S$} & \ovalnode{2S}{$2_S$}

\nput[labelsep=0pt]{180}{0S}{$M_S \ \to$}

\nput[labelsep=0pt]{-90}{1S}{$\downarrow$}

\ncline{0S}{1S} \naput{$E$}

\nccurve[angleA=45,angleB=135]{1S}{2S} \naput{$s$}

\nccurve[angleA=-135,angleB=-45]{2S}{1S} \naput{$E$}

\end{psmatrix}}
\end{center}
\vspace{1.0cm}
\begin{center}
\scalebox{0.8}{
\pspar\psset{border=0pt,nodesep=0pt,colsep=1cm}
\begin{psmatrix}
%E:
\ovalnode{4E}{$4_E $} & \ovalnode{3E}{$3_E $} &  \ovalnode{0E}{$0_E $} & \ovalnode{1E}{$1_E $} & \ovalnode{2E}{$2_E$}

\nput[labelsep=0pt]{90}{0E}{$\begin{array}{c} M_E \\ \downarrow \end{array}$}

\nput[labelsep=0pt]{-90}{2E}{$\downarrow$}

\nput[labelsep=0pt]{-90}{4E}{$\downarrow$}

\ncline{0E}{1E} \naput{$b$}

\ncline{1E}{2E} \naput{$F$}

\ncline{0E}{3E} \nbput{$F$}

\nccurve[angleA=60,angleB=120,ncurvA=7,ncurvB=7]{1E}{1E} \nbput{$b$}

\ncline{3E}{4E} \nbput{$e$}

&

%F:
\ovalnode{0F}{$0_F $} & \ovalnode{1F}{$1_F $} & \ovalnode{2F}{$2_F$} & \ovalnode{3F}{$3_F $}

\nput[labelsep=0pt]{90}{0F}{$\begin{array}{c} M_F \\ \downarrow \end{array}$}

\nput[labelsep=0pt]{-90}{0F}{$\downarrow$}

\nput[labelsep=0pt]{-90}{3F}{$\downarrow$}

\ncline{0F}{1F} \naput{$b$}

\ncline{1F}{2F} \naput{$F$}

\ncline{2F}{3F} \naput{$e$}

\end{psmatrix}}
\vspace{0.25cm}
\end{center}
On the contrary, there is a conflict in the equivalent grammar:
\[
S \to E\,s\,S \; \mid \; E \qquad \; E \to B\,F \; \mid \; F\,e \qquad F \to b\,E\,f \; \mid \; \varepsilon \qquad B \to b\,B \; \mid \; b
\]
caused by the indecision whether to reduce $b^+$ to $B$ or shift. The right-linearized grammar postpones any reduction decision as long as possible and avoids conflicts.
\end{example}
\subsubsection{Parser algorithm}
Given the pilot $DFA$ of an $ELR \, (1)$ grammar or machine net, we explain how to obtain a deterministic pushdown automaton $DPDA$ that recognizes and parses the sentences.
\par
At the cost of some repetition, we recall the three sorts of abstract machines involved: the net $\mathcal{M}$ of $DFA$'s $M_S, \, M_A, \, \ldots$ with state set $Q = \{ \, q, \, \ldots \, \}$ (states are drawn as circular nodes); the pilot $DFA$ $\mathcal{P}$ with m-state set $R = \{ \, I_0, \, I_1, \, \ldots \, \}$ (states are drawn as rectangular nodes); and the $DPDA$ $\mathcal{A}$ to be next defined. As said, the $DPDA$ stores in the stack the series of m-states entered during the computation, enriched with additional information used in the parsing steps. Moreover, the m-states are interleaved with terminal or nonterminal grammar symbols. The current m-state, i.e., the one on top of stack, determines the next move: either a shift that scans the next token, or a reduction of a topmost stack segment (also called reduction handle) to a nonterminal identified by a final candidate included in the current m-state. The absence of shift-reduce conflicts makes the choice between shift and reduction operations deterministic. Similarly, the absence of reduce-reduce conflicts allows the parser to uniquely identify the final state of a machine. However, this leaves open the problem to determine the stack segment to be reduced. For that two designs will be presented: the first uses a finite pushdown alphabet; the second uses unbounded integer pointers and, strictly speaking, no longer qualifies as a pushdown automaton.
\par
First, we specify the \emph{pushdown stack alphabet}. Since for a given net $\mathcal{M}$ there are finitely many different candidates, the number of m-states is bounded and the number of candidates in any m-state is also bounded by $C_{Max} = \vert Q \vert \times \left( \, \vert \Sigma \vert + 1 \, \right)$.
The $DPDA$ stack elements are of two types: grammar symbols and stack m-states (\textit{sms}). An $sms$, denoted by $J$, contains an ordered set of triples of the form $(\textit{state}, \, \textit{look-ahead}, \, \textit{candidate identifier } (cid))$, named \emph{stack candidates}, specified as:
\[
\langle q_A, \, \pi, \, cid \rangle \ \text{where} \ q_A \in Q, \quad \pi \subseteq \Sigma \cup \{ \, \dashv \, \} \quad
\text{and} \ 1 \leq cid \leq C_{Max} \ \text{or} \ cid = \bot
\]
For readability, a $cid$ value will be prefixed by a $\sharp$ marker.
\par
The parser makes use of a surjective mapping from the set of $sms$ to the set of m-states, denoted by $\mu$, with the property that $\mu(J) = I $ if, and only if, the set of stack candidates of $J$, deprived of the candidate identifiers, equals the  set of candidates of $I$. For notational convenience, we stipulate that identically subscripted symbols $J_k$ and $I_k$ are related by $\mu(J_k) = I_k$.
As said, in the stack  the $sms$ are interleaved with grammar symbols.
\begin{algorithm}\label{AlgELR1parser}$ELR \, (1)$ parser as $DPDA$ $\mathcal{A}$.
\par
Let $J[0] \, a_1 \, J[1] \, a_2 \, \ldots \, a_k \, J[k]$ be the current stack, where $a_i$ is a grammar symbol and the top element is $J[k]$.
\begin{description}
\item[\textit{Initialization}] The analysis starts by pushing on the stack the $sms$:
\[
J_0 = \{ \, s \; \vert \quad s = \langle q, \, \pi, \, \bot \rangle \ \text{for every candidate} \ \langle q, \, \pi \rangle \in I_0 \, \}
\]
(thus $\mu(J_0) = I_0$, the initial m-state of the pilot).
\item[\textit{Shift move}] Let the top $sms$ be $J$, $I = \mu(J)$ and the current token be $a \in \Sigma$. Assume that, by inspecting $I$, the pilot has decided to shift and let $\vartheta \, (I, \, a) = I'$.	
The shift move does:
\begin{enumerate}
\item push token $a$ on stack and get next token
\item  push on stack the $sms$ $J'$ computed as follows:
\begin{eqnarray}\label{eqShiftMove1}
J' & = & \left\{ \, \langle {q_A}', \, \rho, \, \sharp i \rangle \; \vert \quad \langle {q_A}, \, \rho, \, \sharp j \rangle \ \text{is at position} \ i \ \text{in} \ J \; \land \; {q_A} \stackrel a \to {q_A}' \in \delta \, \right\} \\
\label{eqShiftMove2}
& \cup & \left\{ \, \langle {0_A}, \, \sigma, \, \bot \rangle \; \vert \quad\langle {0_A}, \, \sigma \rangle \in I'_{|closure} \, \right\}
\end{eqnarray}
(thus  $\mu(J') = I'$)
\end{enumerate}
Notice that the last condition in \eqref{eqShiftMove1} implies that ${q_A}'$ is a state of the base of $I'$.
\item[\textit{Reduction move} (non-initial state)]
The stack is $J[0] \, a_1 \, J[1] \, a_2 \, \ldots \, a_k \, J[k]$ and let the corresponding m-states be $I[i] = \mu \left( \, J[i] \, \right)$ with $0 \leq i \leq k$.
\par 	
Assume that, by inspecting  $I[k]$, the pilot chooses the reduction candidate $c = \langle q_A, \, \pi \rangle \in I[k]$, where $q_A$ is a final but non-initial state. Let $t_k = \langle q_A, \, \rho, \, \sharp i_k \rangle \in J[k]$ be the (only) stack candidate such that the current token $a \in \rho$.
\par
From  $i_k$, a $cid$ chain starts, which links $t_k$ to a stack candidate $t_{k-1} = \langle p_A, \, \rho, \, \sharp i_{k-1} \rangle \in J[k-1]$, and so on until a stack candidate $t_{h} \in J[h]$ is reached that has  $cid=\bot$ (therefore its state is initial)
\[
t_{h} = \langle 0_A, \, \rho, \, \bot \rangle
\]
The reduction move does:
\begin{enumerate}
\item grow the syntax forest by applying  reduction $a_{h+1} \, a_{h+2} \, \ldots \, a_k \leadsto A$
\item pop the stack symbols in the following order: $J[k] \, a_k \, J[k-1] \, a_{k-1} \, \ldots \, J[h+1] \, a_{h+1}$
\item execute the nonterminal shift move $\vartheta \, ( \, I[h], \, A \, )$ (see below).
\end{enumerate}
\item[\textit{Reduction move} (initial state)] It differs from the preceding case in that the chosen candidate is
$c = \langle 0_A, \, \pi, \, \bot \rangle$. The parser move grows the syntax forest by the reduction $\varepsilon \leadsto A$ and performs the nonterminal shift move corresponding to $\vartheta \, ( \, I[k], \, A \, )$.
\item[\textit{Nonterminal shift move}] It is the same as a shift move, except that the shifted symbol, $A$, is a nonterminal.
The only  difference is that the parser does not read the next input token at line (1) of \textit{Shift move}.
\item[\textit{Acceptance}] The parser accepts and halts when the stack is $J_0$, the move is the nonterminal shift defined by
$\vartheta \, ( \, I_0, \, S \, )$ and the current token is $\dashv$.
\end{description}
\end{algorithm}
For shift moves, we note that the m-states computed by Alg. \ref{algMacchPilotaELR(1)}, may contain multiple stack candidates that have the same state; this happens whenever edge $I \to \theta \, (I, \, a)$ is convergent. It may help to look at the situation of a shift move (eq. \eqref{eqShiftMove1} and \eqref{eqShiftMove2}) schematized in Figure \ref{figSchematizedShift}.
\begin{figure}
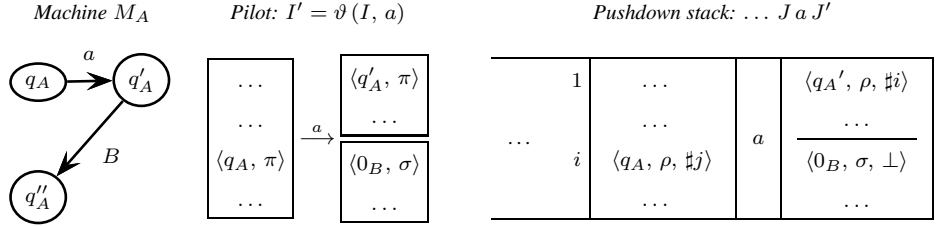

\begin{center}
\begin{tabular}{cccc}
\emph{Machine} $M_A$ & \emph{Pilot:} $I' = \vartheta \, ( I, \, a )$ && \emph{Pushdown stack:} $\ldots \, J \, a \, J'$ \\ \\
\begin{minipage}{0.175\textwidth}
\pspar\psset{colsep=0.6cm,rowsep=0.8cm,nodesep=0pt}
\begin{psmatrix}

\ovalnode{qA}{$q_A$} & \ovalnode{qA'}{${q'_A}$} \\

\ovalnode{qA''}{${q''_A}$}

\ncline{qA}{qA'} \naput[labelsep=7pt]{$a$}

\ncline{qA'}{qA''} \naput[labelsep=5pt]{$B$}

\end{psmatrix}
\end{minipage}

&

%pilot transition
$\def\arraystretch{1.5}\begin{array}[c]{|c|} \hline
\ldots \\
\ldots \\
\langle q_A, \, \pi \rangle \\
\ldots \\ \hline
\end{array}$
$\stackrel a \longrightarrow$
$\def\arraystretch{1.5}\begin{array}[c]{|c|} \hline
\langle {q'_A}, \, \pi \rangle \\
\ldots \\ \hline\hline
{ \langle 0_B, \,  \sigma \rangle } \\
\ldots \\ \hline
\end{array}$

&&

%pushdown stack
\begin{tabular}{cc|c|c|c} \hline
\ldots & $\def\arraystretch{1.5}\begin{array}{c|c}
1 & \ldots \\
  & \ldots \\
i & \ \ \, \langle q_A, \, \rho, \, \sharp j \rangle \\
  & \ldots \\
\end{array}$ & $a$ & $\def\arraystretch{1.5}\begin{array}{c}
\langle {q_A}', \, \rho, \, \sharp i \rangle \\
\ldots \\ \hline
\langle 0_B, \, \sigma, \, \bot \rangle \\
\ldots \\
\end{array}$ \\ \hline
\end{tabular}
\end{tabular}
\end{center}
\caption{Schematization of the shift move $q_A \stackrel a \to q'_A$ (plus completion) in the parser stack with pointers.}\label{figSchematizedShift}
\end{figure}
\begin{example}\label{exRunningParsingTrace}Parsing trace.
\par
The step by step execution of the parser on input string $( \, ( \, ) \, a \, )$ produces the trace shown in Figure \ref{figExRunningParseTrace}. For clarity there are two parallel tracks: the input string, progressively replaced by the nonterminal symbols shifted on the stack; and the stack of stack m-states.
\begin{figure}[h!]
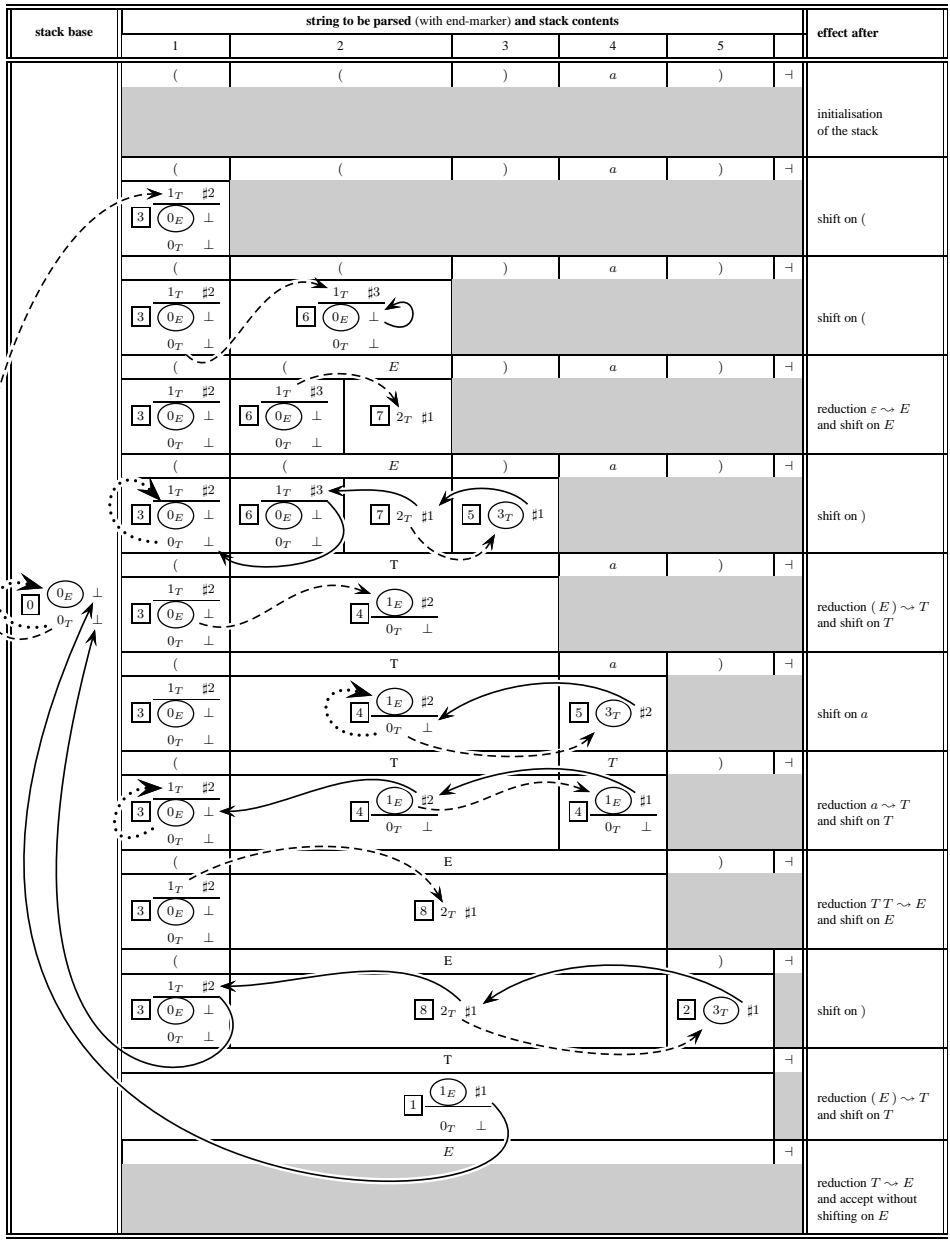

\scalebox{0.625}{\def\arraystretch{1.4}
\begin{tabular}{||c||c|c|c|c|c|c||m{70pt}||} \hline\hline
\multirow{2}{*}{\textbf{stack base}} & \multicolumn{6}{|c||}{\textbf{string to be parsed {\rm (with end-marker)} and stack contents}} & \multirow{2}{*}{\textbf{effect after}} \\ \cline{2-7}
& 1 & 2 & 3 & 4 &	5 & &  \\ \hline\hline

\multirow{47}{*}{$\framebox{$0$} \ \begin{array}{cc} \ovalnode{46}{0_E} & \rnode{52}{\bot} \\ \rnode{45}{0_T} & \rnode{44}{\bot} \end{array}$} & $($ & $($ & $)$ & $a$ & $)$ & $\dashv$ & \\ \cline{2-6}
& \multicolumn{6}{|c||}{\cellcolor[gray]{0.8} $\begin{array}{c} \\ \\ \\ \end{array}$} & initialisation \par of the stack \\ \cline{2-8}

& $($ & $($ & $)$ & $a$ & $)$ & $\dashv$ & \\ \cline{2-6}
& $\framebox{$3$} \ \begin{array}{cc} \rnode{s2}{1_T} & \sharp 2 \\ \hline \ovalnode{\relax}{0_E} & \bot \\ 0_T & \bot \end{array}$
& \multicolumn{5}{|c||}{\cellcolor[gray]{0.8}} & shift on $($ \\ \cline{2-8}

& $($ & $($ & $)$ & $a$ & $)$ & $\dashv$ & \\ \cline{2-6}
& $\framebox{$3$} \ \begin{array}{cc} 1_T & \sharp 2 \\ \hline \ovalnode{\relax}{0_E} & \bot \\ \rnode{s3}{0_T} & \bot \end{array}$
& $\framebox{$6$} \ \begin{array}{cc} \rnode{s4}{1_T} & \sharp 3 \\ \hline \ovalnode{\relax}{0_E} & \rnode{01}{\bot} \\ 0_T & \bot \end{array}$
& \multicolumn{4}{|c||}{\cellcolor[gray]{0.8}} & shift on $($ \\ \cline{2-8}

\pspar

\nccurve[angleA=-30,angleB=30,ncurvA=7,ncurvB=7]{01}{01}

\nccurve[linestyle=dashed,angleA=-45,angleB=160,ncurvA=0.75,ncurvB=1]{s3}{s4}

& $($ & $($ \hspace{2cm} $E$ & $)$ & $a$ & $)$ & $\dashv$ & \\ \cline{2-6}
& $\framebox{$3$} \ \begin{array}{cc} 1_T & \sharp 2 \\ \hline \ovalnode{\relax}{0_E} & \bot \\ 0_T & \bot \end{array}$
& $\framebox{$6$} \ \begin{array}{cc} \rnode{s5}{1_T} & \sharp 3 \\ \hline \ovalnode{\relax}{0_E} & \bot \\ 0_T & \bot \end{array}$
\quad \vline \qquad $\framebox{$7$} \ \begin{array}{cc} \rnode{s6}{2_T} & \sharp 1 \end{array}$ \quad
& \multicolumn{4}{|c||}{\cellcolor[gray]{0.8}} & reduction $\varepsilon \leadsto E$ \par and shift on $E$ \\ \cline{2-8}

\pspar

\nccurve[linestyle=dashed,angleA=30,angleB=120,ncurvA=0.75,ncurvB=0.75]{s5}{s6}

& $($ & $($ \hspace{2cm} $E$ & $)$ & $a$ & $)$ & $\dashv$ & \\ \cline{2-6}
& $\framebox{$3$} \ \begin{array}{cc} 1_T & \sharp 2 \\ \hline \ovalnode{16}{0_E} & \bot \\ \rnode{15}{0_T} & \rnode{14}{\bot} \end{array}$
& $\framebox{$6$} \ \begin{array}{cc} 1_T & \rnode{13}{\sharp 3} \\ \hline \ovalnode{\relax}{0_E} & \bot \\ 0_T & \bot \end{array}$
\quad \vline \qquad $\framebox{$7$} \ \begin{array}{cc} \rnode{s7}{2_T} & \rnode{12}{\sharp 1} \end{array}$ \quad
& $\framebox{$5$} \ \begin{array}{cc} \rnode{s8}{\ovalnode{\relax}{3_T}} & \rnode{11}{\sharp 1} \end{array}$
& \multicolumn{3}{|c||}{\cellcolor[gray]{0.8}} & shift on $)$ \\ \cline{2-8}

\pspar

\nccurve[angleA=135,angleB=45]{11}{12}

\nccurve[angleA=135,angleB=0]{12}{13}

\nccurve[angleA=-45,angleB=-45,ncurvA=1.5]{13}{14}

\nccurve[linestyle=dotted,linewidth=2pt,angleA=-180,angleB=135,ncurvA=4,ncurvB=3.5]{15}{16}

\nccurve[linestyle=dashed,angleA=-60,angleB=-120,ncurvA=1,ncurvB=1]{s7}{s8}

& $($ & \multicolumn{2}{c|}{T} & $a$ & $)$ & $\dashv$ & \\ \cline{2-6}
& $\framebox{$3$} \ \begin{array}{cc} 1_T & \sharp 2 \\ \hline \rnode{s9}{\ovalnode{\relax}{0_E}} & \bot \\ 0_T & \bot \end{array}$
& \multicolumn{2}{c|}{$\framebox{$4$} \ \begin{array}{cc} \rnode{s10}{\ovalnode{\relax}{1_E}} & \sharp 2 \\ \hline 0_T & \bot \end{array}$}
& \multicolumn{3}{|c||}{\cellcolor[gray]{0.8}} & reduction $( \, E \, ) \leadsto T$ \par and shift on $T$ \\ \cline{2-8}

\pspar

\nccurve[linestyle=dashed,angleA=-20,angleB=160,ncurvA=0.75,ncurvB=1]{s9}{s10}

& $($ & \multicolumn{2}{c|}{T} & $a$ & $)$ & $\dashv$ & \\ \cline{2-6}
& $\framebox{$3$} \ \begin{array}{cc} 1_T & \sharp 2 \\ \hline \ovalnode{\relax}{0_E} & \bot \\ 0_T & \bot \end{array}$
& \multicolumn{2}{c|}{$\framebox{$4$} \ \begin{array}{cc} \ovalnode{24}{1_E} & \sharp 2 \\ \hline \rnode{23}{0_T} & \rnode{22}{\bot} \end{array}$}
& $\framebox{$5$} \ \begin{array}{cc} \rnode{s12}{\ovalnode{\relax}{3_T}} & \rnode{21}{\sharp 2} \end{array}$
& \multicolumn{2}{|c||}{\cellcolor[gray]{0.8}} & shift on $a$ \\ \cline{2-8}

\pspar

\nccurve[angleA=145,angleB=35]{21}{22}

\nccurve[linestyle=dotted,linewidth=2pt,angleA=-165,angleB=165,ncurvA=3.5,ncurvB=3.5]{23}{24}

\nccurve[linestyle=dashed,angleA=-30,angleB=-135,ncurvA=0.5,ncurvB=0.5]{23}{s12}

& $($ & \multicolumn{2}{c|}{T} & $T$ & $)$ & $\dashv$ & \\ \cline{2-6}
& $\framebox{$3$} \ \begin{array}{cc} \rnode{35}{1_T} & \sharp 2 \\ \hline \ovalnode{34}{0_E} & \rnode{33}{\bot} \\ 0_T & \bot \end{array}$
& \multicolumn{2}{c|}{$\framebox{$4$} \ \begin{array}{cc} \rnode{s13}{\ovalnode{\relax}{1_E}} & \rnode{32}{\sharp 2} \\ \hline 0_T & \bot \end{array}$}
& $\framebox{$4$} \ \begin{array}{cc} \rnode{s14}{\ovalnode{\relax}{1_E}} & \rnode{31}{\sharp 1} \\ \hline 0_T & \bot \end{array}$
& \multicolumn{2}{|c||}{\cellcolor[gray]{0.8}} & reduction $a \leadsto T$ \par and shift on $T$ \\ \cline{2-8}

\pspar

\nccurve[angleA=145,angleB=25]{31}{32}

\nccurve[angleA=145,angleB=0]{32}{33}

\nccurve[linestyle=dotted,linewidth=2pt,angleA=-145,angleB=180,ncurvA=4,ncurvB=3]{34}{35}

\nccurve[linestyle=dashed,angleA=-15,angleB=160,ncurvA=0.75,ncurvB=1]{s13}{s14}

& $($ & \multicolumn{3}{c|}{E} & $)$ & $\dashv$ & \\ \cline{2-6}
& $\framebox{$3$} \ \begin{array}{cc} \rnode{s16}{1_T} & \sharp 2 \\ \hline \ovalnode{\relax}{0_E} & \bot \\ 0_T & \bot \end{array}$
& \multicolumn{3}{c|}{$\framebox{$8$} \ \begin{array}{cc} \rnode{s17}{2_T} & \sharp 1 \end{array}$}
& \multicolumn{2}{|c||}{\cellcolor[gray]{0.8}} & reduction $T \, T \leadsto E$ \par and shift on $E$ \\ \cline{2-8}

\pspar

\nccurve[linestyle=dashed,angleA=30,angleB=120]{s16}{s17}

& $($ & \multicolumn{3}{c|}{E} & $)$ & $\dashv$ & \\ \cline{2-6}
& $\framebox{$3$} \ \begin{array}{cc} 1_T & \rnode{43}{\sharp 2} \\ \hline \ovalnode{\relax}{0_E} & \bot \\ 0_T & \bot \end{array}$
& \multicolumn{3}{c|}{$\framebox{$8$} \ \begin{array}{cc} \rnode{s18}{2_T} & \rnode{42}{\sharp 1} \end{array}$}
& $\framebox{$2$} \ \begin{array}{cc} \rnode{s19}{\ovalnode{\relax}{3_T}} & \rnode{41}{\sharp 1} \end{array}$
& \multicolumn{1}{|c||}{\cellcolor[gray]{0.8}} & shift on $)$ \\ \cline{2-8}

\pspar

\nccurve[angleA=145,angleB=35]{41}{42}

\nccurve[angleA=135,angleB=0]{42}{43}

\nccurve[angleA=-45,angleB=-105,ncurvA=0.625,ncurvB=3]{43}{44}

\nccurve[linestyle=dotted,linewidth=2pt,angleA=-165,angleB=165,ncurvA=3.5,ncurvB=3.5]{45}{46}

\nccurve[linestyle=dashed,angleA=-30,angleB=-135,ncurvA=0.5,ncurvB=0.5]{s18}{s19}

\nccurve[linestyle=dashed,angleA=-150,angleB=180,ncurvA=0.80,ncurvB=0.75]{45}{s2}

& \multicolumn{5}{c|}{T} & $\dashv$ & \\ \cline{2-6}
& \multicolumn{5}{c|}{$\framebox{$1$} \ \def\arraystretch{2}\begin{array}{cc} \ovalnode{\relax}{1_E} & \rnode{51}{\sharp 1} \\ \hline 0_T & \bot \end{array}$}
& \multicolumn{1}{|c||}{\cellcolor[gray]{0.8}} & reduction $( \, E \, ) \leadsto T$ \par and shift on $T$ \\ \cline{2-8}

& \multicolumn{5}{|c|}{$E$} & $\dashv$ & \\ \cline{2-6}
& \multicolumn{6}{|c||}{\cellcolor[gray]{0.8} $\begin{array}{c} \\ \\ \\ \end{array}$} & reduction $T \leadsto E$ \par and accept without \par shifting on $E$
\pspar
\nccurve[angleA=-45,angleB=-115,ncurvA=0.625,ncurvB=2.25]{51}{52}
\\ \hline\hline
\end{tabular}}
\caption{Parsing steps for string $( \, ( \, ) \, a \, )$ (grammar and $ELR \, (1)$ pilot in Figures \ref{figRunningExELRpart1} and \ref{figRunningExELRpilot}); the name $J_h$ of a $sms$ maps onto the corresponding m-state $\mu(J_h) = I_h$.} \label{figExRunningParseTrace}
\end{figure}
The stack has one more entry than the scanned prefix; the suffix yet to be scanned is to the right of the stack. Inside a stack element $J[k]$, each 3-tuple is identified by its ordinal position, starting from 1 for the first 3-tuple; a value $\sharp i$ in a \emph{cid} field of  element $J[k + 1]$ encodes a pointer to the $i$-th $3$-tuple of $J[k]$.
\par
To ease reading the parser trace simulation, the m-state number appears framed in each stack element, e.g., $I_0$ is denoted \framebox{$0$}, etc.; the final candidates are encircled, e.g., \ovalnode{\relax}{$1_E$}, etc.; and to avoid clogging, look-ahead sets are not shown as they are only needed for convergent transitions, which do not occur here. But the look-aheads can always be found by inspecting the pilot graph.
\par
Fig. \ref{figExRunningParseTrace} highlights the shift moves as dashed forward arrows that link two topmost stack candidates. For instance, the first (from the Figure top) terminal shift, on $($, is $\langle 0_T, \, \bot \rangle \stackrel ( \to \langle 1_T, \, \sharp 2 \rangle$. The first nonterminal shift is the third one, on $E$, i.e., $\langle 1_T, \, \sharp 3 \rangle \stackrel E \to \langle 2_T, \, \sharp 1 \rangle$, after the null reduction $\varepsilon \leadsto E$. Similarly for all the other shifts.
\par
Fig. \ref{figExRunningParseTrace}  highlights the reduction handles, by means of solid backward arrows that link the candidates involved. For instance, see reduction $( \, E \, ) \leadsto T$: the stack configuration above shows the chain of three pointers $\sharp 1$, $\sharp 1$ and $\sharp 3$ - these three stack elements form the  handle and are popped - and finally the initial pointer $\bot$ - this stack element is the reduction origin and is not popped. The initial pointer $\bot$ marks the initial candidate $\langle 0_T, \, \bot \rangle$, which is obtained by means of a closure operation applied to candidate $\langle 0_E, \, \bot \rangle$ in the same stack element, see the dotted arrow that links them, from which the subsequent shift on $T$ starts, see the dashed arrow in the stack configuration below. The solid self-loop on candidate $\langle 0_E, \, \bot \rangle$ (at the $2^{nd}$ shift of token $($ - see the effect on the line below) highlights the null reduction $\varepsilon \leadsto E$, which does not pop anything from the stack and is immediately followed by a shift on $E$.
\par
We observe that the parser must store on stack the scanned grammar symbols, because in a reduction move, at step (2), they may be necessary for selecting the correct reduction handle and build the subtree to be added to the syntax forest.
\par
Returning to the example, the execution order of reductions is:
\[
\varepsilon\leadsto E \qquad ( \, E \, ) \leadsto T \qquad a \leadsto T \qquad T \, T \leadsto E \qquad ( \, E \, )\leadsto T \qquad T \leadsto E
\]
Pasting together the reductions, we obtain this syntax tree, where the  order of reductions is displayed:
\begin{center}
 \scalebox{0.8}{%
\psset{arrows=-,linestyle=solid,levelsep=20pt, treesep=40pt}
 \pstree{\TR{$E$}}
   {
    \pstree{\TR{$T$}^6}			
				{
				\TR{$($}
      \pstree{\TR{$E$}^5}
        {

        \pstree{\TR{$T$}^4}
            {
                \TR{$($}

            \pstree{\TR{$E$}^2}
                {  \TR{$\varepsilon$}^1 }

                 \TR{$)$}
            }

        \pstree{\TR{$T$}} %
            {

                 \TR{$a$}^3
            }

        }
     \TR{$)$}
     }	

   }

  }
\end{center}
Clearly, the reduction order matches a rightmost derivation but in reversed order.
\end{example}
\par
It is worth examining more closely the case  of convergent conflict-free edges. Returning to Figure \ref{figSchematizedShift}, notice that  the stack candidates linked by a $cid$ chain are mapped onto m-state candidates that have the same machine state (here the stack candidate $\langle {q'_A}, \, \rho, \, \sharp i \rangle$ is linked via  $\sharp i$ to $\langle {q_A}, \, \rho, \, \sharp j \rangle$): the look-ahead set $\pi$ of such m-state candidates is in general a superset of the set $\rho$ included in the stack candidate, due to the possible presence of convergent transitions; the two sets $\rho$ and $\pi$  coincide when no convergent transition is taken on the pilot automaton at parsing time.
\par
An $ELR(1)$ grammar with convergent edges is studied next.
\begin{example}\label{exELRconvergentGrammar}
The net, pilot graph and the trace of a parse are shown in Figure \ref{figExELRconvergentGrammarPilotTrace}, where, for readability, the \emph{cid} values in the stack candidates are visualized as backward pointing arrows. Stack m-state $J_3$ contains two candidates which just differ in their look-ahead sets. In the corresponding pilot m-state $I_3$, the two candidates are the targets of a convergent non-conflictual m-state transition (highlighted double-line in the pilot graph).
\begin{figure}[h!]
\begin{flushleft}
\emph{Machine Net}
\end{flushleft}
\begin{center}
\vspace{0.25cm}
\scalebox{1.0}{
\pspar\psset{border=0pt,nodesep=0pt,colsep=0.725cm,rowsep=0.5cm}
\begin{psmatrix}

& & \ovalnode{2S}{$2_S$} \\

\ovalnode{0S}{$0_S$} & \ovalnode{1S}{$1_S$} & & \ovalnode{4S}{$4_S$} & & & \ovalnode{0A}{$0_A$} & \ovalnode{1A}{$1_A$} & \ovalnode{2A}{$2_A$} \\

& & \ovalnode{3S}{$3_S$}

\nput[labelsep=0pt]{180}{0A}{$M_A \ \to$}

\nput[labelsep=0pt]{0}{2A}{$\to$}

\ncline{0A}{1A} \nbput{$b$}

\ncline{1A}{2A} \nbput{$e$}

\nccurve[angleA=60,angleB=120,ncurvA=0.75,ncurvB=0.75]{0A}{2A} \naput{$e$}

\nput[labelsep=0pt]{180}{0S}{$M_S \ \to$}

\nput[labelsep=0pt]{0}{4S}{$\to$}

\ncline{0S}{1S} \naput{$a$}

\ncline{1S}{2S} \naput{$A$}

\ncline{1S}{3S} \nbput{$b$}

\ncline{2S}{4S} \naput{$d$}

\ncline{3S}{4S} \nbput{$A$}

\end{psmatrix}}
\vspace{0.25cm}
\end{center}
\hrule
\begin{flushleft}
\emph{Pilot Graph}
\end{flushleft}
%PILOT
\begin{center}
\vspace{0.25cm}
\scalebox{1.2}{
\pspar\psset{border=0pt,arrows=->,nodesep=0pt,colsep=1.5cm,rowsep=1cm,labelsep=5pt}
\begin{psmatrix}

\rnode{I0}{$\def\arraystretch{1.5}\begin{array}{|c|c|} \hline\hline
0_S & \dashv \\ \hline
\end{array}$} &

\rnode{I1}{$\def\arraystretch{1.5}\begin{array}{|c|c|} \hline
1_S & \dashv \\ \hline\hline
0_A & d \\ \hline
\end{array}$} &

\rnode{I2}{$\def\arraystretch{1.5}\begin{array}{|c|c|} \hline
3_S & \dashv \\ \cline{2-2}
1_A & d \\ \hline\hline
0_A & \dashv \\ \hline
\end{array}$} &

\rnode{I3}{$\def\arraystretch{1.5}\begin{array}{|c|c|} \hline
\ovalnode{\relax}{\mathbf{2_A}} & d \; \dashv \\ \hline\hline
\end{array}$} \\

&

\rnode{I4}{$\def\arraystretch{1.5}\begin{array}{|c|c|} \hline
2_S& \dashv \\ \hline\hline
\end{array}$} &

\rnode{I5}{$\def\arraystretch{1.5}\begin{array}{|c|c|} \hline
\ovalnode{\relax}{\mathbf{4_S}} & \dashv \\ \hline\hline
\end{array}$} \\

\nput[labelsep=5pt]{90}{I0}{$I_0$}

\nput[labelsep=5pt]{90}{I1}{$I_1$}

\nput[labelsep=5pt]{90}{I2}{$I_2$}

\nput[labelsep=5pt]{90}{I3}{$I_3$}

\nput[labelsep=5pt]{-90}{I4}{$I_4$}

\nput[labelsep=5pt]{-90}{I5}{$I_5$}

\ncline{I0}{I1} \naput{$a$}

\ncline{I1}{I2} \naput{$b$}

\ncline[doubleline=true,arrowscale=1.5]{I2}{I3} \naput{$e$} \nbput{\parbox{1.5cm}{\centering convergent \par edge}}

\ncline{I2}{I5} \naput{$A$}

\ncline{I1}{I4} \nbput{$A$}

\ncline{I4}{I5} \nbput{$d$}

\end{psmatrix}}
\end{center}
\hrule
\begin{flushleft}
\emph{Parse Traces}
\end{flushleft}
%TRACCE
\begin{center}
\vspace{0.25cm}
\scalebox{1.2}{
\pspar\psset{border=0.0cm,nodesep=3pt,labelsep=5pt,rowsep=0.25cm,colsep=0.625cm}
\begin{psmatrix}
$J_0$ & $a$ & $J_1$ & $b$ & $J_2$ & $e$ & $J_3$ & $d$ & $\dashv$ \\

\rnode{I0}{$\def\arraystretch{1.5}\begin{array}{|c|c|} \hline\hline
0_S & \rnode{0s1p}{\dashv} \\ \hline
\end{array}$}

&&

\rnode{I1}{$\def\arraystretch{1.5}\begin{array}{|c|c|} \hline
\rnode{1s1s}{1_S} & \rnode{1s1p}{\dashv} \\ \hline\hline
\rnode{1s2s}{0_A} & \rnode{1s2p}{d} \\ \hline
\end{array}$}

&&

\rnode{I2}{$\def\arraystretch{1.5}\begin{array}{|c|c|} \hline
\rnode{2s1s}{3_S} & \dashv \\ \cline{2-2}
\rnode{2s2s}{1_A} & \rnode{2s2p}{d} \\ \hline\hline
0_A & \rnode{2s3p}{\dashv} \\ \hline
\end{array}$}

&&

\rnode{I3}{$\def\arraystretch{1.5}\begin{array}{|c|c|} \hline
\rnode{3s1s}{\ovalnode{\relax}{\mathbf{2_A}}} & d \\ \cline{2-2}
\rnode{3s2s}{\ovalnode{\relax}{\mathbf{2_A}}} & \dashv \\ \hline\hline
\end{array}$} \\

\nccurve[angleA=-165,angleB=0,ncurvA=0.75,ncurvB=0.75]{1s1s}{0s1p}

\nccurve[angleA=-165,angleB=0,ncurvA=0.75,ncurvB=0.75]{2s1s}{1s1p}

\nccurve[angleA=-165,angleB=0,ncurvA=0.75,ncurvB=0.75]{2s2s}{1s2p}

\nccurve[angleA=-165,angleB=0,ncurvA=0.75,ncurvB=0.75]{3s1s}{2s2p}

\nccurve[angleA=-165,angleB=0,ncurvA=0.75,ncurvB=0.75]{3s2s}{2s3p}

\\

$J_0$ & $a$ & $J_1$ & $A$ & $J_4$ & $d$ & $J_5$ & $\dashv$ \\

\rnode{I0}{$\def\arraystretch{1.5}\begin{array}{|c|c|} \hline\hline
0_S & \rnode{0s1p}{\dashv} \\ \hline
\end{array}$}

&&

\rnode{I1}{$\def\arraystretch{1.5}\begin{array}{|c|c|} \hline
\rnode{1s1s}{1_S} & \rnode{1s1p}{\dashv} \\ \hline\hline
0_A & d \\ \hline
\end{array}$}

&&

\rnode{I2}{$\def\arraystretch{1.5}\begin{array}{|c|c|} \hline
\rnode{2s1s}{2_S} & \rnode{2s1p}{\dashv} \\ \hline\hline
\end{array}$}

&&

\rnode{I3}{$\def\arraystretch{1.5}\begin{array}{|c|c|} \hline
\rnode{3s1s}{\ovalnode{\relax}{\mathbf{4_S}}} & \dashv \\ \hline\hline
\end{array}$}

\nccurve[angleA=-165,angleB=0,ncurvA=0.75,ncurvB=0.75]{1s1s}{0s1p}

\nccurve[angleA=165,angleB=0,ncurvA=0.75,ncurvB=0.75]{2s1s}{1s1p}

\ncline{3s1s}{2s1p}

\end{psmatrix}}
\begin{center}
\vspace{0.25cm}
reductions $b\,e \leadsto A$ (above) and $a\,A\,d \leadsto S$ (below)
\end{center}
\end{center}
\caption{$ELR \, (1)$ net, pilot with convergent edges (double line) and parsing trace of string  $a\,b\,e\,d\,\dashv$.}\label{figExELRconvergentGrammarPilotTrace}
\end{figure}
\end{example}
\subsection{Simplified parsing for \emph{BNF} grammars}
For $LR \, (1)$ grammars that do not use regular expressions in the rules, some features of the $ELR \, (1)$ parsing algorithm become superfluous. We briefly discuss them to highlight the differences between extended and basic shift-reduce parsers. Since the graph of every machine is a tree, there are no edges entering the same machine state, which rules out the presence of convergent edges in the pilot. Moreover, the alternatives of a nonterminal, $A \to \alpha \; \mid \; \beta \; \mid \; \ldots$, are recognized in distinct final states $q_{\alpha, \, A}$, $q_{\beta, \, A}$, $\ldots$ of machine $M_A$. Therefore, if the candidate chosen by the parser is $q_{\alpha, \, A}$, the $DPDA$ simply pops $2 \times \vert \alpha \vert$ stack elements and performs the reduction $\alpha \leadsto A$. Since pointers to a preceding stack candidate are no longer needed, the stack m-states coincide with the pilot ones.
\par
Second, for a related reason the interleaved grammar symbols are no longer needed on the stack, because the pilot of an $LR \, (1)$ grammar has the well-known property that all the edges entering an m-state carry the same terminal/nonterminal label. Therefore the reduction handle is uniquely determined by the final candidate of the current m-state.
\par
The simplifications have some effect on the formal notation: for $BNF$ grammars the use of machine nets and their states becomes subjectively less attractive than the classical notation based on marked grammar rules.
\subsection{Parser implementation using an indexable stack} \label{VectorStackSection}
Before finishing with bottom-up parsing, we present an alternative implementation of the parser for $EBNF$ grammars (Algorithm \ref{AlgELR1parser}, p. \pageref{AlgELR1parser}), where the memory of the analyzer is an array of elements such that any element can be directly accessed by means of an integer index, to be named a \emph{vector-stack}. The reason for presenting the new implementation is twofold: this technique is compatible with the implementation of non-deterministic tabular parsers (Sect. \ref{sectionTabular}) and is potentially faster. On the other hand, a vector-stack as data-type is more general than a pushdown stack, therefore this parser cannot be viewed as a pure $DPDA$.
\par
As before, the elements in the vector-stack are of two alternating types: \emph{vector-stack m-states} $vsms$ and grammar symbols. A $vsms$, denoted by $J$, is a set of  triples, named \emph{vector-stack candidates}, the form of which $\langle q_A, \, \pi, \, elemid \rangle$ simply differs from the earlier stack candidates because the  third component is a positive integer named \emph{element identifier} instead of a $cid$. Notice also that now the set is not ordered. The surjective mapping from vector-stack m-states to pilot m-states is denoted $\mu$, as before. Each $elemid$ points back to the vector-stack element containing the initial state of the current machine, so that, when a reduction move is performed, the length of the string to be reduced - and the reduction handle - can be obtained directly without inspecting the stack elements below the top one. Clearly the value of $elemid$ ranges from 1 to the maximum vector-stack height.
\begin{algorithm}\label{AlgELR1parserVectorStack}$ELR(1)$ parser automaton $\mathcal{A}$ using a vector-stack.
\par
Let $J[0] \, a_1 \, J[1] \, a_2 \, \ldots \, a_k \, J[k]$ be the current stack, where $a_i$ is a grammar symbol and the top element is $J[k]$.
\begin{description}
\item[\textit{Initialization}] The analysis starts by pushing on the stack the $sms$:
\[
J_0 = \{ \, s \; \vert \quad s = \langle q, \, \pi, \, 0 \rangle \ \text{for every candidate} \ \langle q, \, \pi \rangle \in I_0 \, \}
\]
(thus $\mu(J_0) = I_0$, the initial m-state of the pilot).
\item[\textit{Shift move}] Let the top $vsms$ be $J$, $I = \mu(J)$ and the current token be $a \in \Sigma$.
Assume that, by inspecting $I$, the pilot has decided to shift and let $\vartheta \, (I, \, a) = I'$.
\par
The shift move does:
\begin{enumerate}
\item push token $a$ on stack and get next token
\item  push on stack the $sms$ $J'$ (more precisely, $k := k+1$ and $J[k] := J'$) computed as follows:
\begin{eqnarray}\label{VSeqShiftMove1}
J' & = & \left\{ \, \langle {q_A}', \, \rho, \, j \rangle \; \vert \quad \langle {q_A}, \, \rho, \, j \rangle \ \text{is in} \ J \; \land \; {q_A} \stackrel a \to {q_A}' \in \delta \, \right\} \\
\label{VSeqShiftMove2}
& \cup & \left\{ \, \langle {0_A}, \, \sigma, \, k \rangle \; \vert \quad\langle {0_A}, \, \sigma \rangle \in I'_{|closure} \, \right\}
\end{eqnarray}
(thus  $\mu(J') = I'$)
\end{enumerate}
\item[\textit{Reduction move} (non-initial state)]
The stack is $J[0] \, a_1 \, J[1] \, a_2 \, \ldots \, a_k \, J[k]$ and let the corresponding m-states be $I[i] = \mu \left( \, J[i] \, \right)$ with $0 \leq i \leq k$.
\par
Assume that, by inspecting  $I[k]$, the pilot chooses the reduction candidate $c = \langle q_A, \, \pi \rangle \in I[k]$, where $q_A$ is a final but non-initial state. Let $t_k = \langle q_A, \, \rho, \, h \rangle \in J[k]$ be the (only) stack candidate such that the current token $a \in \rho$.
\par
The reduction move does:
\begin{enumerate}
\item grow the syntax forest by applying  reduction $a_{h+1} \, a_{h+2} \, \ldots \, a_k \leadsto A$
\item pop the stack symbols in the following order: $J[k] \, a_k \, J[k-1] \, a_{k-1} \, \ldots \, J[h+1] \, a_{h+1}$
\item execute the nonterminal shift move $\vartheta \, ( \, I[h], \, A \, )$ (see below).
\end{enumerate}
\item[\textit{Reduction move} (initial state)] It differs from the preceding case in that the chosen candidate is
$c = \langle 0_A, \, \pi, \, k \rangle$. The parser move grows the syntax forest by the reduction $\varepsilon \leadsto A$ and performs the nonterminal shift move corresponding to $\vartheta \, ( \, I[k], \, A \, )$.
\item[\textit{Nonterminal shift move}] It is the same as a shift move, except that the shifted symbol, $A$, is a nonterminal.
The only  difference is that the parser does not read the next input token at line (1) of \textit{Shift move}.
\item[\textit{Acceptance}] The parser accepts and halts when the stack is $J_0$, the move is the nonterminal shift defined by
$\vartheta \, ( \, I_0, \, S \, )$ and the current token is $\dashv$.
\end{description}
\end{algorithm}
Although the algorithm uses a stack, it cannot be viewed as a $PDA$ because the stack alphabet is unbounded since a $vsms$ contains integer values.
\begin{example}\label{exRunningParsingTraceVector}Parsing trace.
\par
Figure \ref{figExRunningParseTraceVector}, to be compared with Figure \ref{figExRunningParseTrace},
shows the same step by step execution of the parser as in Example \ref{exRunningParsingTrace},
on input string  $( \, (\, ) \, a \, )$; the graphical conventions are the same.
\begin{figure}[htbp!]
\scalebox{0.650}{\def\arraystretch{1.4}
\begin{tabular}{||c||c|c|c|c|c|c||m{70pt}||} \hline\hline
\textbf{stack base} & \multicolumn{6}{|c||}{\textbf{string to be parsed {\rm (with end-marker)} and stack contents {\rm (with indices)}}} & \multirow{2}{*}{\textbf{effect after}} \\ \cline{1-7}
0 & 1 & 2 & 3 & 4 &	5 & &  \\ \hline\hline

\multirow{47}{*}{$\framebox{$0$} \ \begin{array}{cc} \ovalnode{46}{0_E} & \rnode{52}{0} \\ \rnode{45}{0_T} & \rnode{44}{0} \end{array}$} & $($ & $($ & $)$ & $a$ & $)$ & $\dashv$ & \\ \cline{2-6}
& \multicolumn{6}{|c||}{\cellcolor[gray]{0.8} $\begin{array}{c} \\ \\ \\ \end{array}$} & initialisation \par of the stack \\ \cline{2-8}

& $($ & $($ & $)$ & $a$ & $)$ & $\dashv$ & \\ \cline{2-6}
& $\framebox{$3$} \ \begin{array}{cc} 1_T & 0 \\ \hline \ovalnode{\relax}{0_E} & 1 \\ 0_T & 1 \end{array}$
& \multicolumn{5}{|c||}{\cellcolor[gray]{0.8}} & shift on $($ \\ \cline{2-8}

& $($ & $($ & $)$ & $a$ & $)$ & $\dashv$ & \\ \cline{2-6}
& $\framebox{$3$} \ \begin{array}{cc} 1_T & 0 \\ \hline \ovalnode{\relax}{0_E} & 1 \\ 0_T & 1 \end{array}$
& $\framebox{$6$} \ \begin{array}{cc} 1_T & 1 \\ \hline \ovalnode{\relax}{0_E} & \rnode{01}{2} \\ 0_T & 2 \end{array}$
& \multicolumn{4}{|c||}{\cellcolor[gray]{0.8}} & shift on $($ \\ \cline{2-8}

\pspar

\nccurve[angleA=-30,angleB=30,ncurvA=10,ncurvB=10]{01}{01}

& $($ & $($ \hspace{2cm} $E$ & $)$ & $a$ & $)$ & $\dashv$ & \\ \cline{2-6}
& $\framebox{$3$} \ \begin{array}{cc} 1_T & 0 \\ \hline \ovalnode{\relax}{0_E} & 1 \\ 0_T & 1 \end{array}$
& $\framebox{$6$} \ \begin{array}{cc} 1_T & 1 \\ \hline \ovalnode{\relax}{0_E} & 2 \\ 0_T & 2 \end{array}$
\quad \vline \qquad $\framebox{$7$} \ \begin{array}{cc} 2_T & 1 \end{array}$ \quad
& \multicolumn{4}{|c||}{\cellcolor[gray]{0.8}} & reduction $\varepsilon \leadsto E$ \par and shift on $E$ \\ \cline{2-8}

& $($ & $($ \hspace{2cm} $E$ & $)$ & $a$ & $)$ & $\dashv$ & \\ \cline{2-6}
& $\framebox{$3$} \ \begin{array}{cc} 1_T & 0 \\ \hline \ovalnode{16}{0_E} & 1 \\ \rnode{15}{0_T} & \rnode{14}{1} \end{array}$
& $\framebox{$6$} \ \begin{array}{cc} 1_T & \rnode{13}{1} \\ \hline \ovalnode{\relax}{0_E} & 2 \\ 0_T & 2 \end{array}$
\quad \vline \qquad $\framebox{$7$} \ \begin{array}{cc} 2_T & \rnode{12}{1} \end{array}$ \quad
& $\framebox{$5$} \ \begin{array}{cc} \ovalnode{\relax}{3_T} & \rnode{11}{1} \end{array}$
& \multicolumn{3}{|c||}{\cellcolor[gray]{0.8}} & shift on $)$ \\ \cline{2-8}

\pspar

\nccurve[angleA=-135,angleB=-45,ncurvA=0.5]{11}{14}

\nccurve[linestyle=dotted,linewidth=2pt,angleA=-180,angleB=135,ncurvA=4,ncurvB=3.5]{15}{16}

& $($ & \multicolumn{2}{c|}{T} & $a$ & $)$ & $\dashv$ & \\ \cline{2-6}
& $\framebox{$3$} \ \begin{array}{cc} 1_T & 0 \\ \hline \ovalnode{\relax}{0_E} & 1 \\ 0_T & 1 \end{array}$
& \multicolumn{2}{c|}{$\framebox{$4$} \ \begin{array}{cc} \ovalnode{\relax}{1_E} & 1 \\ \hline 0_T & 2 \end{array}$}
& \multicolumn{3}{|c||}{\cellcolor[gray]{0.8}} & reduction $( \, E \, ) \leadsto T$ \par and shift on $T$ \\ \cline{2-8}

& $($ & \multicolumn{2}{c|}{T} & $a$ & $)$ & $\dashv$ & \\ \cline{2-6}
& $\framebox{$3$} \ \begin{array}{cc} 1_T & 0 \\ \hline \ovalnode{\relax}{0_E} & 1 \\ 0_T & 1 \end{array}$
& \multicolumn{2}{c|}{$\framebox{$4$} \ \begin{array}{cc} \ovalnode{24}{1_E} & 1 \\ \hline \rnode{23}{0_T} & \rnode{22}{2} \end{array}$}
& $\framebox{$5$} \ \begin{array}{cc} \ovalnode{\relax}{3_T} & \rnode{21}{2} \end{array}$
& \multicolumn{2}{|c||}{\cellcolor[gray]{0.8}} & shift on $a$ \\ \cline{2-8}

\pspar

\nccurve[angleA=145,angleB=35]{21}{22}

\nccurve[linestyle=dotted,linewidth=2pt,angleA=-165,angleB=165,ncurvA=3.5,ncurvB=3.5]{23}{24}

& $($ & \multicolumn{2}{c|}{T} & $T$ & $)$ & $\dashv$ & \\ \cline{2-6}
& $\framebox{$3$} \ \begin{array}{cc} \rnode{35}{1_T} & 0 \\ \hline \ovalnode{34}{0_E} & \rnode{33}{1} \\ 0_T & 1 \end{array}$
& \multicolumn{2}{c|}{$\framebox{$4$} \ \begin{array}{cc} \ovalnode{\relax}{1_E} & \rnode{32}{1} \\ \hline 0_T & 2 \end{array}$}
& $\framebox{$4$} \ \begin{array}{cc} \ovalnode{\relax}{1_E} & \rnode{31}{1} \\ \hline 0_T & 3 \end{array}$
& \multicolumn{2}{|c||}{\cellcolor[gray]{0.8}} & reduction $a \leadsto T$ \par and shift on $T$ \\ \cline{2-8}

\pspar

\nccurve[angleA=145,angleB=10]{31}{33}

\nccurve[linestyle=dotted,linewidth=2pt,angleA=-145,angleB=180,ncurvA=4,ncurvB=3]{34}{35}

& $($ & \multicolumn{3}{c|}{E} & $)$ & $\dashv$ & \\ \cline{2-6}
& $\framebox{$3$} \ \begin{array}{cc} 1_T & 0 \\ \hline \ovalnode{\relax}{0_E} & 1 \\ 0_T & 1 \end{array}$
& \multicolumn{3}{c|}{$\framebox{$8$} \ \begin{array}{cc} 2_T & 0 \end{array}$}
& \multicolumn{2}{|c||}{\cellcolor[gray]{0.8}} & reduction $T \, T \leadsto E$ \par and shift on $E$ \\ \cline{2-8}

& $($ & \multicolumn{3}{c|}{E} & $)$ & $\dashv$ & \\ \cline{2-6}
& $\framebox{$3$} \ \begin{array}{cc} 1_T & \rnode{43}{0} \\ \hline \ovalnode{\relax}{0_E} & 1 \\ 0_T & 1 \end{array}$
& \multicolumn{3}{c|}{$\framebox{$8$} \ \begin{array}{cc} 2_T & \rnode{42}{0} \end{array}$}
& $\framebox{$2$} \ \begin{array}{cc} \ovalnode{\relax}{3_T} & \rnode{41}{0} \end{array}$
& \multicolumn{1}{|c||}{\cellcolor[gray]{0.8}} & shift on $)$ \\ \cline{2-8}

\pspar

\nccurve[linestyle=dotted,linewidth=2pt,angleA=-165,angleB=165,ncurvA=3.5,ncurvB=3.5]{45}{46}

& \multicolumn{5}{c|}{T} & $\dashv$ & \\ \cline{2-6}
& \multicolumn{5}{c|}{$\framebox{$1$} \ \def\arraystretch{2}\begin{array}{cc} \ovalnode{\relax}{1_E} & \rnode{51}{0} \\ \hline 0_T & 1 \end{array}$}
& \multicolumn{1}{|c||}{\cellcolor[gray]{0.8}} & reduction $( \, E \, ) \leadsto T$ \par and shift on $T$ \\ \cline{2-8}

& \multicolumn{5}{|c|}{$E$} & $\dashv$ & \\ \cline{2-6}
& \multicolumn{6}{|c||}{\cellcolor[gray]{0.8} $\begin{array}{c} \\ \\ \\ \end{array}$} & reduction $T \leadsto E$ \par and accept without \par shifting on $E$
\pspar
\nccurve[angleA=-120,angleB=-105,ncurvA=0.625,ncurvB=2.0]{41}{44}
\nccurve[angleA=-45,angleB=-115,ncurvA=1.0,ncurvB=2.375]{51}{52}
\\ \hline\hline
\end{tabular}}
\caption{Tabulation of parsing steps of the parser using a vector-stack for string $( \, ( \, ) \, a \, )$ generated by the grammar in Figure \ref{figRunningExELRpart1} having the $ELR \, (1)$ pilot of Figure \ref{figRunningExELRpilot}.}
\label{figExRunningParseTraceVector}
\end{figure}
\par
\begin{figure}[htbp!]
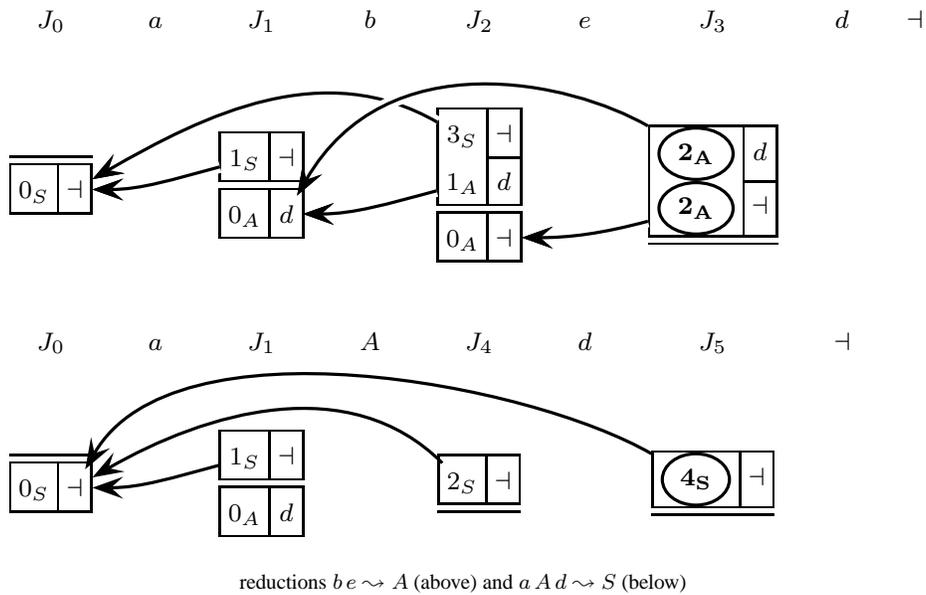

\begin{flushleft}
\emph{Pilot Graph}
\end{flushleft}
%PILOT
\begin{center}
\vspace{0.25cm}
\scalebox{1.2}{
\pspar\psset{border=0pt,arrows=->,nodesep=0pt,colsep=1.5cm,rowsep=1cm,labelsep=5pt}
\begin{psmatrix}

\rnode{I0}{$\def\arraystretch{1.5}\begin{array}{|c|c|} \hline\hline
0_S & \dashv \\ \hline
\end{array}$}

& \rnode{I1}{$\def\arraystretch{1.5}\begin{array}{|c|c|} \hline
1_S & \dashv \\ \hline\hline
0_A & d \\ \hline
\end{array}$}

& \rnode{I2}{$\def\arraystretch{1.5}\begin{array}{|c|c|} \hline
3_S & \dashv \\ \cline{2-2}
1_A & d \\ \hline\hline
0_A & \dashv \\ \hline
\end{array}$}

& \rnode{I3}{$\def\arraystretch{1.5}\begin{array}{|c|c|} \hline
\ovalnode{\relax}{\mathbf{2_A}} & d \; \dashv \\ \hline\hline
\end{array}$} \\

& \rnode{I4}{$\def\arraystretch{1.5}\begin{array}{|c|c|} \hline
2_S& \dashv \\ \hline\hline
\end{array}$}

& \rnode{I5}{$\def\arraystretch{1.5}\begin{array}{|c|c|} \hline
\ovalnode{\relax}{\mathbf{4_S}} & \dashv \\ \hline\hline
\end{array}$} \\

\nput[labelsep=5pt]{90}{I0}{$I_0$}

\nput[labelsep=5pt]{90}{I1}{$I_1$}

\nput[labelsep=5pt]{90}{I2}{$I_2$}

\nput[labelsep=5pt]{90}{I3}{$I_3$}

\nput[labelsep=5pt]{-90}{I4}{$I_4$}

\nput[labelsep=5pt]{-90}{I5}{$I_5$}

\ncline{I0}{I1} \naput{$a$}

\ncline{I1}{I2} \naput{$b$}

\ncline[doubleline=true,arrowscale=1.5]{I2}{I3} \naput{$e$} \nbput{\parbox{1.5cm}{\centering convergent \par edge}}

\ncline{I2}{I5} \naput{$A$}

\ncline{I1}{I4} \nbput{$A$}

\ncline{I4}{I5} \nbput{$d$}

\end{psmatrix}}
\end{center}
\hrule
\begin{flushleft}
\emph{Parse Traces}
\end{flushleft}
%TRACCE
\begin{center}
\vspace{0.25cm}
\scalebox{1.2}{
\pspar\psset{border=0.0cm,nodesep=3pt,labelsep=5pt,rowsep=0.25cm,colsep=0.625cm}
\begin{psmatrix}
$J_0$ & $a$ & $J_1$ & $b$ & $J_2$ & $e$ & $J_3$ & $d$ & $\dashv$ \\ \\

\rnode{I0}{$\def\arraystretch{1.5}\begin{array}{|c|c|} \hline\hline
0_S & \rnode{0s1p}{\dashv} \\ \hline
\end{array}$}

&&

\rnode{I1}{$\def\arraystretch{1.5}\begin{array}{|c|c|} \hline
\rnode{1s1s}{1_S} & \rnode{1s1p}{\dashv} \\ \hline\hline
\rnode{1s2s}{0_A} & \rnode{1s2p}{d} \\ \hline
\end{array}$}

&&

\rnode{I2}{$\def\arraystretch{1.5}\begin{array}{|c|c|} \hline
\rnode{2s1s}{3_S} & \dashv \\ \cline{2-2}
\rnode{2s2s}{1_A} & \rnode{2s2p}{d} \\ \hline\hline
0_A & \rnode{2s3p}{\dashv} \\ \hline
\end{array}$}

&&

\rnode{I3}{$\def\arraystretch{1.5}\begin{array}{|c|c|} \hline
\rnode{3s1s}{\ovalnode{\relax}{\mathbf{2_A}}} & d \\ \cline{2-2}
\rnode{3s2s}{\ovalnode{\relax}{\mathbf{2_A}}} & \dashv \\ \hline\hline
\end{array}$} \\

\psset{border=0.05cm}

\nccurve[angleA=-165,angleB=0,ncurvA=0.75,ncurvB=0.75]{1s1s}{0s1p}

\nccurve[angleA=150,angleB=30,ncurvA=0.75,ncurvB=0.75]{2s1s}{0s1p}

\nccurve[angleA=-165,angleB=0,ncurvA=0.75,ncurvB=0.75]{2s2s}{1s2p}

\nccurve[angleA=150,angleB=60,ncurvA=0.5,ncurvB=1]{3s1s}{1s2p}

\nccurve[angleA=-165,angleB=0,ncurvA=0.75,ncurvB=0.75]{3s2s}{2s3p}

\\

$J_0$ & $a$ & $J_1$ & $A$ & $J_4$ & $d$ & $J_5$ & $\dashv$ \\ \\

\rnode{I0}{$\def\arraystretch{1.5}\begin{array}{|c|c|} \hline\hline
0_S & \rnode{0s1p}{\dashv} \\ \hline
\end{array}$}

&&

\rnode{I1}{$\def\arraystretch{1.5}\begin{array}{|c|c|} \hline
\rnode{1s1s}{1_S} & \rnode{1s1p}{\dashv} \\ \hline\hline
0_A & d \\ \hline
\end{array}$}

&&

\rnode{I2}{$\def\arraystretch{1.5}\begin{array}{|c|c|} \hline
\rnode{2s1s}{2_S} & \rnode{2s1p}{\dashv} \\ \hline\hline
\end{array}$}

&&

\rnode{I3}{$\def\arraystretch{1.5}\begin{array}{|c|c|} \hline
\rnode{3s1s}{\ovalnode{\relax}{\mathbf{4_S}}} & \dashv \\ \hline\hline
\end{array}$}

\nccurve[angleA=-165,angleB=0,ncurvA=0.75,ncurvB=0.75]{1s1s}{0s1p}

\nccurve[angleA=135,angleB=30,ncurvA=0.75,ncurvB=0.75]{2s1s}{0s1p}

\nccurve[angleA=150,angleB=60,ncurvA=0.5,ncurvB=0.625]{3s1s}{0s1p}

\end{psmatrix}}
\begin{center}
\vspace{0.25cm}
reductions $b\,e \leadsto A$ (above) and $a\,A\,d \leadsto S$ (below)
\end{center}
\end{center}
\caption{Parsing steps of the parser using a vector-stack for string $a\,b\,e\,d\,\dashv$ recognized by the net in Figure \ref{figExELRconvergentGrammarPilotTrace}, with the pilot here reproduced for convenience.} \label{figConvergentEdgeNetParseTraceVector}
\end{figure}
\par
In  every stack element of type $vsms$ the second field of each candidate is
the \emph{elemid} index, which points back to some inner position of the stack; \emph{elemid} is equal to
the current stack position for all the candidates in the closure part of a stack element, and points
to some previous position for the candidates of the base. As in Figure \ref{figExRunningParseTrace}, the
reduction handles are highlighted by means of solid backward pointers, and by a dotted arrow to locate the
candidate to be shifted soon after reducing. Notice that now the arrows span a longer distance than
in Figure \ref{figExRunningParseTrace}, as the \emph{elemid} goes directly to the origin of the reduction handle.
The forward shift arrows are the same as those in Fig. \ref{figExRunningParseTrace} and are here not shown.
\par
The results of the analysis, i.e., the execution order of the reductions and the obtained syntax tree,
are identical to those of Example \ref{exRunningParsingTrace}.
\par
We illustrate the use of the vector-stack on the previous example of a net featuring a convergent edge
(net in Figure \ref{figExELRconvergentGrammarPilotTrace}): the parsing traces are shown
in Figure \ref{figConvergentEdgeNetParseTraceVector}, where the integer pointers are represented
by backward pointing solid arrows.
\end{example}
\subsection{Related work on shift-reduce parsing of \emph{EBNF} grammars.}\label{sectRelatedWorkELR}
Over the years many contributions  have been published to extend Knuth's $LR \, (k)$ method to $EBNF$ grammars. The very number of papers, each one purporting to improve over previous attempts, testifies that no clear-cut optimal solution has been found. Most papers usually start with critical reviews of related proposals, from which we can grasp the difficulties and motivations then perceived. The following discussion particularly draws from the later papers \cite{MorSas01,Kannapin2001,conf/lata/Hemerik09}.
\par
The first dichotomy concerns which format of $EBNF$ specification is taken as input: either a grammar with \emph{regular expressions} in the right parts, or a grammar with \emph{finite-automata} ($FA$) as right parts. Since it is nowadays perfectly clear that r.e. and $FA$ are interchangeable notations for regular languages, the distinction is no longer relevant. Yet some authors  insisted that a language designer should be allowed to specify syntax constructs by arbitrary r.e.'s, even ambiguous ones, which allegedly permit a more flexible mapping from syntax to semantics. In their view, which we do not share, transforming  the original r.e.'s to $DFA$'s is not entirely satisfactory.
\par
Others have imposed restrictions on the r.e.'s, for instance  limiting the depth of Kleene star nesting or forbidding common subexpressions, and so on. Although the original motivation to simplify parser construction has since vanished, it is fair to say that the r.e.'s used in language reference manuals are typically very simple, for the reason of avoiding obscurity.
\par
Others prefer to specify  right parts by using the very readable graphical notations of syntax diagrams, which are a pictorial variant of FA state-transition diagrams. Whether the $FA$'s are deterministic or non- does not really make a difference, either in terms of grammar readability or ease of parser generation. Even when the source specification includes $NFA$'s, it is as simple to transform them into $DFA$'s by the standard construction, as it is to leave the responsibility of removing finite-state non-determinism to the algorithm that constructs the $LR \, (1)$ automaton (pilot).
\par
On the other hand, we have found that an inexpensive normalization of $FA$'s (disallowing reentrance into initial states, Def. \ref{retiRicorsMacch}) pays off in terms of parser construction simplification.
\par
Assuming that the grammar is specified by a net of $FA$'s, two approaches for building a parser have been followed: (A) transform the grammar into a $BNF$ grammar and apply Knuth's $LR \, (1)$ construction, or (B) directly construct an $ELR \, (1)$ parser from the given machine net. It is generally agreed that ``approach (B) is better than approach (A) because the transformation adds inefficiency and makes it harder to determine the semantic structure due to the additional structure added by the transformation'' \cite{MorSas01}. But since approach (A) leverages on existing parser generators such as Bison, it is quite common for language reference manuals featuring syntax chart notations to include also an equivalent $BNF$ $LR \, (1)$ (or even $LALR \, (1)$) grammar. In \cite{Celentano:1981:LPT} a systematic transformation from $EBNF$ to $BNF$ is used to obtain, for an $EBNF$ grammar, an $ELR \, (1)$ parser that simulates the  classical Knuth's parser for the $BNF$ grammar.
\par
The technical difficulty of approach (B), which all authors had to deal with, is how to identify the left end of a reduction handle, since its length is variable and possibly unbounded. A list of different solutions can be found in the already cited surveys. In particular, many algorithms (including ours) use a special shift move, sometimes called \emph{stack-shift}, to record into the stack the left end of the handle when a new computation on a net machine is started.  But, whenever such algorithms permit an initial state to be reentered, a conflict between stack-shift and normal shift is unavoidable, and various  devices have been invented to arbitrate the conflict. Some add read-back states to control how  the parser should dig into the stack \cite{journals/acta/Chapman84,journals/acta/LaLonde79}, while others (e.g., \cite{journals/ipl/SassaNakata87}) use counters for the same purpose, not to mention other proposed devices. Unfortunately it was shown in \cite{journals/ipl/Galvez94,Kannapin2001} that several proposals do not precisely characterize the grammars they apply to, and in some cases may fall into unexpected errors.
\par
Motivated by the mentioned flaws of previous attempts, the paper by \cite{journals/ipl/LeeK97} aims at characterizing the $LR \, (k)$ property for $ECFG$ grammars defined  by a network of FA's. Although their definition is intended to ensure that such grammars ``can be parsed from left to right with a look-ahead of $k$ symbols'', the authors admit that ``the subject of efficient techniques for locating the left end of a handle is beyond the scope of this paper''.
\par
After such a long  history of interesting but non-conclusive proposals, our finding of a rather simple formulation  of the $ELR \, (1)$ condition leading to a naturally corresponding  parser, was rather unexpected. Our definition simply adds the treatment of convergent edges to Knuth's definition. The technical difficulties were well understood since long and we combined existing ideas into a simple and provably correct solution.
Of course, more experimental work would be needed to evaluate the performance of our algorithms on real-sized grammars.
\section{Deterministic top-down parsing}\label{sectDeterministicTopDownParsing}
A simpler and very flexible top-down parsing method, traditionally called $ELL \, (1)$,\footnote{Extended, Left to right, Leftmost, with length of look-ahead equal to one.} applies if an $ELR \, (1)$ grammar satisfies further conditions.  Although less general than the $ELR \, (1)$, this  method has several assets, primarily the ability to  anticipate parsing decisions thus offering better support for syntax-directed translation, and to be implemented by a neat modular structure made of recursive procedures that mirror the graphs of network machines.
\par
In the next sections, our presentation rigorously derives step by step the properties of top-down deterministic parsers, as we add  one by one some simple restrictions to the $ELR( \, 1)$ condition. First, we consider the Single Transition Property and how it simplifies  the shift-reduce parser: the number of m-states is reduced to the number of net states, convergent edges are no longer possible and the chain of stack pointers can be disposed.
\par
Second, we add the requirement that the grammar is not left-recursive and we obtain the traditional predictive top-down parser, which constructs the syntax tree in pre-order.
\par
At last, the direct construction of $ELL \, (1)$ parsers sums up.
\paragraph*{Historical note} In contrast to the twisted  story of $ELR \, (1)$ methods, early efforts to develop top-down parsing algorithms for $EBNF$ grammars have met with remarkable success and we do not need to critically discuss them, but just to cite the main references and to explain why our work adds value to them. Deterministic parsers operating top-down were among the first to be constructed by compilation pioneers, and their theory  for $BNF$ grammars was shortly after developed  by \cite{Rosenkrantz:Stearns:ic:1970,Knuth71}. A sound  method to extend such parsers to $EBNF$ grammars was popularized by \cite{Wirth75} recursive-descent compiler, systematized in the book \cite{LewiDeVlaHuensHuy79}, and  included in widely known compiler textbooks (e.g., \cite{AhoLamSethiUl2006}). However in such books  top-down deterministic parsing is  presented \emph{before} shift-reduce methods and independently of them, presumably because it is easier to understand. On the contrary, this section shows that  top-down parsing for $EBNF$ grammars is a corollary of the  $ELR \, (1)$ parser construction that we have just presented. Of course, for pure $BNF$ grammars the relationship between $LR \, (k)$ and $LL \, (k)$  grammar and language families has been carefully  investigated in the past, see in particular  \cite{Beatty82}. Building on the concept of multiple transitions, which we introduced for $ELR \, (1)$  analysis, we extend Beatty's characterization to the $EBNF$ case and we derive in a minimalist and provably correct way the $ELL \, (1)$ parsing algorithms. As a by-product of our unified approach, we mention in the Conclusion the use of heterogeneous parsers for different language parts.
\subsection{Single-transition property and pilot compaction}\label{subsectSingletonBaseProperty}
Given an  $ELR \, (1)$ net $\mathcal{M}$ and its $ELR \, (1)$ pilot $\mathcal{P}$, recall that a m-state has the multiple-transition property \ref{defConvergence} ($MTP$) if two identically labeled state transitions originate from two candidates present in the m-state. For brevity we also say that such m-state violates the \emph{single-transition property} ($STP$).
The next example illustrates several cases of violation.
\begin{figure}[htb!]
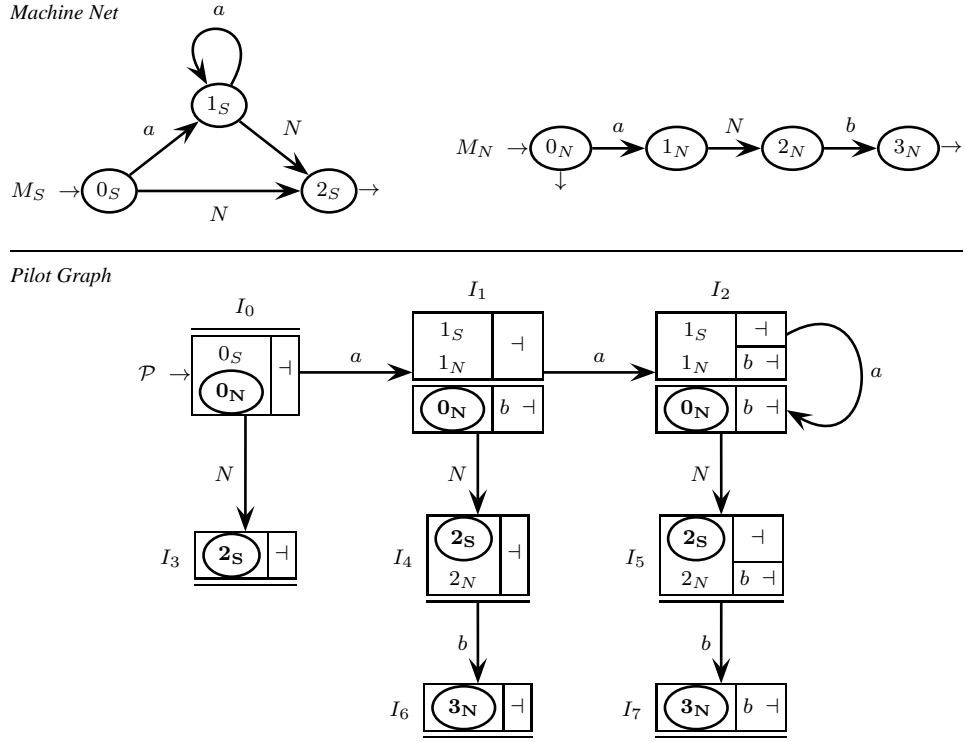

\begin{flushleft}
\emph{Machine Net}
\end{flushleft}
\begin{center}
\vspace{0.25cm}
\begin{tabular}{cp{0.0cm}c}
\begin{minipage}{0.40\textwidth}
\begin{center}
\scalebox{1.0}{
\pspar\psset{border=0pt,nodesep=0pt,colsep=20pt,labelsep=5pt,rowsep=0.5cm}

\begin{psmatrix}

& \ovalnode{1S}{$1_S$} \\

\ovalnode{0S}{$0_S$} && \ovalnode{2S}{$2_S$}

\nput[labelsep=0pt]{180}{0S}{$M_S \ \to$}

\nput[labelsep=0pt]{0}{2S}{$\to$}

\ncline{0S}{1S} \naput{$a$}

\ncline{1S}{2S} \naput{$N$}

\ncline{0S}{2S} \nbput{$N$}

\nccurve[angleA=60,angleB=120,ncurvA=7,ncurvB=7]{1S}{1S} \nbput{$a$}

\end{psmatrix}}
\end{center}
\end{minipage}

&&

\begin{minipage}{0.55\textwidth}
\begin{center}
\scalebox{1.0}{
\pspar\psset{border=0pt,nodesep=0pt,colsep=20pt,labelsep=5pt}
\begin{psmatrix}
\ovalnode{0N}{$0_N$} & \ovalnode{1N}{$1_N$} & \ovalnode{2N}{$2_N$} & \ovalnode{3N}{$3_N$}

\nput[labelsep=0pt]{180}{0N}{$M_N \ \to$}

\nput[labelsep=0pt]{-90}{0N}{$\downarrow$}

\nput[labelsep=0pt]{0}{3N}{$\to$}

\ncline{0N}{1N} \naput{$a$}

\ncline{1N}{2N} \naput{$N$}

\ncline{2N}{3N} \naput{$b$}

\end{psmatrix}}
\end{center}
\end{minipage}
\end{tabular}
\vspace{0.45cm}
\hrule
\begin{flushleft}
\emph{Pilot Graph}
\end{flushleft}
\vspace{0.1cm}
\begin{minipage}{1.0\textwidth}
\begin{center}
\scalebox{1.0}{
\pspar\psset{border=0pt,nodesep=0pt,rowsep=30pt}
\begin{psmatrix}

\rnode{I0}{$\def\arraystretch{1.25}\begin{array}{|c|c|} \hline\hline
0_S & \multirow{2}{*}{$\dashv$} \\
\ovalnode{\relax}{\mathbf{0_N}} & \\ \hline
\end{array}$} &

\rnode{I1}{$\def\arraystretch{1.25}\begin{array}{|c|c|} \hline
1_S & \multirow{2}{*}{$\dashv$} \\
1_N & \\ \hline\hline
\ovalnode{\relax}{\mathbf{0_N}} & b \; \dashv \\ \hline
\end{array}$} &

\rnode{I2}{$\def\arraystretch{1.25}\begin{array}{|c|c|} \hline
1_S & \dashv \\ \cline{2-2}
1_N & b \; \dashv \\ \hline\hline
\ovalnode{\relax}{\mathbf{0_N}} & b \; \dashv \\ \hline
\end{array}$} \\

\rnode{I3}{$\def\arraystretch{1.25}\begin{array}{|c|c|} \hline
\ovalnode{\relax}{\mathbf{2_S}} & \dashv \\ \hline\hline
\end{array}$} &

\rnode{I4}{$\def\arraystretch{1.25}\begin{array}{|c|c|} \hline
\ovalnode{\relax}{\mathbf{2_S}} & \multirow{2}{*}{$\dashv$} \\
2_N & \\ \hline\hline
\end{array}$} &

\rnode{I5}{$\def\arraystretch{1.25}\begin{array}{|c|c|} \hline
\ovalnode{\relax}{\mathbf{2_S}} & \dashv \\ \cline{2-2}
2_N & b \; \dashv \\ \hline\hline
\end{array}$} \\ &

\rnode{I6}{$\def\arraystretch{1.25}\begin{array}{|c|c|} \hline
\ovalnode{\relax}{\mathbf{3_N}} & \dashv \\ \hline\hline
\end{array}$} &

\rnode{I7}{$\def\arraystretch{1.25}\begin{array}{|c|c|} \hline
\ovalnode{\relax}{\mathbf{3_N}} & b \; \dashv \\ \hline\hline
\end{array}$}

\nput[labelsep=5pt]{90}{I0}{$I_0$}

\nput[labelsep=5pt]{90}{I1}{$I_1$}

\nput[labelsep=5pt]{90}{I2}{$I_2$}

\nput[labelsep=5pt]{180}{I3}{$I_3$}

\nput[labelsep=5pt]{180}{I4}{$I_4$}

\nput[labelsep=5pt]{180}{I5}{$I_5$}

\nput[labelsep=5pt]{180}{I6}{$I_6$}

\nput[labelsep=5pt]{180}{I7}{$I_7$}

\nput[labelsep=0pt]{180}{I0}{$\mathcal{P} \; \to$}

\ncline{I0}{I1} \naput{$a$}

\ncline{I1}{I2} \naput{$a$}

\nccurve[angleA=30,angleB=-30,ncurvA=3,ncurvB=3]{I2}{I2} \naput{$a$}

\ncline{I0}{I3} \nbput{$N$}

\ncline{I1}{I4} \nbput{$N$}

\ncline{I2}{I5} \nbput{$N$}

\ncline{I4}{I6} \nbput{$b$}

\ncline{I5}{I7} \nbput{$b$}

\end{psmatrix}}
\end{center}
\end{minipage}
\end{center}
\caption{ $ELR(1)$ net of Ex. \ref{exELRnonsingletonBase} with multiple candidates in the m-state base of $I_1$, $I_2$, $I_4$, $I_5$.}\label{figMultiCandidatesInBase}
\end{figure}
\begin{example}\label{exELRnonsingletonBase} Violations of $STP$.
\par
Three cases are examined. First, grammar $S \to a^\ast N$, $N \to a \; N \; b \; \mid \; \varepsilon$, generating the deterministic non-$LL \, (k)$ language $\set{ \; a^n \, b^m \; \vert \quad n \geq m \geq  0 \; }$, is represented by the net in Fig. \ref{figMultiCandidatesInBase}, top. The presence of two candidates in $I_{1 \vert base}$ (and also in other m-states) reveals that the parser has to carry on two simultaneous attempts at parsing until a reduction takes place, which is unique, since the pilot satisfies the $ELR \, (1)$ condition: there are neither shift-reduction nor reduction-reduction conflicts, nor convergence conflicts.
\par
Second, the net in Figure \ref{figExELRconvergentGrammarPilotTrace} illustrates the case of multiple candidates in the base of a m-state ($I_3$) that is entered by a convergent edge.
\par
Third, grammar $S \to b \, \left( \, a \; \mid \; S \, c \, \right) \; \mid \; a$ has a pilot that violates $STP$, yet it contains only one candidate in each m-state base.
\end{example}
On the other hand, if  every m-state base contains one candidate, the parser configuration space can be reduced as well as the range of parsing choices. Furthermore, $STP$ entails that there are no convergent edges in the pilot.
\par
Next we show that m-states having the same kernel, qualified for brevity as \emph{kernel-identical}, can be safely coalesced, to obtain a smaller pilot that is equivalent to the  original one and bears closer resemblance to the machine net. We hasten to say that such transformation does not work in general for an $ELR \, (1)$ pilot, but it will be proved to be correct under the $STP$ hypothesis.
\subsubsection{Merging kernel-identical m-states}\label{sectPilotCompaction}
 The \emph{merging} operation coalesces two kernel-identical m-states $I_1$ and $I_2$, suitably adjusts the pilot graph, then possibly merges more kernel-identical m-states. It is defined as follows:
\begin{algorithm}\label{algMergemMacrostates}$Merge \, ( \, I_1, \, I_2 \, )$
\par
\begin{enumerate}
\item replace $I_1$, $I_2$ by a new kernel-identical m-state, denoted by $I_{1, \, 2}$, where each look-ahead set is the union of the corresponding ones in the merged m-states:
\[
\left\langle p, \, \pi \right\rangle \in I_{1, \, 2} \iff \left\langle p, \, \pi_1 \right\rangle \in I_1 \ \text{and} \ \left\langle p, \, \pi_2 \right\rangle \in I_2 \ \text{and} \ \pi = \pi_1 \cup \pi_2
\]
\item the m-state $I_{1, \, 2}$ becomes the target for all the edges that entered $I_1$ or $I_2$:
\[
I \stackrel {X} {\longrightarrow} I_{1, \, 2} \iff I \stackrel X \longrightarrow I_{1} \ \text{or} \ I \stackrel X \longrightarrow I_{2}
\]	
\item for each pair of  edges from $I_1$ and $I_2$, labeled $X$, the target m-states (which clearly are kernel-identical) are merged:
\[
\text{if} \ \vartheta \, ( \, I_1, \, X \, ) \neq \vartheta \, ( \, I_2, \, X \, ) \ \text{then call} \ Merge \, \big( \, \vartheta \, (I_1, \, X \, ), \, \vartheta \, ( \, I_2, \, X \, ) \big)
\]
\end{enumerate}
\end{algorithm}
Clearly the merge operation  terminates with a graph with fewer nodes. The set of all kernel-equivalent m-states is an equivalence class. By applying the \emph{Merge} algorithm to the members of every equivalence class, we construct a new  graph, to be called the \emph{compact pilot} and denoted by $\mathcal{C}$.\footnote{For the compact pilot the number of m-states is the same as for the $LR \, (0)$ and $LALR \, (1)$ pilots, which are historical simpler variants of $LR \, (1)$ parsers not considered here, see for instance \cite{Crespi09}. However neither $LR \, (0)$ nor  $LALR \, (1)$ pilots have to comply with the $STP$ condition.}
%We observe that the above procedure  is similar to the one for minimizing  deterministic finite automata.
\par
\begin{example}\label{exCompactLL1pilot}Compact  pilot.\\
We reproduce in Figure \ref{figCompactELL1versusELR1pilot} the machine net, the original $ELR \, (1)$ pilot and, in the bottom part, the compact pilot, where for convenience the m-states have been renumbered. Notice that the look-ahead sets have expanded: e.g., in $K_{1_T}$ the look-ahead in the first row is the union of the corresponding look-aheads in the merged m-states $I_3$ and $I_6$. We are going to prove that this loss of precision is not harmful for parser determinism thanks to the stronger constraints imposed by $STP$.
\end{example}
\begin{figure}[h!]
\begin{flushleft}
\emph{Machine Net}
\end{flushleft}
\begin{center}
\vspace{0.5cm}
\begin{tabular}{cp{0.5cm}c}
\begin{minipage}{0.3\textwidth}
\begin{center}
\scalebox{1.0}{
\pspar\psset{border=0pt,nodesep=0pt,labelsep=5pt,arrows=->,colsep=0.8cm,rowsep=10pt}
\begin{psmatrix}

\ovalnode{0E}{$0_E$} & \ovalnode{1E}{$1_E$}

\nput[labelsep=0pt]{180}{0E}{$M_E \ \to$}

\nput[labelsep=0pt]{-90}{1E}{$\downarrow$}

\nput[labelsep=0pt]{-90}{0E}{$\downarrow$}

\ncline{0E}{1E} \naput{$T$}

\nccurve[angleA=-30,angleB=30,ncurvA=7,ncurvB=7]{1E}{1E} \nbput{$T$}

\end{psmatrix}}
\end{center}
\end{minipage}

&&

\begin{minipage}{0.65\textwidth}
\begin{center}
\scalebox{1.0}{
\pspar\psset{border=0pt,nodesep=0pt,labelsep=5pt,arrows=->,colsep=0.8cm,rowsep=10pt}
\begin{psmatrix}

\ovalnode{0T}{$0_T$} & \ovalnode{1T}{$1_T$} & \ovalnode{2T}{$2_T$} & \ovalnode{3T}{$3_T$}

\nput[labelsep=0pt]{180}{0T}{$M_T \ \to$}

\nput[labelsep=0pt]{0}{3T}{$\to$}

\ncline{0T}{1T} \nbput{$($}

\ncline{1T}{2T} \nbput{$E$}

\ncline{2T}{3T} \nbput{$)$}

\ncarc[arcangle=45]{0T}{3T} \naput{$a$}
\end{psmatrix}}
\end{center}
\end{minipage}
\end{tabular}
\end{center}
\vspace{0.25cm}
\hrule
\begin{flushleft}
\emph{Pilot Graph}
\end{flushleft}
\vspace{0.5cm}
%PILOTA
\begin{center}
\scalebox{0.85}{
\pspar\psset{arrows=->,border=0pt,nodesep=0pt,rowsep=1.0cm,colsep=1.5cm}
\begin{psmatrix}
\rnode{I0}{$\def\arraystretch{1.25}\begin{array}{|c|c|} \hline\hline
\ovalnode{\relax}{\mathbf{0_E}} & \dashv \\ \cline{2-2}
0_T & a \; ( \; \dashv \\ \hline
\end{array}$}
&
\rnode{I1}{$\def\arraystretch{1.25}\begin{array}{|c|c|} \hline
\ovalnode{\relax}{\mathbf{1_E}} & \dashv \\ \hline\hline
0_T & a \; ( \; \dashv \\ \hline
\end{array}$} \\

\rnode{I3}{$\def\arraystretch{1.25}\begin{array}{|c|c|} \hline
1_T & a \; ( \; \dashv \\
\hline \hline \ovalnode{o}{\mathbf{0_E}} & ) \\ \cline{2-2}
0_T & a \; ( \; ) \\ \hline
\end{array}$}
&
\rnode{I8}{$\def\arraystretch{1.25}\begin{array}{|c|c|} \hline
2_T & a \; ( \; \dashv \\ \hline\hline
\end{array}$}
&
\rnode{I2}{$\def\arraystretch{1.25}\begin{array}{|c|c|} \hline
\ovalnode{\relax}{\mathbf{3_T}} & a \; ( \; \dashv \\ \hline\hline
\end{array}$} \\
&
\rnode{I4}{$\def\arraystretch{1.25}\begin{array}{|c|c|} \hline
\ovalnode{\relax}{\mathbf{1_E}} & ) \\ \hline\hline
0_T & a \; ( \; ) \\ \hline
\end{array}$} \\

\rnode{I6}{$\def\arraystretch{1.25}\begin{array}{|c|c|} \hline
1_T & a \; ( \; ) \\ \hline \hline
\ovalnode{\relax}{\mathbf{0_E}} & ) \\ \cline{2-2}
0_T & a \; ( \; ) \\ \hline
\end{array}$}
&
\rnode{I7}{$\def\arraystretch{1.25}\begin{array}{|c|c|} \hline
2_T & a \; ( \; ) \\ \hline\hline
\end{array}$}
&
\rnode{I5}{$\def\arraystretch{1.25}\begin{array}{|c|c|} \hline
\ovalnode{\relax}{\mathbf{3_T}} & a \; ( \; ) \\ \hline\hline
\end{array}$}

\nput[labelsep=5pt]{90}{I0}{$I_0$}

\nput[labelsep=5pt]{90}{I1}{$I_1$}

\nput[labelsep=5pt]{0}{I2}{$I_2$}

\nput[labelsep=5pt]{180}{I3}{$I_3$}

\nput[labelsep=5pt]{90}{I4}{$I_4$}

\nput[labelsep=5pt]{0}{I5}{$I_5$}

\nput[labelsep=5pt]{180}{I6}{$I_6$}

\nput[labelsep=5pt]{-90}{I7}{$I_7$}

\nput[labelsep=5pt]{-90}{I8}{$I_8$}

\nput[labelsep=0pt]{180}{I0}{$\mathcal{P} \; \to$}

\ncline{I0}{I1} \naput{$T$}

\nccurve[angleA=45,angleB=90,ncurvA=1,ncurvB=0.8]{I0}{I2} \aput(0.85){$a$}

\nccurve[angleA=-30,angleB=130,ncurvA=0.5,ncurvB=0.5]{I1}{I2} \naput{$a$}

\nccurve[angleA=-20,angleB=20,ncurvA=3,ncurvB=3]{I1}{I1} \nbput{$T$}

\ncline{I0}{I3} \nbput{$($}

\nccurve[angleA=-160,angleB=45,ncurvA=0.5,ncurvB=0.5]{I1}{I3} \nbput{$($}

\ncline{I3}{I8} \nbput{$E$}

%\ncarc[arcangle=25]{I3}{I2} \naput{$a$}

\ncline{I8}{I2} \nbput{$)$}

\nccurve[angleA=-20,angleB=90,ncurvA=1,ncurvB=0.8]{I3}{I5} \aput(0.85){$a$}

\nccurve[angleA=-30,angleB=130,ncurvA=0.5,ncurvB=0.5]{I4}{I5} \naput{$a$}

\nccurve[angleA=-45,angleB=145,ncurvA=0.5,ncurvB=0.5]{I3}{I4} \nbput{$T$}

\ncline{I3}{I6} \nbput{$($}

\nccurve[angleA=-20,angleB=20,ncurvA=3,ncurvB=3]{I4}{I4} \nbput{$T$}

\ncarc[arcangle=-5]{I6}{I4} \bput(0.75){$T$}

\ncarc[arcangle=-15]{I4}{I6} \bput(0.75){$($}

\nccurve[angleA=-70,angleB=-110,ncurvA=3,ncurvB=3]{I6}{I6} \aput(0.15){$($}

\ncline{I6}{I7} \nbput{$E$}

\ncarc[arcangle=25]{I6}{I5} \naput{$a$}

\ncline{I7}{I5} \nbput{$)$}

\end{psmatrix}}
\end{center}
\vspace{0.75cm}
\hrule
\begin{flushleft}
\emph{Compact Pilot Graph}
\end{flushleft}
\vspace{0.5cm}
%minimal pilot
\begin{center}
\pspar\psset{arrows=->,border=0pt,nodesep=0pt,rowsep=1.0cm,colsep=1.5cm}
\scalebox{0.85}{
\begin{psmatrix}
\rnode{I0}{$\def\arraystretch{1.25}\begin{array}{|c|c|} \hline\hline
\ovalnode{o}{\mathbf{0_E}} & \dashv \\ \cline{2-2}
0_T & a \; ( \; \dashv \\ \hline
\end{array}$}

&

\rnode{I14}{$\def\arraystretch{1.25}\begin{array}{|c|c|} \hline
\ovalnode{o}{\mathbf{1_E}} & ) \;\dashv \\ \hline\hline
0_T & a \; ( \; ) \; \dashv \\ \hline
\end{array}$} \\

\rnode{I36}{$\def\arraystretch{1.25}\begin{array}{|c|c|} \hline
1_T & a \; ( \; ) \;\dashv \\ \hline\hline
\ovalnode{o}{\mathbf{0_E}} & ) \\ \cline{2-2}
0_T & a \; ( \; ) \\ \hline
\end{array}$}

&

\rnode{I78}{$\def\arraystretch{1.25}\begin{array}{|c|c|} \hline
2_T & a \; ( \; ) \; \dashv \\ \hline\hline
\end{array}$}

&

\rnode{I25}{$\def\arraystretch{1.25}\begin{array}{|c|c|} \hline
\ovalnode{o}{\mathbf{3_T}} & a \; ( \; ) \; \dashv \\ \hline\hline
\end{array}$}

\nput[labelsep=5pt]{90}{I0}{$K_{0_E}\equiv I_0$}

\nput[labelsep=5pt]{90}{I14}{$K_{1_E}\equiv I_{1,4}$}

\nput[labelsep=5pt]{-90}{I25}{$K_{3_T}\equiv I_{2,5}$}

\nput[labelsep=5pt]{180}{I36}{$K_{1_T}\equiv I_{3,6}$}

\nput[labelsep=5pt]{-90}{I78}{$K_{2_T}\equiv I_{7,8}$}

\nput[labelsep=0pt]{180}{I0}{$\mathcal{L}\to$}

\ncline{I0}{I14} \naput{$T$}

\nccurve[angleA=45,angleB=90,ncurvA=1,ncurvB=0.8]{I0}{I25} \aput(0.85){$a$}

\nccurve[angleA=-30,angleB=130,ncurvA=0.5,ncurvB=0.5]{I14}{I25} \naput{$a$}

\nccurve[angleA=-18,angleB=18,ncurvA=3,ncurvB=3]{I14}{I14} \nbput{$T$}

\ncline{I0}{I36} \nbput{$($}

\ncarc[arcangle=-5]{I36}{I14} \bput(0.75){$T$}

\ncarc[arcangle=-15]{I14}{I36} \bput(0.75){$($}

\ncline{I36}{I78} \nbput{$E$}

\ncline{I78}{I25} \nbput{$)$}

\nccurve[angleA=-70,angleB=-110,ncurvA=3,ncurvB=3]{I36}{I36} \aput(0.15){$($}

\ncarc[arcangle=25]{I36}{I25} \naput{$a$}

\end{psmatrix}}
\vspace{0.5cm}
\end{center}
\caption{From top to bottom: machine net $\mathcal{M}$, $ELR \, (1)$ pilot graph $\mathcal{P}$ and compact pilot $\mathcal{C}$; the equivalence classes of m-states are: $\set{ \, I_0 \, }$, $\set{ \, I_1, \, I_4 \, }$, $\set{ \, I_2, \, I_5 \, }$, $\set{ \, I_3, \, I_6 \, }$ and $\set{ \, I_7, \, I_8 \, }$; the m-states of $\mathcal{C}$ are named $K_{0_E}$, \ldots, $K_{3_T}$ to evidence their correspondence with the states of net $\mathcal{M}$.}\label{figCompactELL1versusELR1pilot}
\end{figure}
\par
We anticipate from Section \ref{sectSimplifELL1parserConstr} that the compact pilot can be directly constructed from the machine net, saving some work to compute the larger $ELR \, (1)$ pilot and then to merge kernel-identical nodes.
\subsubsection{Properties of compact pilots}
We are going to show that we can safely use the compact pilot as parser controller.
\begin{property}\label{propCompactPilot}
Let $\mathcal{P}$ and $\mathcal{C}$ be respectively the $ELR \, (1)$  pilot and the compact pilot of a net $\mathcal{M}$ satisfying $STP$\footnote{A weaker hypothesis would suffice: that every m-state has at most one candidate in its base, but for simplicity we have preferred to assume also that convergent edges  are not present.}. The $ELR \, (1)$ parsers controlled by $\mathcal{P}$ and by $\mathcal{C}$ are equivalent, i.e., they recognize language $L \, (\mathcal{M})$ and construct the same syntax tree for every $x \in L \, (\mathcal{M})$.
\end{property}
\begin{proof}
We show that, for every m-state, the compact pilot has no $ELR \, (1)$ conflicts. Since the merge operation does not change the kernel, it is obvious that every $\mathcal{C}$ m-state satisfies  $STP$. Therefore non-initial reduce-reduce conflicts can be excluded, since they involve two candidates in a base,  a condition which is ruled out by $STP$.
\par
Next, suppose by contradiction that a reduce-shift conflict between a final non-initial state $q_A$ and a shift of $q_{B} \stackrel a \to q'_B$, occurs in $I_{1, \, 2}$, but neither in $I_1$ nor in $I_2$. Since $\langle q_A, \, a \rangle$ is in the base of $I_{1, \, 2}$, it must also be in the bases of $I_1$ or $I_2$, and an $a$ labeled edge originates from $I_1$ or $I_2$. Therefore the conflict was already there, in one or both m-states $I_1$, $I_2$.
\par
Next, suppose by contradiction that $I_{1, \, 2}$ has a new initial reduce-shift conflict between an outgoing $a$ labeled edge and an initial-final candidate $\langle 0_{A}, \, a \rangle$. By definition of \emph{merge}, $\langle 0_{A}, \, a \rangle$ is already in the look-ahead set of, say, $I_1$. Moreover, for any two kernel-identical m-states $I_1$ and $I_2$, for any symbol $X \in \Sigma \, \cup \, V$, either both $\vartheta \, (I_1, \, X) = I'_1$ and $\vartheta \, (I_2, \, X) = I'_2$ are defined or neither one, and the m-states $I'_1$, $I'_2$ are kernel-identical. Therefore, an $a$ labeled edge originates from $I_1$, which thus has a reduce-shift conflict, a contradiction.
\par
Last, suppose by contradiction that $I_{1, \, 2}$ contains a new initial reduce-reduce conflict between $0_{A}$ and $0_{B}$ for some terminal character $a$ that is in both look-ahead sets. Clearly in one of the merged m-states, say $I_1$, there are (in the closure) the candidates:
\[
\langle 0_{A}, \, \pi_1 \rangle \ \text{with} \ a \in \pi_1 \quad \text{and} \quad \langle 0_{B}, \, \rho_1 \rangle \ \text{with} \ a \not \in \rho_1
\]
and in $I_2$ there are (in the closure) the candidates:
\[
\langle 0_{A}, \, \pi_2 \rangle \ \text{with} \ a \not \in \pi_2 \quad \text{and} \quad \langle 0_{B}, \, \rho_2 \rangle \ \text{with} \ a \in \rho_2
\]
To show the contradiction, we recall how look-aheads are computed. Let the bases be respectively $\langle q_{C}, \, \sigma_1 \rangle$ for $I_1$ and $\langle q_{C}, \, \sigma_2 \rangle$ for $I_2$.
Then any character in, say, $\pi_1$ comes in two possible ways: 1) it is present in $\sigma_1$, or 2) it is a character that follows some state that is in the closure of $I_1$. Focusing on character $a$, the second possibility is excluded because $a \not \in \pi_2$, whereas the look-ahead elements brought by case 2) are necessarily the same for m-states $I_1$ and $I_2$.
It remains that the presence of  $a$ in $\pi_1$ and in $\rho_2$ comes from its presence in $\sigma_1$ and in $\sigma_2$. Hence $a$ must be also in $\pi_2$, a contradiction.
\par
The parser algorithms differ only in their controllers, $\mathcal{P}$ and $\mathcal{C}$. First, take a string accepted by the parser controlled by $\mathcal{P}$. Since any m-state created by \emph{merge} encodes exactly the same cases for reduction and for shift as the original merged m-states, the parsers will perform exactly the same moves for both pilots. Moreover, the chains of candidate identifiers are clearly identical since the candidate offset are not affected by \emph{merge}. Therefore, at any time the parser stacks store the same elements, up to the merge relation, and the compact parser recognizes the same strings and constructs the same tree.
\par
Second, suppose by contradiction that an illegal string is recognized using the compact parser. For this string consider the first parsing time where $\mathcal{P}$ stops in error in m-state $I_1$, whereas $\mathcal{C}$ is able to move from $I_{1, \, 2}$.
\par
If the move is a terminal shift, the same shift is necessarily present in $I_1$. If it is a reduction, since the candidate identifier chains are identical, the same string is reduced to the same nonterminal. Therefore also the following nonterminal shift operated by $\mathcal{C}$ is legal also for $\mathcal{P}$, which is a contradiction. \qed
\end{proof}
We have thus established that the parser controlled by the compact pilot is equivalent to the original one.
\subsubsection{Candidate identifiers or pointers unnecessary}
Thanks to the $STP$ property, the parser can be simplified to remove the  need for  $cid$'s (or stack pointers). We recall that a $cid$ was needed to find the reach of a non-empty reduction move into the stack: elements were popped until the $cid$ chain reached an initial state, the end-of-list sentinel. Under $STP$ hypothesis, that test is now replaced by a simpler device, to be later incorporated in the final $ELL \, (1)$ parser (Alg. \ref{AlgorTopdownPDA}).
\par
With reference to Alg. \ref{AlgELR1parser}, only shift and reduction moves are modified. First, let us focus on the situation when the old parser, with $J$ on top of stack and $J_{ \vert base} = \langle q_A, \, \pi \rangle$, performs the shift of $X$ (terminal or non-) from state $q_A$, which is necessarily non-initial; the shift would require to compute and record a non-null $cid$ into the $sms$ $J'$ to be pushed on stack. In the same situation, the pointerless parser cancels from the top-of-stack element $J$ all candidates other than $q_A$, since they correspond to discarded parsing alternatives. Notice that the canceled candidates are necessarily in $J_{ \vert closure}$, hence they contain only initial states. Their elimination from the stack allows the parser to uniquely identify the reduction to be made when a final state of machine $M_A$ is entered, by a simple rule: keep popping the stack until the first occurrence of initial state $0_A$ is found.
\par
Second, consider a shift from an initial state $\langle 0_A, \, \pi\rangle$, which is necessarily in $J_{ \vert closure}$. In this case the pointerless parser leaves $J$ unchanged and pushes $\vartheta \, (J, \, X)$ on the stack. The other candidates present in $J$ cannot be canceled because they may be the origin of future nonterminal shifts.
\par
Since $cid$'s are not used by this parser, a stack element is identical to an m-state (of the compact pilot). Thus a shift move first updates the top of stack element, then it pushes the input token and the next m-state. We specify only the moves that differ from Alg. \ref{AlgELR1parser}.
\begin{algorithm}\label{algPointerless}Pointerless parser $\mathcal{A}_{PL}$.
\par
Let the pilot be compacted; m-states are denoted $K_i$ and stack symbols $H_i$; the set of candidates of $H_i$ is weakly included in $K_i$.
\begin{description}
\item[\textit{Shift move}] Let the current character be $a$, $H$ be the top of stack element, containing candidate
$\langle q_A, \, \pi \rangle$. Let ${q_A} \stackrel a \to {q_A}'$ and $\vartheta \, (K, \, a) = K'$ be respectively
the state transition and m-state transition, to be applied. The shift move does:
\begin{enumerate}
\item if $q_A\in K_{ \vert base}$ (i.e., $q_A$ is not initial), eliminate all other candidates from $H$,
i.e., set $H$ equal to $K_{ \vert base}$
\item push $a$ on stack and get next token
\item push $H' = K'$ on the stack
\end{enumerate}
\item[\textit{Reduction move} (non-initial state)] Let the stack be $H[0] \, a_1 \, H[1] \, a_2 \, \ldots \, a_k \, H[k]$.
\par
Assume that the pilot chooses the reduction candidate $ \langle q_A, \, \pi \rangle \in K[k]$, where $q_A$ is a final but non-initial state. Let $H[h]$ be the topmost stack element such that $0_A \in K[h]_{ \vert kernel}$. The move does:
\begin{enumerate}
\item grow the syntax forest by applying the reduction $a_{h+1} \, a_{h+2} \, \ldots \, a_k \leadsto  A$ and pop the stack
symbols $H[k]$, $a_k$, $H[k-1]$, $a_{k-1}$, \ldots, $H[h+1]$, $a_{h+1}$
\item execute the nonterminal shift move  $\vartheta \, (K[h], \, A)$
\end{enumerate}
\item[\textit{Reduction move} (initial state)] It differs from the preceding case in that, for the chosen reduction
candidate $\langle 0_A, \, \pi \rangle$, the state is initial and final. Reduction $\varepsilon \leadsto A$ is applied
to grow the syntax forest. Then the parser performs the nonterminal shift move $\vartheta \, (K[k], \, A)$.
\item[\textit{Nonterminal shift move}] It is the same as a shift move, except that the shifted symbol is a nonterminal.
The only difference is that the parser does not read the next input token at line (2) of \textit{Shift move}.
\end{description}
\end{algorithm}
Clearly this reorganization removes the need of $cid$'s or pointers while preserving correctness.
\begin{property}\label{propPointerlessParser}
If the $ELR \, (1)$ pilot of an $EBNF$ grammar or machine net satisfies the $STP$ condition, the pointerless parser $\mathcal{A}_{PL}$ of Alg. \ref{algPointerless} is equivalent to the $ELR \, (1)$ parser $\mathcal{A}$ of Alg. \ref{AlgELR1parser}.
\end{property}
\begin{proof}
There are two parts to the proof, both straightforward to check. First, after parsing the same string, the stacks of  parsers $\mathcal{A}$ and  $\mathcal{A}_{PL}$ contain the same number $k$ of stack elements, respectively $J[0] \, \ldots \, J[k]$ and $K[0] \, \ldots \, K[k]$, and for every pair of corresponding elements, the set of states included in $K[i]$ is a subset of the set of states included in $J[i]$
because Alg. \ref{algPointerless} may have discarded a few candidates. Furthermore, we claim the next relation for any pair of stack elements at position $i$ and $i-1$.
\begin{gather*}
\text{in} \ J[i] \ \text{candidate} \ \langle q_A, \, \pi, \, \sharp j \rangle \ \text{points to} \ \langle p_A, \, \pi, \, \sharp l \rangle \ \text{in} \ J[i-1]_{ \vert base}
\\
\iff
\\
\langle q_A, \, \pi \rangle \in K[i] \ \text{and the only candidate in} \ K[i-1] \ \text{is} \ \langle p_A, \, \pi \rangle
\end{gather*}
\begin{gather*}
\text{in} \ J[i] \ \text{candidate} \ \langle q_A, \, \pi, \, \sharp j \rangle \ \text{points to} \ \langle 0_A, \, \pi, \, \bot \rangle \ \text{in} \ J[i-1]_{ \vert closure}
\\
\iff
\\
\langle q_A, \, \pi \rangle \in K[i] \ \text{and} \ K[i-1] \ \text{equals the projection of} \ J[i-1] \ \text{on} \ \langle \text{state}, \, \text{look-ahead} \rangle
\end{gather*}
Then by the specification of reduction  moves, $\mathcal{A}$ performs reduction:
\[
a_{h+1} \, a_{h+2} \, \ldots \, a_k \leadsto A \ \iff \ \mathcal{A}_{PL} \ \text{performs the same reduction}
\]
\end{proof}
\begin{example}Pointerless parser trace.
\par
Such a parser is characterized by having in each stack element only one non-initial machine state, plus possibly a few initial ones. Therefore the parser explores only one possible
reduction at a time and candidate pointers are not needed.
\par
Given the input string $( \; ( \; ) \; a \; ) \dashv$, Figure \ref{figExPointerlessParserTrace} shows the execution trace of a pointerless parser for  the same input as in Figure \ref{figExRunningParseTrace} (parser using $cid$'s). The graphical conventions are unchanged: the m-state (of the compact pilot) in each cell is framed, e.g., $K_{0_S}$ is denoted \framebox{$0_S$}, etc.; the final candidates are encircled, e.g., \ovalnode{\relax}{$3_T$}, etc.; and the look-aheads are omitted to avoid clogging. So a candidate appears as a pure machine state. Of course, here there are no pointers;  instead, the initial candidates canceled by Alg. \ref{algPointerless} from a m-state, are striked out. We observe that upon starting a reduction, the initial state of the active machine might in principle show up in a few stack elements to be popped. For instance the first (from the Figure top) reduction $( \; E \; ) \leadsto T$ of machine $M_T$, pops three stack elements, namely $K_{3_T}$, $K_{2_T}$ and $K_{1_T}$, and the one popped last, i.e., $K_{1_T}$, contains a striked out candidate $0_T$ that would be initial for machine $M_T$. But the real initial state for the reduction remains instead unstriked below in the stack in  element $K_{1_T}$, which in fact is not popped and thus is the origin of the shift on $T$ soon after executed.
\begin{figure}[h!]
\begin{center}
\scalebox{0.725}{\def\arraystretch{1.4}
\begin{tabular}{||c||c|c|c|c|c|c||m{67pt}||} \hline\hline
\multirow{2}{*}{\textbf{stack base}} & \multicolumn{6}{|c||}{\textbf{string to be parsed {\rm (with end-marker)} and stack contents}} & \multirow{2}{*}{\textbf{effect after}} \\ \cline{2-7}
& 1 & 2 & 3 & 4 &	5 & &  \\ \hline\hline
\multirow{44}{*}{$\framebox{$0_E$} \ \begin{array}{c} \ovalnode{\relax}{0_E} \\ 0_T \end{array}$} & $($ & $($ & $)$ & $a$ & $)$ & $\dashv$ & \\ \cline{2-6}
& \multicolumn{6}{|c||}{\cellcolor[gray]{0.8} $\begin{array}{c} \\ \\ \\ \end{array}$} & initialisation \par of the stack \\ \cline{2-8}
& $($ & $($ & $)$ & $a$ &	$)$ & $\dashv$ & \\ \cline{2-6}
& $\framebox{$1_T$} \ \begin{array}{c} 1_T \\ \hline \ovalnode{\relax}{0_E} \\ 0_T \end{array}$
& \multicolumn{5}{|c||}{\cellcolor[gray]{0.8}} & shift on $($ \\ \cline{2-8}
& $($ &	$($ & $)$ & $a$ & $)$ & $\dashv$ & \\ \cline{2-6}
& $\framebox{$1_T$} \ \begin{array}{c} 1_T \\ \hline \ovalnode{\relax}{0_E} \\ 0_T \end{array}$
& $\framebox{$1_T$} \ \begin{array}{c} 1_T \\ \hline \ovalnode{\relax}{0_E} \\ 0_T \end{array}$
& \multicolumn{4}{|c||}{\cellcolor[gray]{0.8}} & shift on $($ \\ \cline{2-8}
& $($ &	$($ \qquad \qquad $E$ & $)$ & $a$ & $)$ & $\dashv$ & \\ \cline{2-6}
& $\framebox{$1_T$} \ \begin{array}{c} 1_T \\ \hline \ovalnode{\relax}{0_E} \\ 0_T \end{array}$	
& $\framebox{$1_T$} \ \begin{array}{c} 1_T \\ \hline \ovalnode{\relax}{0_E} \\ 0_T \end{array}$ & \multicolumn{4}{|c||}{\cellcolor[gray]{0.8}} & reduction $\varepsilon \leadsto E$ \\ \cline{2-8}
& $($ &	$($ \qquad \qquad $E$ & $)$ & $a$ & $)$ & $\dashv$ & \\ \cline{2-6}
& $\framebox{$1_T$} \ \begin{array}{c} 1_T \\ \hline \ovalnode{\relax}{0_E} \\ 0_T \end{array}$ & $\framebox{$1_T$} \ \begin{array}{c} 1_T \\ \hline \msout{0_E} \\ \msout{0_T} \end{array}$ \ \vline \ \ \ $\framebox{$2_T$} \ \begin{array}{c} 2_T \end{array}$ & $\framebox{$3_T$} \ \begin{array}{c} \ovalnode{\relax}{3_T} \end{array}$ & \multicolumn{3}{|c||}{\cellcolor[gray]{0.8}} & shifts on $E$ and $)$ \\ \cline{2-8}
& $($ &	\multicolumn{2}{|c|}{$T$} & $a$ & $)$ & $\dashv$ & \\ \cline{2-6}
& $\framebox{$1_T$} \ \begin{array}{c} 1_T \\ \hline \ovalnode{\relax}{0_E} \\ 0_T \end{array}$ & \multicolumn{2}{|c|}{$\framebox{$1_E$} \ \begin{array}{c} \ovalnode{\relax}{1_E} \\ \hline 0_T \\ \end{array}$} & \multicolumn{3}{|c||}{\cellcolor[gray]{0.8}} & reduction $(E) \leadsto T$ \par and shift on $T$ \\ \cline{2-8}
& $($ &	\multicolumn{2}{|c|}{$T$} & $a$ & $)$ & $\dashv$ & \\ \cline{2-6}
& $\framebox{$1_T$} \ \begin{array}{c} 1_T \\ \hline \ovalnode{\relax}{0_E} \\ 0_T \end{array}$ & \multicolumn{2}{|c|}{$\framebox{$1_E$} \ \begin{array}{c} \ovalnode{\relax}{1_E} \\ \hline 0_T \\ \end{array}$} & $\framebox{$3_T$} \ \begin{array}{c} \ovalnode{\relax}{3_T} \end{array}$ & \multicolumn{2}{|c||}{\cellcolor[gray]{0.8}} & shift on $a$ \\ \cline{2-8}
& $($ &	\multicolumn{2}{|c|}{$T$} & $T$ & $)$ & $\dashv$ & \\ \cline{2-6}
& $\framebox{$1_T$} \ \begin{array}{c} 1_T \\ \hline \ovalnode{\relax}{0_E} \\ 0_T \end{array}$ & \multicolumn{2}{|c|}{$\framebox{$1_E$} \ \begin{array}{c} \ovalnode{\relax}{1_E} \\ \hline \msout{0_T} \\ \end{array}$} & $\framebox{$1_E$} \ \begin{array}{c} \ovalnode{\relax}{1_E} \\ \hline 0_T \\ \end{array}$ & \multicolumn{2}{|c||}{\cellcolor[gray]{0.8}} & reduction $a \leadsto T$ \par and shift on $T$ \\ \cline{2-8}
& $($ &	\multicolumn{3}{|c|}{$E$} & $)$ & $\dashv$ & \\ \cline{2-6}
& $\framebox{$1_T$} \ \ \begin{array}{c} 1_T \\ \hline \msout{0_E} \\ \msout{0_T} \end{array}$ & \multicolumn{3}{|c|}{$\framebox{$2_7$} \ \begin{array}{c} 2_T \\ \end{array}$} & $\framebox{$3_7$} \ \begin{array}{c} \ovalnode{\relax}{3_T} \\ \end{array}$ & \cellcolor[gray]{0.8} & reduction $T T \leadsto E$ \par and shifts on $E$ and $)$ \\ \cline{2-8}
& \multicolumn{5}{|c|}{$T$} & $\dashv$ & \\ \cline{2-6}
& \multicolumn{5}{|c|}{$\framebox{$1_E$} \ \def\arraystretch{2.0} \begin{array}{c} \ovalnode{\relax}{1_E} \\ \hline 0_T \end{array}$} & \cellcolor[gray]{0.8} & reduction $( E ) \leadsto T$ \par and shift on $T$ \\ \cline{2-8}
& \multicolumn{5}{|c|}{$E$} & $\dashv$ & \\ \cline{2-6}
& \multicolumn{6}{|c||}{\cellcolor[gray]{0.8} $\begin{array}{c} \\ \\ \\ \end{array}$} & reduction $T \leadsto E$ \par and accept without \par shitting on $E$ \\ \hline\hline
\end{tabular}
}
\end{center}
\caption{Steps of the pointerless parsing algorithm $\mathcal{A_{PL}}$; the candidates are canceled by the shift moves as explained in Algorithm \ref{algPointerless}.}
\label{figExPointerlessParserTrace}
\end{figure}
\par
 We also point out that Alg. \ref{algPointerless} \emph{does not} cancel any initial candidates from a stack element, if a shift move is executed from
one such candidate (see the \emph{Shift move} case of Alg. \ref{algPointerless}). The motivation for not canceling is twofold. First, these candidates will not cause any early stop of the series of pop moves in the reductions that may come later, or said differently, they will not break any reduction handle. For instance the first (from the Figure top) shift on $($ of machine $M_T$, keeps both initial candidates $0_E$ and $0_T$ in the stack m-state $K_{1_T}$, as the shift originates from the initial candidate $0_T$. This candidate will instead be canceled (i.e., it will show striked) when a shift on $E$ is executed (soon after reduction $T \; T \leadsto E$), as the shift originates from the non-initial candidate $1_T$. Second, some of such initial candidates may be needed for a subsequent nonterminal shift move.
\end{example}
To sum up, we have shown that condition  $STP$ permits to construct a simpler shift-reduce parser, which has a reduced stack alphabet and does not need pointers to manage reductions. The same parser can be further simplified, if we make another hypothesis: that the grammar is not left-recursive.
\subsection{$\emph{ELL} \, (1)$ condition}\label{subsectELL(1)condition}
Our definition of top-down deterministically parsable grammar or network comes next.
\begin{definition}\label{defELL(1)condition}$ELL \, (1)$ condition.\footnote{The historical acronym ``$ELL \, (1)$'' has been introduced over and over in the past by several authors with slight differences. We hope that reusing again the acronym will not be considered an abuse.}
\par
A machine net $\mathcal{M}$ meets the $ELL(1)$ condition if the following three clauses are satisfied:
\begin{enumerate}
\item there are no left-recursive derivations
\item the net meets the $ELR(1)$ condition
\item  the net has the  single transition property ($STP$)
\end{enumerate}
\end{definition}
Condition (1) can be easily checked by drawing a graph, denoted by $\mathcal{G}$, that has as nodes the initial states of the net and has an edge $0_A \longrightarrow 0_B$ if in machine $M_A$ there exists an edge $0_A \stackrel B \longrightarrow r_A$, or more generally a path:
\[
0_A \stackrel {A_1} \longrightarrow q_1 \stackrel {A_2} \longrightarrow \ldots \stackrel {A_k} \longrightarrow q_k \stackrel B \longrightarrow r_A \qquad k \geq 0
\]
where all the nonterminals $A_1$, $A_2$, $\ldots$, $A_k$ are nullable. The net is left-recursive if, and only if, graph $\mathcal{G}$ contains a circuit.
\par
Actually, left recursive derivations cause, in all but one situations, a violation of clause (2) or (3) of Def. \ref{defELL(1)condition}. The only case that would remain undetected is a left-recursive derivation involving the axiom $0_S$ and caused by state-transitions of the form:
\begin{equation}\label{eqSignificantLeftRecurs}
0_S \stackrel {S} \longrightarrow 1_S \qquad \text{or} \qquad 0_S \stackrel {A_1} \longrightarrow1_S \quad 0_{A_1} \stackrel {A_2} \longrightarrow 1_{A_1} \quad \ldots \quad 0_{A_k} \stackrel {S} \longrightarrow 1_{A_k} \qquad k \geq 1
\end{equation}
Three different types of left-recursive derivations are illustrated in Figure \ref{FigLeftRecDerivNoELL1}. The first two cause violations of clause (2) or (3), so that only the third needs to be checked on graph  $\mathcal{G}$: the graph contains a self-loop on node $0_E$.
\begin{figure}[h!]
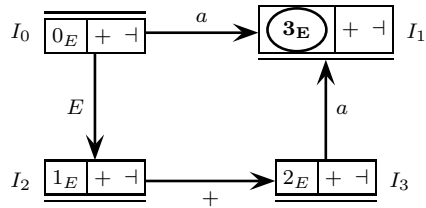

\vspace{0.25cm}
\begin{center}
\begin{tabular}{cp{1.0cm}c}
$E \to X \; E \; + \; a \; \mid \; a$ && $X \to \varepsilon \; \mid \; b$ \\

\pspar\psset{border=0pt,nodesep=0pt,labelsep=5pt,colsep=1.125cm,rowsep=1.5cm}
\scalebox{0.8}{
\begin{psmatrix}

\\ \ovalnode{0E}{$0_E$} & \ovalnode{4E}{$4_E$} & \ovalnode{1E}{$1_E$} & \ovalnode{2E}{$2_E$} & \ovalnode{3E}{$3_E$}

\nput[labelsep=0pt]{180}{0E}{$M_E \ \to$}

\nput[labelsep=0pt]{0}{3E}{$\to$}

\ncarc{0E}{3E} \naput{$a$}

\ncline{0E}{4E} \nbput{$X$}

\ncline{4E}{1E} \nbput{$E$}

\ncline{1E}{2E} \nbput{$+$}

\ncline{2E}{3E} \nbput{$a$}

\end{psmatrix}}

&&

\pspar\psset{border=0pt,nodesep=0pt,labelsep=5pt,colsep=1.125cm}
\scalebox{0.8}{
\begin{psmatrix}

\ovalnode{0X}{$0_X$} & \ovalnode{1X}{$1_X$}

\nput[labelsep=0pt]{180}{0X}{$M_X \ \to$}

\nput[labelsep=0pt]{0}{1X}{$\to$}

\nput[labelsep=0pt]{-90}{0X}{$\downarrow$}

\ncline{0X}{1X} \naput{$b$}

\end{psmatrix}}
\end{tabular}
\vspace{0.25cm}
\end{center}
\begin{center}
a shift-reduce conflict in m-state $I_0$ caused by a left-recursive derivation that makes use of $\varepsilon$-rules
\end{center}
\begin{center}
\vspace{0.25cm}
\scalebox{1.0}{
\pspar\psset{border=0pt,nodesep=0pt,rowsep=0.25cm}
\begin{psmatrix}

\rnode{I0}{$\def\arraystretch{1.25}\begin{array}{|c|c|} \hline\hline
0_E & \dashv \\ \cline{2-2}
\ovalnode{\relax}{\mathbf{0_X}} & a \; b \\ \hline
\end{array}$}

&

\rnode{I1}{$\def\arraystretch{1.25}\begin{array}{|c|c|} \hline
\ovalnode{\relax}{\mathbf{3_E}} & \dashv \\ \hline\hline
\end{array}$} \\

\nput[labelsep=5pt]{-180}{I0}{$I_0$}

\nput[labelsep=5pt]{0}{I1}{$I_1$}

\ncline{I0}{I1} \naput{$a$}

\end{psmatrix}}
\end{center}
\hrule
\begin{center}
\begin{tabular}{cp{1.0cm}c}
$S \to a \; A$ && $A \to A \; b \; \mid \; b$ \\

\pspar\psset{border=0pt,nodesep=0pt,labelsep=5pt,rowsep=1.25cm,colsep=1cm}
\scalebox{0.8}{
\begin{psmatrix}

\\ \ovalnode{0S}{$0_S$} & \ovalnode{1S}{$1_S$} & \ovalnode{2S}{$2_S$}

\nput[labelsep=0pt]{180}{0S}{$M_S \ \to$}

\nput[labelsep=0pt]{0}{2S}{$\to$}

\ncline{0S}{1S} \naput{$a$}

\ncline{1S}{2S} \naput{$A$}

\end{psmatrix}}

&&

\pspar\psset{border=0pt,nodesep=0pt,labelsep=5pt,rowsep=1.25cm,colsep=1cm}
\scalebox{0.8}{
\begin{psmatrix}

\\ \ovalnode{0A}{$0_A$} & \ovalnode{1A}{$1_A$}  & \ovalnode{2A}{$2_A$}

\nput[labelsep=0pt]{180}{0A}{$M_A \ \to$}

\nput[labelsep=0pt]{0}{2A}{$\to$}

\ncarc[arcangle=45]{0A}{2A} \naput{$b$}

\ncline{0A}{1A} \nbput{$A$}

\ncline{1A}{2A} \nbput{$b$}

\end{psmatrix}}
\end{tabular}
\vspace{0.25cm}
\end{center}
\begin{center}
a left-recursive derivation through a nonterminal other than the axiom has the effect \par of creating two candidates in the base of m-state $I_2$ and thus violates clause (2)
\end{center}
\begin{center}
\vspace{0.5cm}
\pspar\psset{border=0pt,nodesep=0pt,rowsep=1.0cm}
\scalebox{1.0}{
\begin{psmatrix}

\rnode{I0}{$\def\arraystretch{1.25}\begin{array}{|c|c|} \hline\hline
0_S& \dashv \\ \hline
\end{array}$}

&

\rnode{I1}{$\def\arraystretch{1.25}
\begin{array}{|c|c|} \hline
1_S & \dashv \\ \hline\hline
0_A & \dashv \\ \hline
\end{array}$}

&

\rnode{I2}{$\def\arraystretch{1.25}
\begin{array}{|c|c|} \hline
\ovalnode{\relax}{\mathbf{2_S}} & \multirow{2}{*}{$\dashv$} \\
1_A & \\ \hline\hline
\end{array}$}

\nput[labelsep=5pt]{90}{I0}{$I_0$}

\nput[labelsep=5pt]{90}{I1}{$I_1$}

\nput[labelsep=5pt]{90}{I2}{$I_2$}

\ncline{I0}{I1}\naput{a}

\ncline{I1}{I2}\naput{A}

\end{psmatrix}}
\end{center}
\hrule
\begin{center}
\vspace{0.25cm}
\begin{tabular}{cp{1.0cm}c}
$E \to E \; + \; a \; \mid \; a$

&&

\pspar\psset{border=0pt,nodesep=0pt,colsep=1.0cm,labelsep=5pt,rowsep=1.0cm}
\scalebox{0.8}{
\begin{psmatrix}

\\ \ovalnode{0E}{$0_E$} & \ovalnode{1E}{$1_E$} & \ovalnode{2E}{$2_E$} & \ovalnode{3E}{$3_E$}

\nput[labelsep=0pt]{180}{0E}{$M_E \ \to$}

\nput[labelsep=0pt]{0}{3E}{$\to$}

\ncarc{0E}{3E} \naput{$a$}

\ncline{0E}{1E} \nbput{$E$}

\ncline{1E}{2E} \nbput{$+$}

\ncline{2E}{3E} \nbput{$a$}

\end{psmatrix}}
\end{tabular}
\vspace{0.25cm}
\end{center}
\begin{center}
clauses (1) and (2) are met and the left-recursive derivation $0_E \Rightarrow 0_E \, 1_E$ is entirely contained in m-state $I_0$
\end{center}
\begin{center}
\vspace{0.25cm}
\pspar\psset{border=0pt,nodesep=0pt,rowsep=1.3cm}
\scalebox{1.0}{
\begin{psmatrix}
\rnode{I0}{$\def\arraystretch{1.25}\begin{array}{|c|c|} \hline\hline
0_E & + \; \dashv \\ \hline
\end{array}$} &

\rnode{I1}{$\def\arraystretch{1.25}\begin{array}{|c|c|} \hline
\ovalnode{\relax}{\mathbf{3_E}} & + \; \dashv \\
%\ovalnode{\relax}{\mathbf{3_E}} &  + \\ \hline
\hline\hline
\end{array}$} \\

\rnode{I2}{$\def\arraystretch{1.25}\begin{array}{|c|c|} \hline
1_E & + \; \dashv \\ \hline\hline
\end{array}$}

&

\rnode{I3}{$\def\arraystretch{1.25}\begin{array}{|c|c|} \hline
2_E & + \; \dashv \\ \hline \hline
\end{array}$}

\nput[labelsep=5pt]{180}{I0}{$I_0$}

\nput[labelsep=5pt]{0}{I1}{$I_1$}

\nput[labelsep=5pt]{180}{I2}{$I_2$}

\nput[labelsep=5pt]{0}{I3}{$I_3$}

\ncline{I0}{I1} \naput{$a$}

\ncline{I0}{I2} \nbput{$E$}

\ncline{I2}{I3} \nbput{$+$}

\ncline{I3}{I1} \nbput{$a$}

\end{psmatrix}}
\end{center}
\caption{Left-recursive nets violating $ELL \, (1)$ conditions. Top: left-recursion with $\varepsilon$-rules violates clause (2). Middle: left-recursion on $A \neq \text{axiom}$ violates clause (3). Bottom: left-recursive axiom $E$ is undetected by clauses (2) and (3).}\label{FigLeftRecDerivNoELL1}
\end{figure}
\par
It would not be difficult to formalize the properties illustrated by the examples and to restate clause (1) of Definition \ref{defELL(1)condition} as follows: the net has no left-recursive derivation of the form \eqref{eqSignificantLeftRecurs}, i.e., involving  the axiom and not using $\varepsilon$-rules. This apparently weaker but indeed equivalent condition is stated by \cite{Beatty82}, in his definition of (non-extended) $LL \, (1)$ grammars.
\subsubsection{Discussion}\label{emptyNonTerm}
To sum up, $ELL \, (1)$ grammars, as defined here, are $ELR \, (1)$ grammars that do not allow left-recursive derivations and satisfy the single-transition property (no multiple candidates in m-state bases and hence no convergent transitions). This is a more precise reformulation of  the definitions of ``$LL \, (1)$'' and ``$ELL \, (1)$'' grammars, which have accumulated in half a century. To be fair to the perhaps most popular definition of $LL(1)$ grammar, we contrast it with ours. There are marginal contrived examples where a violation of $STP$ caused by the presence of multiple candidates in the base of a m-state, does not hinder a top-down parser from working deterministically \cite{Beatty82}. A typical case is the $LR \, (1)$ grammar $\set{ \; S \to A \, a \, \mid \, B \, b, \; A \to C, \; B \to C, \; C \to \varepsilon \, }$. One of the m-states has two candidates in the base: $\big\{ \, \langle A \to C \; \bullet, \; a \rangle, \, \langle B \to C \; \bullet, \, b \rangle \; \big\}$. The choice between the alternatives of $S$ is determined by the following character, $a$ or $b$, yet the grammar violates $STP$. It easy to see that a necessary condition for such a situation to occur is that the language derived from some nonterminal of the grammar in question consists only of the empty string: $\exists \, A \in V$  such that $L_A \, (G) = \set{ \; \epsilon \; }$. However this case can be usually removed without any penalty, nor a loss of generality, in all grammar applications.
\par
In fact, the grammar above can be simplified and the equivalent grammar
\[
\set{ \; S \to A \, \mid \, B, \; A \to a, \; B \to b \; }
\]
is obtained, which complies with $STP$.
\subsection{Stack contraction and predictive parser}\label{sectParserCFG}
The last development, to be next presented, transforms the already compacted pilot graph into the Control Flow Graph of a \emph{predictive} parser. The way the latter parser uses the stack differs from the previous models, the shift-reduce and pointerless parsers. More precisely, now a terminal shift move, which always executes a push operation, is sometimes implemented without a push and sometimes with multiple pushes. The former case happens when the shift remains inside the same machine: the predictive parser does not \emph{push} an element upon performing a terminal shift, but it updates the top of stack element to record the new state. Multiple pushes happen when the shift determines one or more transfers from the current machine to others; the predictive parser performs a push for each transfer.
\par
The essential information to be kept on stack, is the sequence of machines that have been activated and have not reached a final state (where a reduction occurs). At each parsing time, the current or \emph{active} machine is the one that is doing the analysis, and the \emph{current} state is kept in the top of stack element. Previous non-terminated activations of the same or other machines are in the \emph{suspended} state. For each suspended machine $M_A$, a stack entry is needed to store the state $q_A$, from where the machine will resume the computation when control is returned after performing the relevant reductions.
\par
The main advantage of predictive parsing is that the construction of the syntax tree can be anticipated: the parser can generate on-line the left derivation of the input.
\subsubsection{Parser control-flow graph}
Moving from the above considerations, first we slightly transform the compact pilot graph $\mathcal{C}$ and make it isomorphic to the original machine net $\mathcal{M}$. The new graph is named \emph{parser control-flow graph} ($PCFG$) because it represents the blueprint of parser code. The first step of the transformation splits every m-node of $\mathcal{C}$ that contains multiple candidates, into a few nodes that contain only one candidate. Second,  the kernel-equivalent nodes are coalesced and the original look-ahead sets are combined into one. The third step creates new edges, named \emph{call edges}, whenever a machine transfers control to another machine. At last, each call edge is labeled with a set of characters, named \emph{guide set}, which is a summary of the information needed for the parsing decision to transfer control to another machine.
\begin{definition}\label{defPilotCFG}Parser Control-Flow Graph.
\par
Every \emph{node} of the $PCFG$, denoted by $\mathcal{F}$, is identified by a state $q$ of machine net $\mathcal{M}$ and denoted, without ambiguity, by $q$. Moreover, every node $q_A$, where $q_A \in F_A$ is  final, consists of a pair $\langle q_A, \, \pi \rangle$, where set $\pi$, named \emph{prospect}\footnote{Although traditionally the same word ``look-ahead''  has been used for both shift-reduce and top-down parsers, the set definitions differ and we prefer to differentiate their names.} set, is the union of the look-ahead sets $\pi_i$ of every candidate $\langle q_A, \, \pi_i \rangle$ existing in the compact pilot graph $\mathcal{C}$:
\[
\pi = \bigcup_{\forall \; \langle q_A, \, \pi_i \rangle \, \in \, \mathcal{C}} \pi_i
\]
The \emph{edges} of $\mathcal{F}$ are of two types, named \emph{shift} and \emph{call}:
\begin{enumerate}
\item There exists in $\mathcal{F}$ a shift edge $q_A \stackrel X \to r_A$ with $X$ terminal or non-, if the same edge is in machine $M_A$.
\item
There exists in $\mathcal{F}$ a call edge $q_A \stackrel {\gamma_1} \dashrightarrow 0_{A_1}$, where $A_1$ is a nonterminal
possibly different from $A$, if $q_A \stackrel {A_1} \to r_A$ is in $M_A$, hence necessarily in some m-state $K$ of $\mathcal{C}$ there exist candidates $\langle q_A, \, \pi \rangle$ and $\langle 0_{A_1}, \, \rho \rangle$; and the m-state $\vartheta \, (K, \, A_1)$ contains candidate $\langle r_A, \, \pi_{r_A} \rangle $. The call edge label $\gamma_1 \subseteq \Sigma \; \cup \; \{ \; \dashv \; \}$, named \emph{guide set},\footnote{It is the same as a ``predictive parsing table element'' in \cite{AhoLamSethiUl2006}.} is recursively defined as follows:
 \begin{itemize}
\item
it holds $b \in \gamma_1$ if, and only if, any conditions below hold:
{\def\arraystretch{1.5}\begin{align}
\label{eqGuideSet1}
& b \in Ini \, \big(L \, (0_{A_1}) \big) \\
\label{eqGuideSet2}
& A_1 \ \text{is nullable} \ \text{and} \ b \in Ini \, \big( L \, (r_A) \big) \\
\label{eqGuideSet3}
& A_1 \, and \, L \, (r_A) \ \text{are both nullable} \ \text{and} \ b \in \pi_{r_A} \\
\label{eqGuideSet4}
& \exists \ \text{in} \ \mathcal{F} \ \text{a call edge} \ 0_{A_1} \stackrel {\gamma_2} \dashrightarrow 0_{A_2} \ \text{and} \ b \in \gamma_2
\end{align}}
\end{itemize}
\end{enumerate}
\end{definition}
Relations	\eqref{eqGuideSet1}, \eqref{eqGuideSet2}, \eqref{eqGuideSet3} are not recursive and respectively consider that $b$ is generated by $M_{A_1}$ called by $M_A$; or by $M_A$ but starting from state $r_A$; or that $b$ follows $M_A$. Rel. \eqref{eqGuideSet4} is recursive and traverses the net as far as the chain of call sites activated. We observe that Rel.	 \eqref{eqGuideSet4} determines an inclusion relation $\gamma_1 \supseteq \gamma_2$ between any two concatenated call edges $q_A \stackrel {\gamma_1} \dashrightarrow 0_{A_1} \stackrel {\gamma_2} \dashrightarrow 0_{A_2}$. We also write $\gamma_1 = Gui \, (q_A \dashrightarrow 0_{A_1})$ instead of $q_A \stackrel {\gamma_1} \dashrightarrow 0_{A_1}$.
\par
We next extend the definition of guide set to the terminal shift edges and to the dangling darts that tag the final nodes. For each terminal shift edge $p \stackrel a \to q$ labeled with $a \in \Sigma$, we set $Gui \, \left( p \stackrel a \to q \right) := \set{ \, a \, }$. For each  dart that tags a final node containing a candidate $\langle f_A, \, \pi \rangle$ with $f_A \in F_A$, we set $Gui \, (f_A \to ) := \pi$. Then in the $PCFG$ all edges (except the nonterminal shifts) can be interpreted as conditional instructions, enabled if the current character $cc$ belongs to the associated guide set: a terminal shift edge labeled with $\set{ \, a \, }$ is enabled by a predicate $cc = a$ (or $cc \in \set{ \, a \, }$ for uniformity); a call edge labeled with $\gamma$ represents a conditional procedure invocation, where the enabling predicate is $cc \in \gamma$; a final node dart labeled with $\pi$ is interpreted as a conditional return-from-procedure instruction to be executed if $cc \in \pi$. The remaining $PCFG$ edges are nonterminal shifts, which are interpreted as unconditional return-from-procedure instructions.
\par
We show that for the same state such predicates never conflict with one another.
\begin{property}\label{PropDisjointGuideSets}
For every node $q$ of the $PCFG$ of a grammar satisfying the $ELL \, (1)$ condition, the guide sets of any two edges originating from $q$ are disjoint.
\end{property}
\begin{proof}
Since every machine is deterministic, identically labeled shift edges cannot originate from the same node, and it remains to consider the cases of
shift-call and call-call edge pairs.
\par
Consider a shift edge $q_A \stackrel a \to r_A$ with $a \in \Sigma$, and two call edges $q_A \stackrel {\gamma_B} \dashrightarrow 0_B$ and $q_A \stackrel {\gamma_C} \dashrightarrow 0_C$. First, assume by contradiction $a \in \gamma_B$. If $a \in \gamma_B$ comes from Rel. \eqref{eqGuideSet1}, then in the pilot m-state $\vartheta \, (q_A, \, a)$ there are two base candidates, a condition that is ruled out by $STP$. If it comes from Rel. \eqref{eqGuideSet2}, in the m-state with base $q_A$, i.e., m-state $K$, there is a conflict between the shift of $a$ and the reduction $0_B$, which owes its look-ahead $a$ to a path from $r_A$ in machine $M_A$. If it comes from Rel. \eqref{eqGuideSet3}, in m-state $K$ there is the same shift-reduce conflict as before, though now reduction $0_B$ owes its-look ahead $a$ to some other machine that previously invoked machine $M_A$. Finally, it may come from Rel. \eqref{eqGuideSet4}, which is the recursive case and defers the three cases before to some other machine that is immediately invoked by machine $M_B$: there is either a violation of $STP$ or a shift-reduce conflict.
\par
Second, assume by contradiction $a \in \gamma_B$ and $a \in \gamma_C$. Since $a$ comes from one of four relations, twelve combinations should be examined. But since they are similar to the previous argumentation, we deal only with one, namely case \eqref{eqGuideSet1}-\eqref{eqGuideSet1}: clearly, m-state $\vartheta \, (q_A, \, a)$ violates $STP$.
\end{proof}
\par
The converse of Property \ref{PropDisjointGuideSets} also holds, which makes the condition of having disjoint guide sets
a characteristic property of $ELL \, (1)$ grammars. We will see that this condition can be checked easily on the $PCFG$,
with no need to build the $ELR \, (1)$ pilot automaton.
\begin{property}\label{PropDisjointGuideSetsConv}
If the guide sets of a $PCFG$ are disjoint, then the net satisfies the $ELL \, (1)$ condition of Definition \ref{defELL(1)condition}.
\end{property}
The proof is in the Appendix.
\begin{example}\label{esControlFlowGraphLL(1)}
For the running example, the $PCFG$ is represented in Figure \ref{FigELL1CFG} with the same layout as the machine net for comparability.
\begin{figure}[h!]
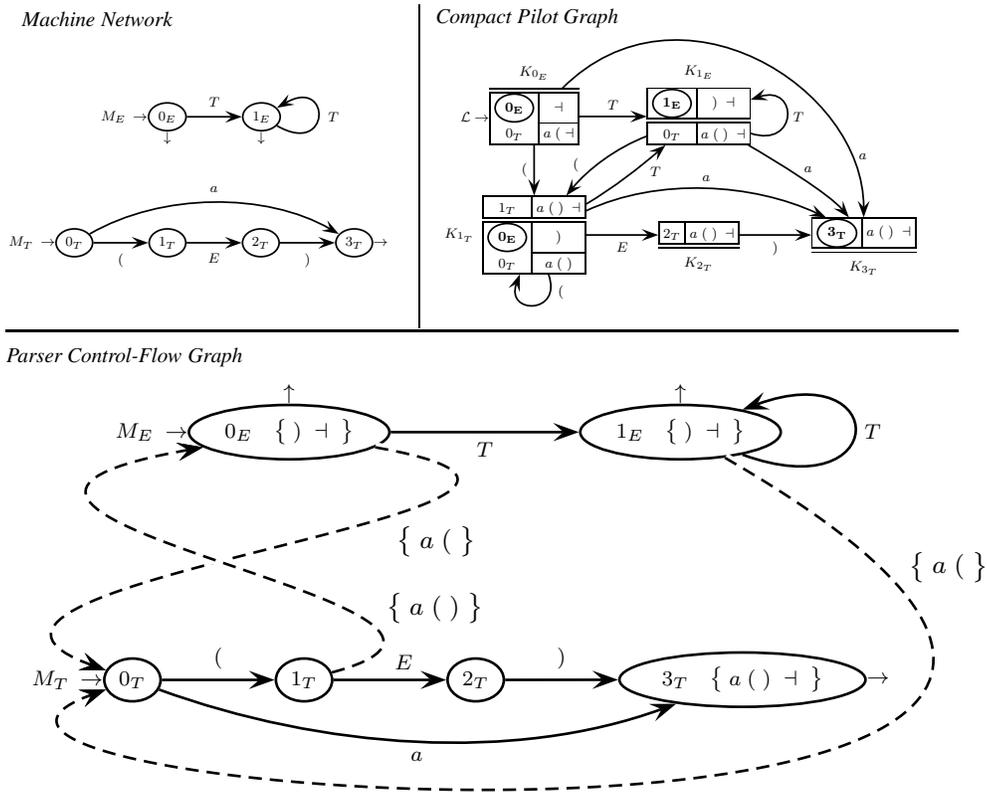

\begin{center}
\begin{tabular}{c|c}
\begin{minipage}{0.40\textwidth}
\begin{flushleft}
\emph{Machine Network}
\end{flushleft}
\end{minipage}

&

\begin{minipage}{0.55\textwidth}
\begin{flushleft}
\emph{Compact Pilot Graph}
\end{flushleft}
\end{minipage}

\\

\begin{minipage}{0.40\textwidth}
\begin{center}

\scalebox{0.65}{
\pspar\psset{border=0pt,nodesep=0pt,labelsep=5pt,colsep=1.125cm}
\begin{psmatrix}
\ovalnode{0E}{$0_E$} & \ovalnode{1E}{$1_E$}

\nput[labelsep=0pt]{180}{0E}{$M_E \ \to$}

\nput[labelsep=0pt]{-90}{1E}{$\downarrow$}

\nput[labelsep=0pt]{-90}{0E}{$\downarrow$}

\ncline{0E}{1E} \naput{$T$}

\nccurve[angleA=-30,angleB=30,ncurvA=7,ncurvB=7]{1E}{1E} \nbput{$T$}
\end{psmatrix}}
\par
\vspace{1.25cm}
\par
\scalebox{0.65}{
\pspar\psset{border=0pt,nodesep=0pt,labelsep=5pt,,colsep=1.125cm}
\begin{psmatrix}
\ovalnode{0T}{$0_T$} & \ovalnode{1T}{$1_T$} & \ovalnode{2T}{$2_T$} & \ovalnode{3T}{$3_T$}

\nput[labelsep=0pt]{180}{0T}{$M_T \ \to$}

\nput[labelsep=0pt]{0}{3T}{$\to$}

\ncline{0T}{1T} \nbput{$($}

\ncline{1T}{2T} \nbput{$E$}

\ncline{2T}{3T} \nbput{$)$}

\ncarc{0T}{3T} \naput{$a$}
\end{psmatrix}}
\end{center}
\end{minipage}

&

\begin{minipage}{0.55\textwidth}
\vspace{0.75cm}
\begin{center}
\pspar\psset{arrows=->,border=0pt,nodesep=0pt,rowsep=1.0cm,colsep=1.25cm}
\scalebox{0.65}{
\begin{psmatrix}
\rnode{I0}{$\def\arraystretch{1.25}\begin{array}{|c|c|} \hline\hline
\ovalnode{o}{\mathbf{0_E}} & \dashv \\ \cline{2-2}
0_T & a \; ( \; \dashv \\ \hline
\end{array}$}

&

\rnode{I14}{$\def\arraystretch{1.25}\begin{array}{|c|c|} \hline
\ovalnode{o}{\mathbf{1_E}} & ) \;\dashv \\ \hline\hline
0_T & a \; ( \; ) \; \dashv \\ \hline
\end{array}$} \\

\rnode{I36}{$\def\arraystretch{1.25}\begin{array}{|c|c|} \hline
1_T & a \; ( \; ) \;\dashv \\ \hline\hline
\ovalnode{o}{\mathbf{0_E}} & ) \\ \cline{2-2}
0_T & a \; ( \; ) \\ \hline
\end{array}$}

&

\rnode{I78}{$\def\arraystretch{1.25}\begin{array}{|c|c|} \hline
2_T & a \; ( \; ) \; \dashv \\ \hline\hline
\end{array}$}

&

\rnode{I25}{$\def\arraystretch{1.25}\begin{array}{|c|c|} \hline
\ovalnode{o}{\mathbf{3_T}} & a \; ( \; ) \; \dashv \\ \hline\hline
\end{array}$}

\nput[labelsep=5pt]{90}{I0}{$K_{0_E}$}

\nput[labelsep=5pt]{90}{I14}{$K_{1_E}$}

\nput[labelsep=5pt]{-90}{I25}{$K_{3_T}$}

\nput[labelsep=5pt]{180}{I36}{$K_{1_T}$}

\nput[labelsep=5pt]{-90}{I78}{$K_{2_T}$}

\nput[labelsep=0pt]{180}{I0}{$\mathcal{L} \to$}

\ncline{I0}{I14} \naput{$T$}

\nccurve[angleA=45,angleB=90,ncurvA=1,ncurvB=0.8]{I0}{I25} \aput(0.85){$a$}

\nccurve[angleA=-30,angleB=130,ncurvA=0.5,ncurvB=0.5]{I14}{I25} \naput{$a$}

\nccurve[angleA=-18,angleB=18,ncurvA=3,ncurvB=3]{I14}{I14} \nbput{$T$}

\ncline{I0}{I36} \nbput{$($}

\ncarc[arcangle=-5]{I36}{I14} \bput(0.75){$T$}

\ncarc[arcangle=-15]{I14}{I36} \bput(0.75){$($}

\ncline{I36}{I78} \nbput{$E$}

\ncline{I78}{I25} \nbput{$)$}

\nccurve[angleA=-70,angleB=-110,ncurvA=3,ncurvB=3]{I36}{I36} \aput(0.15){$($}

\ncarc[arcangle=25]{I36}{I25} \naput{$a$}

\end{psmatrix}}
\end{center}
\vspace{0.5cm}
\end{minipage}
\end{tabular}
\vspace{0.0cm}
\hrule
\begin{flushleft}
\emph{Parser Control-Flow Graph}
\end{flushleft}
\vspace{0.25cm}
\pspar\psset{border=0pt,nodesep=0pt,rowsep=2.0cm,colsep=2.5cm}
\scalebox{1.0}{
\begin{psmatrix}

\ovalnode{0E}{$0_E \quad \big\{ \; ) \; \dashv \; \big\}$} & \ovalnode{1E}{$1_E \quad \big\{ \; ) \; \dashv \; \big\}$}

\nput[labelsep=0pt]{180}{0E}{$M_E \ \to$}

\nput[labelsep=0pt]{90}{0E}{$\uparrow$}

\nput[labelsep=0pt]{90}{1E}{$\uparrow$}

\ncline{0E}{1E} \nbput{$T$}

\nccurve[angleA=-20,angleB=20,ncurvA=7,ncurvB=7]{1E}{1E} \nbput{$T$}

\end{psmatrix}}
\par
\vspace{2.5cm}
\par
\pspar\psset{border=0pt,nodesep=0pt,rowsep=2.0cm,colsep=1.5cm}
\scalebox{1.0}{
\begin{psmatrix}

\ovalnode{0T}{$0_T$} & \ovalnode{1T}{$1_T$} & \ovalnode{2T}{$2_T$} & \ovalnode{3T}{$3_T \quad \big\{ \; a \; ( \; ) \; \dashv \; \big\}$}

\nput[labelsep=0pt]{180}{0T}{$M_T \ \to$}

\nput[labelsep=0pt]{0}{3T}{$\to$}

\ncline{0T}{1T} \naput{$($}

\ncline{1T}{2T} \aput(0.625){$E$}

\ncline{2T}{3T} \naput{$)$}

\ncarc[arcangle=-20]{0T}{3T} \nbput{$a$}

\end{psmatrix}}

\psset{border=0.05cm,linestyle=dashed}

\nccurve[angleA=-10,angleB=160,ncurvA=2,ncurvB=1.5]{0E}{0T} \aput(0.315){\scalebox{1.2}{$\big\{ \; a \; ( \; \big\}$}}

\nccurve[angleA=-30,angleB=-160,ncurvA=2.75,ncurvB=1.0]{1E}{0T}\aput(0.11){\scalebox{1.2}{$\big\{ \; a \; ( \; \big\}$}}

\nccurve[angleA=15,angleB=-170,ncurvA=2,ncurvB=3]{1T}{0E} \bput(0.2){\scalebox{1.2}{$\big\{ \; a \; ( \; ) \; \big\}$}}

\vspace{0.5cm}
\end{center}
\caption{The Parser Control-Flow Graph $\mathcal{F}$ of the running example, from the net and the compact pilot.}\label{FigELL1CFG}
\end{figure}
In the $PCFG$ there are new nodes (only node $0_T$ in this example), which  derive from the initial candidates (excluding those containing the axiom $0_S$) extracted from the closure part of the m-states of $\mathcal{C}$. Having added such nodes, the closure part of the $\mathcal{C}$ nodes (except for node $I_0$) becomes redundant and has been eliminated from $PCFG$ node contents.
\par
As said, prospect sets are needed only in final states; they have the following properties:
\begin{itemize}
\item For final states that are not initial, the prospect set coincides with the corresponding look-ahead set of the compact
pilot $\mathcal{C}$. This is the case of nodes $1_E$ and $3_T$.
\item For a final-initial state, such as $0_E$, the prospect set is the union of the look-ahead sets of every candidate
$\langle 0_E, \, \pi \rangle$ that occurs in $\mathcal{C}$. For instance, $\langle 0_E, \, \set{ \; ), \, \dashv \; } \rangle$ takes $\dashv$ from m-state $K_{0_E}$ and $)$ from $K_{1_T}$.
\end{itemize}
Solid edges represent the shift ones, already present in the machine net and pilot graph. Dashed edges represent the call ones, labeled by guide sets, and how to compute them is next illustrated:
\begin{itemize}
\item the guide set of call edge $0_E \dashrightarrow 0_T$ (and of $1_E \dashrightarrow 0_T$) is $\set{ \; a, \, ) \; }$, since from state $0_T$
both characters can be shifted
\item the guide set of edge $1_T \dashrightarrow 0_E$ includes the terminals:
\par
\begin{tabular}{lp{11cm}}
$)$ & since $1_T \stackrel E \to 2_T$ is in $M_T$, language $L \, (0_E)$ is nullable and $) \in Ini \, (2_T)$ \\
$a$, $($ & since from $0_E$ a call edge goes out to $0_T$ with prospect set $\set{ \; a, \, ( \; }$
\end{tabular}
\end{itemize}
In accordance with Prop. \ref{PropDisjointGuideSets}, the terminal labels of all the edges that originate from the same node, do not overlap.
\end{example}
\subsubsection{Predictive parser}
It is straightforward to derive the  parser  from the $PCFG$; its nodes are the pushdown stack elements; the top of stack element identifies the state of the active machine, while  inner stack elements refer to states of suspended machines, in the correct order of suspension. There are four sorts of moves. A \emph{scan} move, associated with a terminal shift edge, reads the current character, $cc$, as the corresponding machine would do. A \emph{call} move, associated with a call edge, checks the enabling predicate, saves on stack the return state, and switches to the invoked machine without consuming $cc$. A \emph{return} move is triggered when the active machine enters a final state whose prospect set includes $cc$: the active state is set to the return state of the most recently suspended machine. A \emph{recognizing} move terminates parsing.
\begin{algorithm} \label{AlgorTopdownPDA}Predictive recognizer, $\mathcal{A}$.
\begin{itemize}
\item
The stack elements are the states of $PCFG$ $\mathcal{F}$. At the beginning the stack contains the initial candidate $\langle 0_S \rangle$.
\item
Let $\langle q_A \rangle$ be the top element, meaning that the active machine $M_A$ is in state $q_A$. Moves are next specified, this way:
\begin{itemize}
\item
\emph{scan move}
\par
if the shift edge $q_A \stackrel{cc} \longrightarrow r_A$ exists, then scan the next character and
replace the stack top by $\langle r_A \rangle$ (the active machine does not change)
\item
\emph{call move}
\par
if there exists a call edge $q_A \stackrel {\gamma} \dashrightarrow  0_{B}$ such that $cc \in \gamma$, let $q_A \stackrel B \to r_{A}$ be the corresponding nonterminal shift edge; then pop, push element $\langle r_A \rangle$ and push element $\langle 0_B \rangle$
\item
\emph{return move}
\par
if $q_A$ is a final state and $cc$ is in the prospect set associated with $q_A$ in the $PCFG$ $\mathcal{F}$, then pop
\item
\emph{recognition move}
\par
if $M_A$ is the axiom machine, $q_A$ is a final state and $cc = \dashv$, then accept and halt
\item
in any other case, reject the string and halt
\end{itemize}
\end{itemize}
\end{algorithm}
\par
From Prop. \ref{PropDisjointGuideSets} it follows that for every parsing configuration at most one move is possible, i.e., the algorithm is deterministic.
\subsubsection{Computing left derivations} To construct the syntax tree, the algorithm is next extended with an output function, thus turning the $DPDA$ into a pushdown transducer that computes the left derivation of the input string, using the right-linearized grammar $\hat{G}$. The output actions are specified in Table \ref{tabOutputPredictParser} and are so straightforward that we do not need to prove their correctness. Moreover, we recall that the syntax tree for grammar $\hat{G}$ is essentially an encoding of the syntax tree for the original $EBNF$ grammar $G$, such that each node has at most two child nodes.
\begin{table}[htb]
\caption{Derivation steps computed by predictive parser. Current character is $cc = b$.}\label{tabOutputPredictParser}
\begin{center}
\vspace{0.25cm}
\def\arraystretch{1.5}\begin{tabular}{p{8cm}|l}
\textit{Parser move} & \textit{Output derivation step} \\ \hline
Scan move for transition $q_A \stackrel{b}\longrightarrow r_A$ & $q_A \underset {\hat{G}} \Longrightarrow b \, r_A$ \\
Call move for call edge $q_A \stackrel {\gamma} \dashrightarrow  0_{B}$ and transition $q_A \stackrel {0_{B}} \to r_{A}$ & $q_A \underset {\hat{G}} \Longrightarrow 0_B \, r_A$ \\
Return move for state $q_A \in F_A$ & $q_A \underset {\hat{G}} \Longrightarrow \varepsilon$ \\
\end{tabular}
\end{center}
\end{table}
\begin{example}Running example: trace of predictive parser for input $x = ( \, a \, )$.
\begin{center}
\def\arraystretch{1.5}\begin{tabular}{lr|c|l}
\emph{stack} & $x$ & \emph{predicate} & \emph{left derivation} \\ \hline
%1
$\vline \left\langle 0_E \right\rangle \vline$ & $( \, a \, ) \, \dashv$ & $( \in \gamma = \set{ \, a \, ( \, }$ & $0_E \Rightarrow 0_T \, 1_E$ \\ \hline
%2
$\vline \left\langle 1_E \right\rangle \vline \langle 0_T \rangle \vline$ & $( \, a \, ) \, \dashv$ & scan & $0_E \stackrel + \Rightarrow ( \, 1_T \, 1_E$ \\ \hline
%3
$\vline \left\langle 1_E \right\rangle \vline \langle 1_T \rangle \vline$ & $a \, ) \, \dashv$ & $a \in \gamma = \set{ \, a \, ( \, }$ & $0_E \stackrel + \Rightarrow ( \, 0_E \, 2_T \, 1_E$ \\ \hline
%4
$\vline \left\langle 1_E \right\rangle \vline \langle 2_T \rangle \vline \langle 0_E \rangle \vline$ & $a \, ) \, \dashv$ & $a \in \gamma = \set{ \, a \, ( \, }$ & $0_E \stackrel + \Rightarrow ( \, 0_T \, 1_E \, 2_T \, 1_E$ \\ \hline
%5
$\vline \left\langle 1_E \right\rangle \vline \langle 2_T \rangle \vline \langle 1_E \rangle \vline \langle 0_T \rangle\vline$ & $a \, ) \, \dashv$ & scan & $0_E \stackrel + \Rightarrow ( \, a \, 3_T \, 1_E \, 2_T \, 1_E$ \\ \hline
%6
$\vline \left\langle 1_E \right\rangle \vline \langle 2_T \rangle \vline \langle 1_E \rangle \vline \langle 3_T \rangle\vline$ & $) \, \dashv$ & $) \in \pi = \set{ \, a \, ( \, ) \, \dashv \, }$ & $0_E \stackrel + \Rightarrow ( \, a \, \varepsilon \, 1_E \, 2_T \, 1_E$ \\ \hline
%7
$\vline \left\langle 1_E \right\rangle \vline \langle 2_T \rangle \vline \langle 1_E \rangle \vline$ & $) \, \dashv$ & $) \in \pi = \set{ \, ) \, \dashv \, }$ & $0_E \stackrel + \Rightarrow ( \, a \, \varepsilon \, 2_T \, 1_E$ \\ \hline
%8
$\vline \left\langle 1_E \right\rangle \vline \langle 2_T \rangle \vline$ & $) \, \dashv$ & scan & $0_E \stackrel + \Rightarrow ( \, a \, ) \, 3_T \, 1_E$ \\ \hline
%9
$\vline \left\langle 1_E \right\rangle \vline \langle 3_T \rangle \vline$ & $\dashv$ & $\dashv \, \in \pi = \set{ \, a \, ( \, ) \, \dashv \, }$ & $0_E \stackrel + \Rightarrow ( \, a \, ) \, \varepsilon \, 1_E$ \\ \hline
%10
$\vline \left\langle 1_E \right\rangle \vline$ & $\dashv$ & $\dashv \, \in \pi = \set{ \, ) \, \dashv \, }$ \ accept & $0_E \stackrel + \Rightarrow ( \, a \, ) \, \varepsilon$ \\ \hline
\end{tabular}
\end{center}
For the original grammar, the corresponding derivation is:
\[
 E \Rightarrow T \Rightarrow ( \, E \, ) \Rightarrow ( \, T \, ) \Rightarrow ( \, a \, )
\]
\end{example}
\subsubsection{Parser implementation by recursive procedures}\label{subsect:recDescParser}
Predictive parsers are often implemented using recursive procedures. Each machine is transformed into a parameter-less so-called \emph{syntactic procedure}, having a $CFG$ matching the corresponding $PCFG$ subgraph, so that the current state of a machine is encoded at runtime by the program counter. Parsing starts in the axiom procedure and successfully terminates when the input has been exhausted, unless an error has occurred before.
The standard runtime mechanism of procedure invocation and return automatically implements the call and return moves. An example should suffice to show that the procedure pseudo-code is mechanically obtained from the $PCFG$.
\begin{example}\label{esProcRicorsLL(1)}Recursive descent parser.
\par
For the $PCFG$ of Figure \ref{FigELL1CFG}, the syntactic procedures are shown in Figure \ref{figuraProcedureSintatEsDiscesaRic}. The pseudo-code can be optimized in several ways.
\par
\begin{figure}[h!]
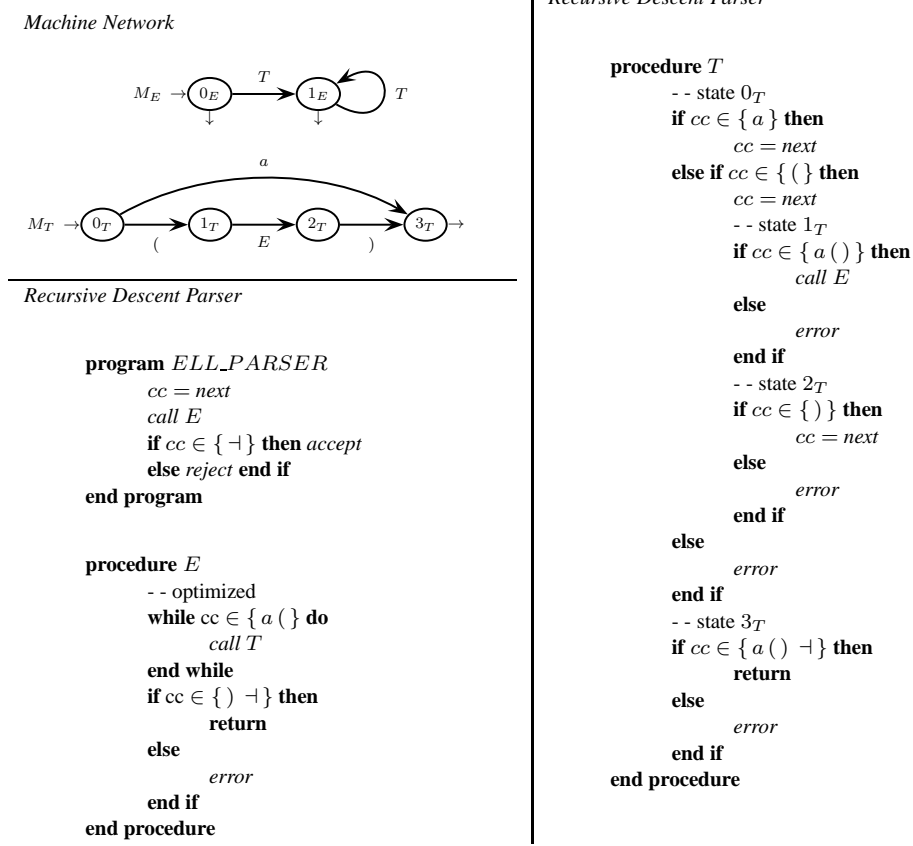

\begin{center}
\begin{tabular}{c|c}
\def\arraystretch{10}\begin{tabular}{c}

\begin{minipage}{0.50\textwidth}
\begin{center}
\begin{flushleft}
\emph{Machine Network}
\end{flushleft}
\par
\vspace{0.5cm}
\par
\scalebox{0.75}{
\pspar\psset{border=0pt,nodesep=0pt,labelsep=5pt,colsep=1.125cm}
\begin{psmatrix}
\ovalnode{0E}{$0_E$} & \ovalnode{1E}{$1_E$}

\nput[labelsep=0pt]{180}{0E}{$M_E \ \to$}

\nput[labelsep=0pt]{-90}{1E}{$\downarrow$}

\nput[labelsep=0pt]{-90}{0E}{$\downarrow$}

\ncline{0E}{1E} \naput{$T$}

\nccurve[angleA=-30,angleB=30,ncurvA=7,ncurvB=7]{1E}{1E} \nbput{$T$}
\end{psmatrix}}
\par
\vspace{1.25cm}
\par
\scalebox{0.75}{
\pspar\psset{border=0pt,nodesep=0pt,labelsep=5pt,,colsep=1.125cm}
\begin{psmatrix}
\ovalnode{0T}{$0_T$} & \ovalnode{1T}{$1_T$} & \ovalnode{2T}{$2_T$} & \ovalnode{3T}{$3_T$}

\nput[labelsep=0pt]{180}{0T}{$M_T \ \to$}

\nput[labelsep=0pt]{0}{3T}{$\to$}

\ncline{0T}{1T} \nbput{$($}

\ncline{1T}{2T} \nbput{$E$}

\ncline{2T}{3T} \nbput{$)$}

\ncarc{0T}{3T} \naput{$a$}
\end{psmatrix}}
\par
\vspace{0.5cm}
\end{center}
\end{minipage}

\\ \hline

\begin{minipage}{0.50\textwidth}
\vspace{0.1cm}
\begin{flushleft}
\emph{Recursive Descent Parser}
\end{flushleft}
\par
\begin{tabbing}
\hspace{0.75cm} \= \hspace{0.75cm} \= \hspace{0.75cm} \= \hspace{0.75cm} \= \kill \\
\>\textbf{program} $ELL\_PARSER$ \\
\>\> $cc =$ \emph{next} \\
\>\> \emph{call} $E$ \\
\>\> \textbf{if} $cc \in \set{ \, \dashv \, }$ \textbf{then} \emph{accept} \\
\>\> \textbf{else} \emph{reject} \textbf{end if} \\
\>\textbf{end program}
\end{tabbing}
\par
\begin{tabbing}
\hspace{0.75cm} \= \hspace{0.75cm} \= \hspace{0.75cm} \= \hspace{0.75cm} \= \kill \\
\>\textbf{procedure} $E$ \\
\>\> - - optimized \\
\>\> \textbf{while} cc $\in \set{ \, a \, ( \, }$ \textbf{do} \\
\>\>\> \emph{call} $T$ \\
\>\> \textbf{end while} \\
\>\> \textbf{if} cc $\in \set{ \, ) \, \dashv \, }$ \textbf{then} \\
\>\>\> \textbf{return} \\
\>\> \textbf{else} \\
\>\>\> \emph{error} \\
\>\> \textbf{end if} \\
\>\textbf{end procedure}
\end{tabbing}
\end{minipage}

\end{tabular}

&

\begin{minipage}{0.40\textwidth}
\begin{flushleft}
\emph{Recursive Descent Parser}
\end{flushleft}
\par
\begin{tabbing}
\hspace{0.75cm} \= \hspace{0.75cm} \= \hspace{0.75cm} \= \hspace{0.75cm} \= \hspace{0.75cm} \= \kill \\
\>\textbf{procedure} $T$ \\
\>\> - - state $0_T$ \\
\>\> \textbf{if} $cc \in \set{ \, a \, }$ \textbf{then} \\
\>\>\> $cc =$ \emph{next} \\
\>\> \textbf{else if} $cc \in \set{ \, ( \, }$ \textbf{then} \\
\>\>\> $cc =$ \emph{next} \\
\>\>\> - - state $1_T$ \\
\>\>\> \textbf{if} $cc \in \set{ \, a \, ( \, ) \, }$ \textbf{then} \\
\>\>\>\> \emph{call} $E$ \\
\>\>\> \textbf{else} \\
\>\>\>\> \emph{error} \\
\>\>\> \textbf{end if} \\
\>\>\> - - state $2_T$ \\
\>\>\> \textbf{if} $cc \in \set{ \, ) \, }$ \textbf{then} \\
\>\>\>\> $cc =$ \emph{next} \\
\>\>\> \textbf{else} \\
\>\>\>\> \emph{error} \\
\>\>\> \textbf{end if} \\
\>\> \textbf{else} \\
\>\>\> \emph{error} \\
\>\> \textbf{end if} \\
\>\> - - state $3_T$ \\
\>\> \textbf{if} $cc \in \set{ \, a \, ( \, ) \, \dashv \, }$ \textbf{then} \\
\>\>\> \textbf{return} \\
\>\> \textbf{else} \\
\>\>\> \emph{error} \\
\>\> \textbf{end if} \\
\>\textbf{end procedure} \\
\end{tabbing}
\end{minipage}

\end{tabular}
\end{center}
\caption{Main program and syntactic procedures of a recursive descent parser (Ex. \ref{esProcRicorsLL(1)} and Fig. \ref{FigELL1CFG}); function \emph{next} is the programming interface to the lexical analyzer or scanner; function \emph{error} is the messaging interface.} \label{figuraProcedureSintatEsDiscesaRic}
\end{figure}
\end{example}
\subsection{Direct construction of parser control-flow graph}\label{sectSimplifELL1parserConstr}
We have presented and justified a series of rigorous steps that lead from an $ELR \, (1)$ pilot to the compact pointer-less parser, and finally, to the parser control-flow graph. However, for a human wishing to design a predictive parser it would be tedious to perform all those steps.
We therefore provide a simpler procedure for checking that an $EBNF$ grammar satisfies the $ELL \, (1)$ condition. The procedure
operates directly on the grammar Parser Control Flow Graph and does not require the construction of the $ELR \, (1)$ pilot.
It uses a set of recursive equations defining the prospect and guide sets for all the states and edges of the $PCFG$; the equations
are interpreted as instructions to compute iteratively the guide sets, after which the $ELL \, (1)$ check simply verifies, according
to Property \ref{PropDisjointGuideSetsConv}, that the guide sets are disjoint.
\subsubsection{Equations defining the prospect sets}\label{prospSetEqtn}
\begin{enumerate}
\item If the net includes shift edges of the kind $p_i \stackrel {X_i} \to q$ then the prospect set $\pi_q$ of state $q$ is:	
\[
\pi_q := \bigcup_{p_i \stackrel {X_i} \to q} \pi_{p_i}
\]
\item If the net includes nonterminal shift edge $q_i \stackrel {A} \to r_i$ and the corresponding call edge in the PCFG is $q_i \dashrightarrow 0_A$, then
the prospect set for the initial state $0_A$ of machine $M_A$ is:
 \[
\pi_{0_A} := \pi_{0_A} \cup \bigcup_{q_i \stackrel {A} \to r_i} \Big( \, Ini \, \big( \, L \, (r_i) \, \big) \ \cup \ \mathbf{if} \ \ Nullable \, \big( \, L \, (r_i) \, \big) \ \ \mathbf{then} \ \ \pi_{q_i} \ \ \mathbf{else} \ \ \emptyset \, \Big)
\]
\end{enumerate}
Notice that the two sets of rules apply in an exclusive way to disjoints sets of nodes, because the normalization of the machines disallows re-entrance into initial states.
\subsubsection{Equations defining the guide sets}\label{guideSetEqtn}
\begin{enumerate}
\item For each call edge $q_A \dashrightarrow 0_{A_1}$ associated with a nonterminal shift edge $q_A \stackrel {A_1} \to r_A$,
such that possibly other call edges $0_{A_1} \dashrightarrow 0_{B_i}$ depart from state $0_{A_1}$,
the guide set $Gui \, (q_A \dashrightarrow 0_{A_1})$ of the call edge is defined as follows, see also conditions (\ref{eqGuideSet1}-\ref{eqGuideSet4}):
\[
Gui \, (q_A \dashrightarrow 0_{A_1}) := \bigcup \ \left\{
\def\arraystretch{1.5}\begin{array}{l}
Ini \, \big( \, L \, (A_1 \, \big) \\ \hline
\mathbf{if} \ Nullable \, (A_1) \ \mathbf{then} \ \ Ini \, \big( L \, (r_A) \big) \ \mathbf{else} \ \emptyset \\ \hline
\mathbf{if} \ Nullable \, (A_1) \wedge \, Nullable \, \big( L \, (r_A) \, \big) \ \mathbf{then} \ \pi_{r_A} \ \mathbf{else} \ \emptyset \\ \hline
\underset {0_{A_1} \dashrightarrow 0_{B_i}} \bigcup Gui \, ( 0_{A_1} \dashrightarrow 0_{B_i} )
\end{array}
\right.
\]
\item For a final state $f_A \in F_A$, the guide set of the  tagging dart equals the prospect set:
\[
Gui \, (f_A \to) := \pi_{f_A}
\]
\item For a terminal shift edge $q_A \stackrel a \to r_A$ with $a \in \Sigma$, the guide set is simply the shifted terminal:
\[
Gui \, \left(q_A \stackrel a \to r_A\right) := \set{ \, a \, }
\] 	
\end{enumerate}
\par
A computation starts by assigning to the prospect set of $0_S$ (initial state of axiom machine) the end-marker: $\pi_{0_S} := \set{ \, \dashv \, }$.
All other sets are initialized to empty. Then the above rules are repeatedly applied until a fixpoint is reached.
\par
Notice that the rules for computing the prospect sets are consistent with the definition of look-ahead set
given in Section \ref{callSitesActivationLookahead}; furthermore, the rules for computing the guide sets
are consistent with the definition of Parser Control Flow Graph provided in Section \ref{defPilotCFG}.
\begin{example}Running example: computing the prospect and guide sets.
\par
The following table shows the computation of  most prospect and guide sets for the $PCFG$ of Figure \ref{FigELL1CFG}
(p. \pageref{FigELL1CFG}). The computation is completed at the third step.
\begin{center}
\vspace{0.25cm}
$\def\arraystretch{1.5}\begin{array}{|c|c|c|c|c|c||c|c|c|}
\hline
\multicolumn{6}{|c||}{\emph{Prospect sets of}} & \multicolumn{3}{c|}{\emph{Guide sets of}} \\ \hline\hline
0_E & 1_E & { 0_T } & {1_T} & {2_T} & {3_T} & 0_E \dashrightarrow 0_T & 1_E \dashrightarrow 0_T & 1_T \dashrightarrow 0_E \\ \hline\hline
%1
\dashv & \emptyset & \emptyset & \emptyset & \emptyset & \emptyset & \emptyset & \emptyset & \emptyset \\ \hline
%2
 \, ) \, \dashv  \, &  \, ) \, \dashv  \, &  \, a \, ( \, ) \, \dashv  \, &  \, a \, ( \, ) \, \dashv  \, &  \, a \, ( \, ) \, \dashv  \, &  \, a \, ( \, ) \, \dashv  \, & a \; ( & a \; ( & a \; ( \; ) \\ \hline
%3
 \, ) \, \dashv  \, &  \, ) \, \dashv  \, &  \, a \, ( \, ) \, \dashv  \, &  \, a \, ( \, ) \, \dashv  \, &  \, a \, ( \, ) \, \dashv  \, &  \, a \, ( \, ) \, \dashv  \, & a \; ( & a \; ( & a \; ( \; ) \\ \hline
\end{array}$
\vspace{0.25cm}
\end{center}
\end{example}
\begin{example} Guide sets in a non-$ELL(1)$ grammar.
\par
The grammar $\set{ \; S \to a^\ast \; N, \; N \to a \; N \; b \; \mid \; \varepsilon \; }$ of example \ref{exELRnonsingletonBase} (p.\pageref{exELRnonsingletonBase}) violates the single transition property, as shown in Figure \ref{figMultiCandidatesInBase},
hence it is not $ELL \, (1)$. This can be verified also by computing the guide sets on the $PCFG$. We have
$Gui \, (0_S \dashrightarrow 0_N) \, \cap \, Gui \, \left(0_S \stackrel a \to 1_S \right) = \set{ \, a \, } \neq \emptyset$ and
$Gui \, (1_S \dashrightarrow 0_N) \, \cap \, Gui \, \left(1_S \stackrel a \to 1_S \right) = \set{ \, a \, } \neq \emptyset$: the guide sets on the edges departing from states $0_S$ and $1_S$ are not disjoint.
\end{example}
\section{Tabular parsing}\label{sectionTabular}\label{SectGeneralParsingMeth}
The $LR$ and $LL$ parsing methods are inadequate for dealing with nondeterministic and ambiguous grammars. The seminal work \cite{Earley70} introduced a cubic-time algorithm for recognizing strings of any context-free grammar, even of an ambiguous one, though it did not explicitly present a parsing method, i.e., a means for constructing the (potentially numerous) parsing trees of the accepted string. Later, efficient representations of parse forests have been invented, which do not duplicate the common subtrees and thus achieve a polynomial-time complexity for the tree construction algorithm. Until recently \cite{journals/acta/AycockB09}, Earley parsers performed a complete recognition of the input before constructing any syntax tree \cite{GruneJacobs2008}.
\par
Concerning the possibility of directly using extended $BNF$ grammars, Earley himself already gave some hints and later a few parser generators have been implemented, but no authoritative work exists, to the best of our knowledge.
\par
Another long-standing discussion concerns the pros and cons of using look-ahead. Since this issue is out of scope here, and a few experimental studies  (see \cite{AycockHorspool2002}) have indicated that look-ahead-less algorithms can be faster at least for programming languages, we present the simpler version that does not use look-ahead. This is in line with the classical theoretical presentations of the Earley parsers for $BNF$ grammars, such as the one in \cite{Revesz1991}.
\par
Actually, our focus in the present work is on programming languages that are non-ambiguous formal notations (unlike natural languages). Therefore our variant of the Earley algorithm is well suited for non-ambiguous $EBNF$ grammars, though possibly nondeterministic, and the related procedure for building parse trees does not deal with multiple trees or forests.
\subsection{String recognition}
Our algorithm is straightforward to understand, if one comes from the Vector Stack implementation of the $ELR \, (1)$ parser of Section \ref{VectorStackSection}. When analyzing a string $x = x_1 \, \ldots \, x_n$ (with $\vert \, x \, \vert = n \geq 1$) or $x = \varepsilon$ (with $\vert \, x \, \vert = n = 0$), the algorithm uses a vector $E \, [0 \, \ldots \, n]$ or $E \, [0]$, respectively, of $n + 1 \geq 1$ elements, called Earley vector.
\par
Every vector element $E \, [i]$ contains a set of pairs $\left\langle \, p_X, \, j \, \right\rangle$ that consist of a state $p_X$ of the machine $M_X$ for some nonterminal $X$, and of an integer index $j$ (with $0 \leq j \leq i \leq n$) that points back to the element $E \, [j]$ that contains a corresponding pair with the initial state $0_X$ of the same machine $M_X$. This index marks the position in the input string, from where the currently assumed derivation from nonterminal $X$ may have started.
\par
Before introducing our variant of the Earley Algorithm for $EBNF$ grammars, we preliminarily define the operations \textit{Completion} and \textit{TerminalShift}.
\begin{tabbing}
\hspace{0.25cm} \= \hspace{0.75cm} \= \hspace{0.75cm} \= \hspace{0.75cm} \=\hspace{0.75cm} \= \hspace{0.75cm} \= \hspace{0.75cm} \= \kill
\> $\emph{Completion} \, (E, \, i)$ \> \> \> \> \> - - with index $0 \leq i \leq n$ \\[0.2cm]
\> \textbf{do} \\[0.2cm]
\> \> - - loop that computes the \emph{closure} operation \\
\> \> - - for each pair that launches machine $M_X$ \\[0.2cm]
\> \> \textbf{for} $\left( \text{each pair $\left\langle \; p, \; j \; \right\rangle \in E \, [i]$ and $X, \, q \in V, \, Q$ s.t. $p \stackrel X \to q$} \right)$ \textbf{do} \\[0.2cm]
\> \> \> add pair $\left\langle \; 0_X, \; i \; \right\rangle$ to element $E \, [i]$ \\[0.2cm]
\> \> \textbf{end for} \\[0.2cm]
\> \> - - nested loops that compute the \emph{nonterminal shift} operation \\
\> \> - - for each final pair that enables a shift on nonterminal $X$ \\[0.2cm]
\> \> \textbf{for} $\big( \text{each pair $\left\langle \; f, \; j \; \right\rangle \in E \, [i]$ and $X \in V$ such that $f \in F_X$} \big)$ \textbf{do} \\[0.2cm]
\> \> \> - - for each pair that shifts on nonterminal $X$ \\[0.2cm]
\> \> \> \textbf{for} $\left( \text{each pair $\left\langle \; p, \; l \; \right\rangle \in E \, [j]$ and $q \in Q$ s.t. $p \stackrel X \to q$} \right)$ \textbf{do} \\[0.2cm]
\> \> \> \> add pair $\left\langle \; q, \; l \; \right\rangle$ to element $E \, [i]$ \\[0.1cm]
\> \> \> \textbf{end for} \\[0.2cm]
\> \> \textbf{end for} \\[0.2cm]
\> \textbf{while} \; $\big( \, \text{some pair has been added} \, \big)$
\end{tabbing}
Notice that in the \textit{Completion} operation the nullable nonterminals are dealt with by a combination of closure and nonterminal shift operations.
\begin{tabbing}
\hspace{0.25cm} \= \hspace{0.75cm} \= \hspace{0.75cm} \= \hspace{0.75cm} \=\hspace{0.75cm} \= \hspace{0.75cm} \= \hspace{0.75cm} \= \kill
\> $\emph{TerminalShift} \, (E, \, i)$ \> \> \> \> \> - - with index $1 \leq i \leq n$ \\[0.2cm]
\> - - loop that computes the \emph{terminal shift} operation \\
\> - - for each preceding pair that shifts on terminal $x_i$ \\[0.2cm]
\> \textbf{for} $\left( \text{each pair $\left\langle \; p, \; j \; \right\rangle \in E \, [i - 1]$ and $q \in Q$ s.t. $p \stackrel {x_i} \to q$} \right)$ \textbf{do} \\[0.2cm]
\> \> add pair $\left\langle \; q, \; j \; \right\rangle$ to element $E \, [i]$ \\[0.2cm]
\> \textbf{end for}
\end{tabbing}
The algorithm below for Earley syntactic analysis uses $Completion$ and $TerminalShift$.
\begin{algorithm}Earley syntactic analysis.
\begin{tabbing}
\hspace{0.25cm} \= \hspace{0.75cm} \= \hspace{0.75cm} \= \hspace{0.75cm} \=\hspace{0.75cm} \= \hspace{0.75cm} \= \hspace{0.75cm} \= \kill
\> - - analyze the terminal string $x$ for possible acceptance \\[0.0cm]
\> - - define the Earley vector $E \, [0 \dots n]$ with $\vert x \vert = n \geq 0$ \\[0.2cm]
\> $E \, [0] := \set{ \; \left\langle \; 0_S, \; 0 \; \right\rangle \; }$ \> \> \> \> \> \> - - initialize the first elem. $E \, [0]$ \\[0.2cm]
\> \textbf{for $i := 1$ to $n$ do} \> \> \> \> \> \> - - initialize all elem.s $E \, [1 \dots n]$  \\[0.2cm]
\> \> $E \, [i] := \emptyset$ \\[0.2cm]
\> \textbf{end for} \\[0.2cm]
\> $\emph{Completion} \, (E, \, 0)$ \> \> \> \> \> \> - - complete the first elem. $E \, [0]$ \\[0.2cm]
\> $i := 1$ \\[0.2cm]
\> - - while the vector is not finished and the previous elem. is not empty \\[0.2cm]
\> \textbf{while} $\big( \; i \leq n \ \land \ E \, [i - 1] \not = \emptyset \; \big)$ \textbf{do} \\[0.2cm]
\> \> $\emph{TerminalShift} \, (E, \, i)$ \> \> \> \> \> - - put into the current elem. $E \, [i]$ \\[0.2cm]
\> \> $\emph{Completion} \, (E, \, i)$ \> \> \> \> \> - - complete the current elem. $E \, [i]$ \\[0.2cm]
\> \> $i++$ \\[0.2cm]
\> \textbf{end while} \qed
\end{tabbing}
\end{algorithm}
\begin{example}\label{exEarley1}Non-deterministic $EBNF$ grammar with nullable nonterminal.
\par
The Earley acceptance condition if the following: $\left\langle \; f, \; 0 \; \right\rangle \in E \, [n]$ with $f \in F_S$. A string $x$ belongs to language $L \, (G)$ if and only of the Earley acceptance condition is true. Figure \ref{figGrammAndNet1For Earley} lists a non-ambiguous, non-deterministic $EBNF$ grammar and shows the corresponding machine net. The string $a\,a\,b\,b\,a\,a$ is analyzed: its syntax tree and analysis trace are in Figures \ref{FigEarleySyntaxTree} and \ref{figTraceEarley}, respectively; the edges in the latter figure ought to be ignored as they are related with the syntax tree construction discussed later.
\par
\begin{figure}[h!]
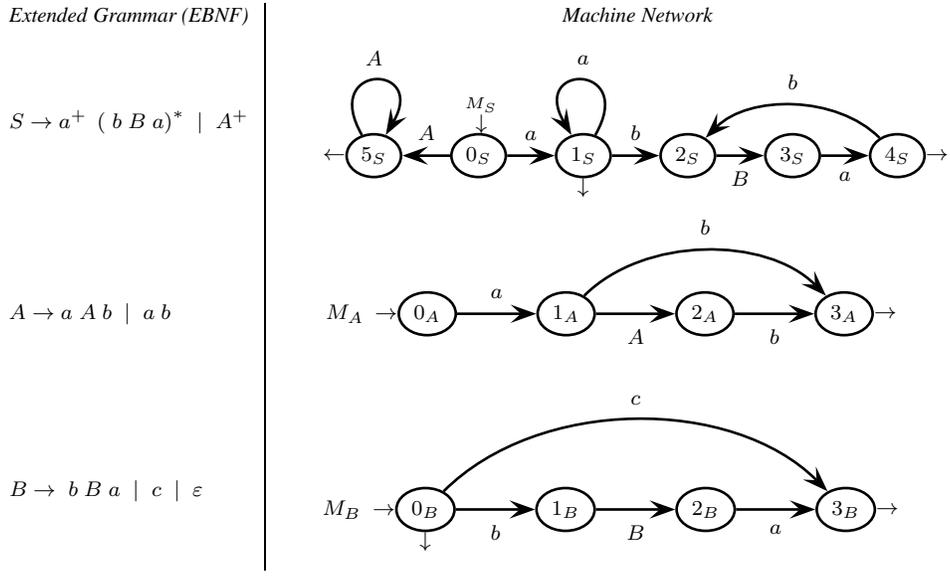


\begin{center}
\pspar\psset{border=0pt,nodesep=0pt,labelsep=5pt,colsep=30pt,rowsep=30pt}
\begin{tabular}{l|c}

\emph{Extended Grammar (EBNF)}

&

\emph{Machine Network} \\

$S \to a^+ \; \left( \; b \; B \; a \right)^\ast \; \mid \; A^+$

&

\begin{minipage}{0.75\textwidth}
\begin{center}
\vspace{1.375cm}
\pspar\psset{border=0pt,nodesep=0pt,labelsep=5pt,colsep=18pt,rowsep=30pt}
\begin{psmatrix}
\ovalnode{5S}{$5_S$} & \ovalnode{0S}{$0_S$} & \ovalnode{1S}{$1_S$} & \ovalnode{2S}{$2_S$} & \ovalnode{3S}{$3_S$} & \ovalnode{4S}{$4_S$}

\nput[labelsep=0pt]{90}{0S}{ $\overset {M_S} \downarrow$}

\nput[labelsep=0pt]{-90}{1S}{$\downarrow$}

\nput[labelsep=0pt]{0}{4S}{$\to$}

\nput[labelsep=0pt]{180}{5S}{$\leftarrow$}

\ncline{0S}{1S} \naput{$a$}

\ncline{0S}{5S} \nbput{$A$}

\nccurve[angleA=60,angleB=120,ncurvA=7,ncurvB=7]{1S}{1S} \nbput{$a$}

\ncline{1S}{2S} \naput{$b$}

\ncline{2S}{3S} \nbput{$B$}

\ncline{3S}{4S} \nbput{$a$}

\nccurve[angleA=135,angleB=45,ncurv=0.7]{4S}{2S} \nbput{$b$}

\nccurve[angleA=120,angleB=60,ncurvA=7,ncurvB=7]{5S}{5S} \naput{$A$}

\end{psmatrix}
\vspace{0.5cm}
\end{center}
\end{minipage} \\

$A \to a \; A \; b \; \mid \; a \; b$

&

\begin{minipage}{0.75\textwidth}
\begin{center}
\vspace{1cm}
\begin{psmatrix}

\ovalnode{0A}{$0_A$} & \ovalnode{1A}{$1_A$} & \ovalnode{2A}{$2_A$} & \ovalnode{3A}{$3_A$}

\nput[labelsep=0pt]{180}{0A}{$M_A \ \to$}

\nput[labelsep=0pt]{0}{3A}{$\to$}

\ncline{0A}{1A} \naput{$a$}

\ncline{1A}{2A} \nbput{$A$}

\ncline{2A}{3A} \nbput{$b$}

\nccurve[angleA=45,angleB=135,ncurv=0.7]{1A}{3A} \naput{$b$}

\end{psmatrix}
\vspace{1cm}
\end{center}
\end{minipage} \\

$B \to \; b \; B \; a \; \mid \; c \; \mid \; \varepsilon$

&

\begin{minipage}{0.75\textwidth}
\begin{center}
\vspace{1cm}
\begin{psmatrix}

\ovalnode{0B}{$0_B$} & \ovalnode{1B}{$1_B$} & \ovalnode{2B}{$2_B$} & \ovalnode{3B}{$3_B$}

\nput[labelsep=0pt]{180}{0B}{$M_B \ \to$}

\nput[labelsep=0pt]{-90}{0B}{$\downarrow$}

\nput[labelsep=0pt]{0}{3B}{$\to$}

\ncline{0B}{1B} \nbput{$b$}

\ncline{1B}{2B} \nbput{$B$}

\ncline{2B}{3B} \nbput{$a$}

\nccurve[angleA=45,angleB=135,ncurv=0.7]{0B}{3B} \naput{$c$}

\end{psmatrix}
\vspace{0.5cm}
\end{center}
\end{minipage}
\end{tabular}
\end{center}
\caption{Non-deterministic $EBNF$ grammar $G$ and network of Example \ref{exEarley1}.}\label{figGrammAndNet1For Earley}
\end{figure}
\par
\begin{figure}[h!]
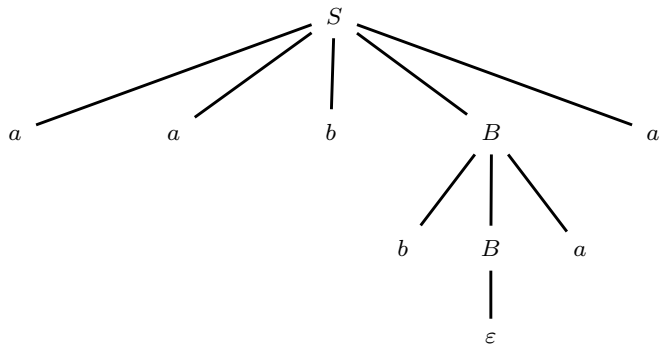

\begin{center}
\scalebox{1.1}{
\pspar\psset{arrows=-,linestyle=solid,nodesep=5pt,treefit=tight}
\pstree[levelsep=40pt,treesep=50pt]{\TR{$S$}}{
    \TR{$a$}
    \TR{$a$}
    \TR{$b$}  			
    \pstree[levelsep=40pt, treesep=25pt]{\TR{$B$}}{
        \TR{$b$}
        \pstree[levelsep=30pt]{\TR{$B$}}{
            \TR{$\varepsilon$}									  	
		}
  		\TR{$a$} 				  	 	
  	}  	   			
  	\TR{$a$}  	
}}
\caption{Syntax tree for the string $a\,a\,b\,b\,a\,a$ of Example \ref{exEarley1}.} \label{FigEarleySyntaxTree}
\end{center}
\end{figure}
\begin{figure}[h!]
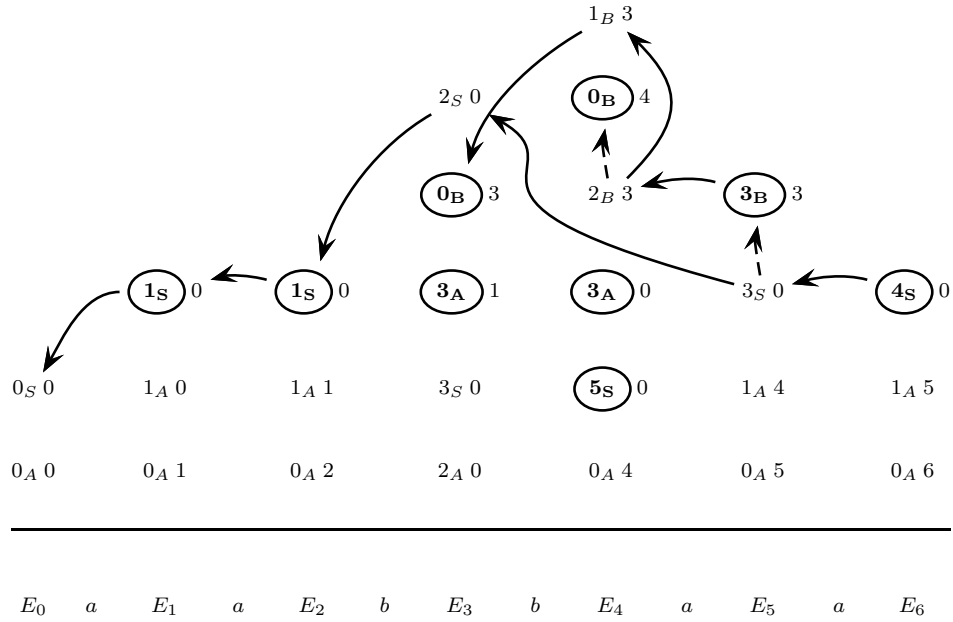

\begin{center}
\scalebox{1.0}{
\pspar\psset{nodesep=3pt,labelsep=5pt,rowsep=0.65cm,colsep=0.4cm}
\begin{psmatrix}

&&&&&&&& \rnode{41}{$1_B \ 3$} \\

&&&&&& \rnode{31}{$2_S \ 0$} && \rnode{42}{$\ovalnode{\relax}{\mathbf{0_B}} \ 4$} \\

&&&&&& \rnode{32}{$\ovalnode{\relax}{\mathbf{0_B}} \ 3$} && \rnode{43}{$2_B \ 3$} && \rnode{53}{$\ovalnode{\relax}{\mathbf{3_B}} \ 3$} \\

&& \rnode{11}{$\ovalnode{\relax}{\mathbf{1_S}} \ 0$} && \rnode{21}{$\ovalnode{\relax}{\mathbf{1_S}} \ 0$} && \rnode{33}{$\ovalnode{\relax}{\mathbf{3_A}} \ 1$} && \rnode{44}{$\ovalnode{\relax}{\mathbf{3_A}} \ 0$} && \rnode{54}{$3_S \ 0$} && \rnode{61}{$\ovalnode{\relax}{\mathbf{4_S}} \ 0$} \\

\rnode{01}{$0_S \ 0$} && \rnode{12}{$1_A \ 0$} && \rnode{22}{$1_A \ 1$} && \rnode{34}{$3_S \ 0$}  && \rnode{45}{$\ovalnode{\relax}{\mathbf{5_S}} \ 0$} && \rnode{55}{$1_A \ 4$} && \rnode{62}{$1_A \ 5$} \\

\rnode{02}{$0_A \ 0$} && \rnode{13}{$0_A \ 1$} && \rnode{23}{$0_A \ 2$} && \rnode{35}{$2_A \ 0$} && \rnode{46}{$0_A \ 4$} && \rnode{56}{$0_A \ 5$} && \rnode{63}{$0_A \ 6$} \\ \hline \\

$E_0$ & $a$ & $E_1$ & $a$ & $E_2$ & $b$ & $E_3$ & $b$  & $E_4$ & $a$& $E_5$ & $a$& $E_6$

\nccurve[angleA=-180,angleB=60]{11}{01}

\nccurve[angleA=165,angleB=15]{21}{11}

\nccurve[angleA=-150,angleB=75]{31}{21}

\nccurve[angleA=-150,angleB=75]{41}{32}

\nccurve[angleA=165,angleB=15]{53}{43}

\nccurve[angleA=165,angleB=15]{61}{54}

\nccurve[angleA=45,angleB=-45,ncurvA=1.1,ncurvB=1.1]{43}{41}

\nccurve[angleA=165,angleB=-30,ncurvA=2.25,ncurvB=0.75]{54}{31}

\nccurve[angleA=100,angleB=-100,linestyle=dashed]{43}{42}

\nccurve[angleA=100,angleB=-100,linestyle=dashed]{54}{53}

\end{psmatrix}}
\end{center}
\caption{Tabular parsing trace of string $a\,a\,b\,b\,a\,a$ with the machine net in Figure \ref{figGrammAndNet1For Earley}.}\label{figTraceEarley}
\end{figure}
\end{example}
\par
The following lemma correlates the presence of certain pairs in the Earley vector elements, with the existence of a leftmost derivation for the string prefix analyzed up to that point and, together with the associated corollary, provides a proof of the \textit{correctness} of the algorithm.
\par
\begin{lemma}\label{EarleyCorrect}  If it holds $\left\langle \; q_A, \; j \; \right\rangle \in E \, [i]$, which implies inequality $j \leq i$, with $q_A \in Q_A$, i.e., state $q_A$ belongs to the machine $M_A$ of nonterminal $A$, then it holds $\left\langle \; 0_A, \; j \; \right\rangle \in E \, [j]$ and the right-linearized grammar $\hat{G}$ admits a leftmost derivation $0_A \stackrel \ast \Rightarrow x_{j+1} \, \ldots \, x_i \, q_A$ if $j < i$ or $0_A \stackrel \ast \Rightarrow q_A$ if $j = i$. \qed
\end{lemma}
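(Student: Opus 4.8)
The plan is to argue by induction on the order in which pairs are inserted into the Earley vector. Since every insertion performed by \emph{Completion} or \emph{TerminalShift} consults only pairs already present, I can totally order all insertions by the global time at which each pair first appears; each insertion then depends exclusively on pairs inserted strictly earlier, which licenses a clean induction. The base case is the single pair $\langle 0_S, 0\rangle$ placed in $E[0]$ by initialization: here $A = S$, $q_A = 0_S$ and $i = j = 0$, so $\langle 0_A, j\rangle \in E[j]$ holds trivially and the required derivation $0_S \stackrel{\ast}{\Rightarrow} 0_S$ is the empty one.

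For the inductive step I would split on which operation inserts the pair $\langle q_A, j\rangle$ into $E[i]$, using throughout the correspondence between net edges and rules of $\hat{G}$ established in Section \ref{rightLinearizedGram}: a terminal edge $p_A \stackrel{a}{\to} q_A$ is the rule $p_A \to a\,q_A$, a call edge $p_A \stackrel{B}{\to} q_A$ is the rule $p_A \to 0_B\,q_A$, and every final state $f_A$ carries the rule $f_A \to \varepsilon$. In the \emph{closure} case $q_A$ is an initial state $0_A$ and the inserted back-pointer equals $i$, so $j = i$; both conclusions are immediate, since $\langle 0_A, i\rangle$ is the inserted pair itself and $0_A \stackrel{\ast}{\Rightarrow} 0_A$ is empty. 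In the \emph{terminal shift} case the pair comes from $\langle p_A, j\rangle \in E[i-1]$ with $p_A \stackrel{x_i}{\to} q_A$; the inductive hypothesis yields $\langle 0_A, j\rangle \in E[j]$ and a leftmost derivation ending in $p_A$, which I extend by one application of $p_A \to x_i\,q_A$. Because the sentential form preceding $p_A$ is already terminal, appending $x_i$ keeps the derivation leftmost, and $j \le i-1 < i$ places us in the $j < i$ branch.

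The main work, and the step I expect to be the real obstacle, is the \emph{nonterminal shift} (completion) case, where $\langle q_A, l\rangle$ is added to $E[i]$ from a final pair $\langle f_X, j\rangle \in E[i]$ and a source pair $\langle p_A, l\rangle \in E[j]$ with $p_A \stackrel{X}{\to} q_A$. Here I must splice two inductive hypotheses. Applying the hypothesis to $\langle f_X, j\rangle$ gives $0_X \stackrel{\ast}{\Rightarrow} x_{j+1}\ldots x_i\,f_X$ and then, via $f_X \to \varepsilon$, the purely terminal derivation $0_X \stackrel{\ast}{\Rightarrow} x_{j+1}\ldots x_i$; applying it to $\langle p_A, l\rangle$ gives $\langle 0_A, l\rangle \in E[l]$ together with $0_A \stackrel{\ast}{\Rightarrow} x_{l+1}\ldots x_j\,p_A$. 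I then combine them as
\[
0_A \stackrel{\ast}{\Rightarrow} x_{l+1}\ldots x_j\,p_A \Rightarrow x_{l+1}\ldots x_j\,0_X\,q_A \stackrel{\ast}{\Rightarrow} x_{l+1}\ldots x_i\,q_A,
\]
which establishes the conclusion with back-pointer $l$. The delicate points to verify are that leftmostness is preserved — the rewrite of $0_X$ is leftmost precisely because everything to its left is terminal and $q_A$ stays inert to its right — and that the degenerate indices behave correctly: when $j = i$ the middle derivation is empty, and when $l = j$ (hence possibly $l = j = i$) the conclusion collapses to the $l = i$ branch $0_A \stackrel{\ast}{\Rightarrow} q_A$. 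Confirming that these boundary cases align with the two branches of the lemma, and that the inequalities $l \le j \le i$ are respected throughout, is the only genuinely fiddly part of the argument.
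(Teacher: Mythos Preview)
Your proposal is correct and follows essentially the same approach as the paper: induction on the order of insertion into the Earley vector, with the same base case and the same three-way split on closure, terminal shift, and nonterminal shift. Your variable naming keeps $A$ as the outer machine (matching the lemma statement) where the paper switches to $B$, and you are slightly more explicit about leftmostness and the degenerate index cases, but the argument is otherwise identical.
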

\par
The proof is in the Appendix.
\begin{corollary} If the Earley acceptance condition is satisfied, i.e., if $\left\langle \; f_S, \; 0 \; \right\rangle \in E \, [n]$ with $f_S \in F_S$, then the \emph{EBNF} grammar $G$ admits a derivation $S \stackrel + \Rightarrow x$, i.e., $x \in L \, (S)$, and string $x$ belongs to language $L \, (G)$. \qed
\end{corollary}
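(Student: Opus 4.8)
The plan is to obtain the corollary as an immediate specialization of Lemma~\ref{EarleyCorrect}, followed by a single closing derivation step and an appeal to the equivalence between $\hat{G}$ and $G$ established in the preliminaries. All the combinatorial work about back-pointers and vector insertions is already packaged inside the lemma, so the corollary only has to read off the right instance and translate it into a grammar derivation.

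First I would instantiate Lemma~\ref{EarleyCorrect} at the acceptance configuration: take the nonterminal to be the axiom $S$, the state $q_A = f_S$ with $f_S \in F_S$, and the indices $j = 0$ and $i = n$. Since by hypothesis $\langle f_S, \, 0 \rangle \in E \, [n]$, the lemma delivers a leftmost derivation in the right-linearized grammar $\hat{G}$, namely $0_S \stackrel \ast \Rightarrow x_1 \, \ldots \, x_n \, f_S$ when $n > 0$ (the case $j < i$), and $0_S \stackrel \ast \Rightarrow f_S$ when $n = 0$ (the empty-string case $j = i$).

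Second I would close the derivation using the shape of $\hat{G}$. By the construction of the right-linearized grammar (Sect.~\ref{rightLinearizedGram}), every final state $p_A$ contributes an empty rule $p_A \to \varepsilon$; in particular $f_S \to \varepsilon$ is a rule of $\hat{G}$ because $f_S \in F_S$. Applying it to the trailing nonterminal $f_S$ gives $x_1 \, \ldots \, x_n \, f_S \Rightarrow x_1 \, \ldots \, x_n = x$, whence $0_S \stackrel + \Rightarrow x$ in $\hat{G}$ (nonempty, since it contains at least this last step), so $x \in L \, (\hat{G})$.

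Finally I would transfer the result from $\hat{G}$ to $G$. Since $\hat{G}$ is by construction equivalent to $G$, and its atomic state-transition steps merely refine the single rule-application steps of $G$ (as illustrated in Example~\ref{exLeftDerivForNet}), the leftmost derivation $0_S \stackrel + \Rightarrow x$ projects onto a derivation $S \stackrel + \Rightarrow x$ in $G$ by collapsing each machine sub-computation into one $G$-step; equivalently one invokes the identities $L \, (\hat{G}) = L \, (G)$ and $L \, (0_S) \equiv L \, (\mathcal{M}) = L \, (G)$ recorded earlier. Hence $x \in L \, (S) = L \, (G)$, as claimed. I expect the only delicate point to be this last transfer, as it rests on the $\hat{G}$-to-$G$ correspondence rather than on Earley-vector bookkeeping; everything preceding it is a mechanical specialization requiring no new invariant or induction.
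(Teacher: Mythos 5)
Your proposal is correct and matches the paper's intent exactly: the paper treats this corollary as an immediate consequence of Lemma~\ref{EarleyCorrect} (hence the inline \qed with no separate proof), obtained precisely by instantiating the lemma at $A = S$, $q_A = f_S$, $j = 0$, $i = n$, closing with the rule $f_S \to \varepsilon$ guaranteed by the right-linearized construction, and invoking the equivalence $L\,(\hat{G}) = L\,(G)$. Your unfolding of these three steps, including the limit case $n = 0$, is exactly the argument the paper leaves implicit.
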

\par
The following lemma, which is the converse of Lemma \ref{EarleyCorrect}, states the \textit{completeness} of the Earley Algorithm.
\par
\begin{lemma}\label{EarleyComplete} Take an \emph{EBNF} grammar $G$ and a string $x = x_1 \, \dots \, x_n$ of length $n$ that belongs to language $L \, (G)$. In the right-linearized grammar $\hat{G}$, consider any leftmost derivation $d$ of a prefix $x_1 \, \dots \, x_i$ ($i \leq n$) of $x$, that is:
\[
d \colon 0_S \stackrel + \Rightarrow x_1 \, \ldots \, x_i \, q_A \, W
\]
with $q_A \in Q_A$ and $W \in Q_A^\ast$. The two points below apply:
\begin{enumerate}
\item \label{Earleycase1}
if it holds $W \neq \varepsilon$, i.e., $W = r_B \, Z$ for some $r_B \in Q_B$, then it holds $\exists \, j \; 0 \leq j \leq i$ and $\exists \, p_B \in Q_B$ such that the machine net has an arc $p_B \stackrel A \to r_B$ and grammar $\hat{G}$ admits two leftmost derivations $d_1 \colon 0_S \stackrel + \Rightarrow x_1 \, \ldots \, x_j \, p_B \, Z$ and $d_2 \colon 0_A \stackrel + \Rightarrow x_{j+1} \, \ldots \, x_i \, q_A$, so that derivation $d$ decomposes as follows:
\[\def\arraystretch{1.5}
\begin{array}{rclcl}
d \colon 0_S & \stackrel {d_1} \Rightarrow & x_1 \, \ldots \, x_j \, p_B \, Z & \stackrel {p_B \to 0_A \, r_B} \Longrightarrow & x_1 \, \ldots \, x_j \, 0_A \, r_B \, Z \\
& \stackrel {d_2} \Rightarrow & x_1 \, \ldots \, x_j  \, x_{j+1} \, \ldots \, x_i \, q_A \, r_B \, Z & = & x_1 \, \ldots \, x_i \, q_A \, W
\end{array}
\]
as an arc $p_B \stackrel A \to r_B$ in the net maps to a rule $p_B \to 0_A \, r_B$ in grammar $\hat{G}$
\item
this point is split into two steps, the second being the crucial one: \label{Earleycase2}
\begin{enumerate}
\item
if it holds $W = \varepsilon$, then it holds $A = S$, i.e., nonterminal $A$ is the axiom, $q_A \in Q_S$ and $\left\langle \; q_A, \; 0 \, \right\rangle
\in E \, [i]$ \label{Earleycase2a}
\item if it also holds $x_1 \ldots x_i \in L \, (G)$, i.e., the prefix also belongs to language $L \, (G)$,
then it holds $q_A = f_S \in F_S$, i.e., state $q_A = f_S$ is final for the axiomatic machine $M_S$, and the prefix is accepted by the Earley algorithm \label{Earleycase2b}
\end{enumerate}
\end{enumerate}
Limit cases: if it holds $i = 0$ then it holds $x_1 \, \ldots \, x_i = \varepsilon$; if it holds $j = i$ then it holds $x_{j+1} \, \ldots \, x_i = \varepsilon$; and if it holds $x = \varepsilon$ (so $n = 0$) then both cases hold, i.e., $j = i = 0$.
\par
If the prefix coincides with the whole string $x$, i.e., $i = n$, then step \eqref{Earleycase2bapp} implies that string $x$, which by hypothesis belongs to language $L \, (G)$, is accepted by the Earley algorithm, which therefore is complete. \qed
\end{lemma}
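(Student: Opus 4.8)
The plan is to prove Lemma~\ref{EarleyComplete} by well-founded induction on the number of steps of the leftmost derivation $d$ in the right-linearized grammar $\hat{G}$, carrying parts (1) and (2) simultaneously: the structural decomposition supplied by part (1) is precisely what the acceptance argument of part (2) consumes inductively. The engine of the whole proof is a case analysis on the \emph{last} leftmost step of $d$. Because every rule of $\hat{G}$ has one of the three forms $p \to a\,r$ (terminal shift), $p \to 0_C\,r$ (call), or $f_C \to \varepsilon$ (reduction at a final state), the last step is of exactly one of these kinds, and each matches one primitive of the Earley loop --- respectively \emph{TerminalShift}, the closure half of \emph{Completion}, and the nonterminal-shift half of \emph{Completion}. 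I would first fix the invariant that a leftmost $\hat{G}$-sentential form is a terminal prefix followed by a string of states, the leftmost state being the active one and the remaining states being the return points of the currently suspended machines in suspension order; hence a call pushes exactly one return point and a final reduction pops exactly one.

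For part (1), with $W = r_B\,Z \neq \varepsilon$, the topmost return point $r_B$ witnesses that the active machine $M_A$ was entered by a call that deposited $r_B$. I would locate the most recent such unmatched call, i.e.\ the derivation step $p_B \to 0_A\,r_B$ after which $r_B$ remains on the stack up to the current form, and let $j$ be the input position reached at that moment. By the invariant, every step between that call and the current form acts strictly inside the activation of $M_A$ (on states that must be popped before $r_B$ is uncovered), which yields the inner derivation $d_2\colon 0_A \stackrel{+}{\Rightarrow} x_{j+1}\ldots x_i\,q_A$, while the steps up to the call give $d_1\colon 0_S \stackrel{+}{\Rightarrow} x_1\ldots x_j\,p_B\,Z$; the net arc $p_B \stackrel{A}{\to} r_B$ is just the rule $p_B \to 0_A\,r_B$. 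This decomposition is purely about derivation shape and needs no reference to the Earley vector; the just-called limit case ($j=i$, $d_2$ trivial) and the nullable or empty cases are absorbed by the limit-case conventions of the statement.

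For part (2), $W = \varepsilon$ together with the start symbol $0_S$ forces the active machine to be the axiom (the stack is empty only in the initial activation, since calls and returns change its height by one), so $A = S$ and $q_A \in Q_S$. To obtain $\langle q_A, 0\rangle \in E[i]$ I case-split on the last step. If it is a terminal shift $p_S \stackrel{x_i}{\to} q_A$, induction gives $\langle p_S, 0\rangle \in E[i-1]$ and \emph{TerminalShift} copies the back-pointer to produce $\langle q_A, 0\rangle \in E[i]$. If it is a final reduction $f_C \to \varepsilon$ (a return from a called machine), the pre-reduction form has nonempty return stack $q_A$, so part (1) applies to it and delivers a split $m$, an edge $p_S \stackrel{C}{\to} q_A$, and derivations $0_S \stackrel{+}{\Rightarrow} x_1\ldots x_m\,p_S$ and $0_C \stackrel{+}{\Rightarrow} x_{m+1}\ldots x_i\,f_C$; induction gives $\langle p_S, 0\rangle \in E[m]$ and $\langle f_C, m\rangle \in E[i]$, whereupon the nonterminal-shift loop of \emph{Completion} adds $\langle q_A, 0\rangle$ to $E[i]$. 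A call cannot be the last step here, since it would leave a nonempty return stack. For part (2b) I would take $d$ to be a truncation of a complete leftmost derivation of $x \in L(G)$, whose final axiom configuration is $x\,f_S$ with $f_S \in F_S$; part (2a) applied to that derivation yields $\langle f_S, 0\rangle \in E[n]$, which is exactly the acceptance condition.

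The main obstacle is the mismatch between the axiom-only claim of part (2a) and the fact that, in the return case, induction must supply $\langle f_C, m\rangle \in E[i]$ for a \emph{non-axiom} machine $M_C$. The clean remedy I would adopt is to strengthen the induction hypothesis to the general \emph{activation} statement --- essentially the converse of Lemma~\ref{EarleyCorrect} --- namely that for every derivation $0_A \stackrel{+}{\Rightarrow} x_{j+1}\ldots x_i\,q_A$ with $\langle 0_A, j\rangle \in E[j]$ one has $\langle q_A, j\rangle \in E[i]$; part (2a) is its instance $A=S$, $j=0$, and part (1) manufactures exactly the shorter sub-activations $d_2$ on which the argument recurses. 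The precondition $\langle 0_A, j\rangle \in E[j]$ is self-maintaining, because whenever a call $p \stackrel{C}{\to} q$ is taken at position $m$ the closure half of \emph{Completion} inserts $\langle 0_C, m\rangle$ into $E[m]$. The remaining delicate points are justifying the well-nestedness used in part (1) (the topmost return point stays untouched until its machine finishes) and checking that nullable calls are absorbed correctly: an immediately nullable $M_C$ must, within the same \emph{do--while} fixpoint of \emph{Completion}, both receive $\langle 0_C, i\rangle$ by closure and return via the nonterminal shift that adds $\langle q_A, j\rangle$. Verifying that the \emph{Completion} fixpoint really captures such chained additions is the one place where genuine care, rather than routine bookkeeping, is required.
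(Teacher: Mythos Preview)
Your proposal is correct and follows essentially the same route as the paper's proof: induction on the length of the leftmost $\hat G$-derivation, with a case analysis on the three rule shapes (terminal shift, call, final reduction) matched respectively to \emph{TerminalShift}, the closure half of \emph{Completion}, and the nonterminal-shift half of \emph{Completion}. The paper runs the induction forward (extend $d$ by one step and show the thesis is preserved), whereas you peel off the last step; these are dual presentations of the same argument and the case structure is identical.

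Your explicit identification of the ``main obstacle'' is actually a point in your favor. As stated, part~(2a) of the lemma speaks only of the axiom activation, yet in the reduction case one needs $\langle f_C, m\rangle \in E[i]$ for a non-axiom $C$. The paper's proof handles this by tacitly carrying the stronger invariant --- that \emph{all} the intermediate pairs $\langle p_B, k\rangle$, $\langle 0_A, j\rangle$, $\langle q_A, j\rangle$ etc.\ are in the vector --- through the forward induction without isolating it as a separate hypothesis. Your proposal to strengthen the induction hypothesis to the general activation statement (the converse of Lemma~\ref{EarleyCorrect}) is exactly the clean way to make explicit what the paper does implicitly, and your observation that the precondition $\langle 0_A, j\rangle \in E[j]$ is self-maintaining via closure is the right mechanism. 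The remaining points you flag (well-nestedness of the return stack, and the \emph{Completion} fixpoint absorbing chained nullable calls) are genuine but routine once the strengthened hypothesis is in place.
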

The proof is in the Appendix.
\subsection{Syntax tree construction}
The next procedure \emph{BuildTree} (\emph{BT}) builds the parse tree of a recognized string through processing the vector $E$ constructed by the Earley algorithm. Function \emph{BT} is recursive and has four formal parameters: nonterminal $X \in V$, state $f$, and two nonnegative indices  $j$ and $i$. Nonterminal $X$ is the root  of the  (sub)tree to be built. State $f$ is final for machine $M_X$; it is the end of the computation path in $M_X$ that corresponds to analyzing the substring generated by $X$. Indices $j$ and $i$  satisfy the inequality $0 \leq j \leq i \leq n$;  they respectively specify the left and right ends of the substring generated by  $X$:
\[
X \underset G {\stackrel + \Longrightarrow} x_{j + 1} \; \dots \; x_i \quad \text{if $j < i$} \hspace{2cm}
X \underset G {\stackrel + \Longrightarrow} \varepsilon \quad \text{if $j = i$}
\]
Grammar $G$ admits derivation $S \stackrel + \Rightarrow x_1 \; \dots \; x_n$ or $S \stackrel + \Rightarrow \varepsilon$, and the Earley algorithm accepts string $x$. Thus, element $E \, [n]$ contains the final axiomatic pair $\left\langle \; f, \; 0 \, \right\rangle$. To build the tree of string $x$ with root node $S$, function \emph{BT} is called with parameters $\emph{BT} \, \left( \; S, \; f, \; 0, \; n \; \right)$; then the function will recursively build all the subtrees and will assemble them in the final tree. Function \emph{BT} returns the syntax tree in the form of a parenthesized string, with brackets labeled by the root nonterminal of each (sub)tree. The commented code follows.
\par
\begin{tabbing}
\hspace{0.25cm} \= \hspace{0.75cm} \= \hspace{0.75cm} \= \hspace{0.75cm} \=\hspace{0.75cm} \= \hspace{0.75cm} \= \hspace{0.75cm} \= \hspace{0.75cm} \= \hspace{0.75cm} \= \hspace{0.1875cm} \= \kill
\> $\emph{BuildTree} \; \left( \; X, \; f, \; j, \; i \; \right)$  \\[0.2cm]
\> - - $X$ is a nonterminal, $f$ is a final state of $M_X$ and $0 \leq j \leq i \leq n$ \\[0.0cm]
\> - - return as parenthesized string the syntax tree rooted at node $X$ \\[0.2cm]
\> - - node $X$ will have a list $\mathcal{C}$ of terminal and nonterminal child nodes \\[0.0cm]
\> - - either list $\mathcal{C}$ will remain empty or it will be filled from right to left \\[0.2cm]
\> $\mathcal{C} := \varepsilon$  \> \> \> \> \> - - set to $\varepsilon$ the list $\mathcal{C}$ of child nodes of $X$ \\[0.2cm]
\> $q := f$            \> \> \> \> \> - - set to $f$ the state $q$ in machine $M_X$ \\[0.2cm]
\> $k := i$            \> \> \> \> \> - - set to $i$ the index $k$ of vector $E$ \\[0.2cm]
\> - -  walk back the sequence of term. \& nonterm. shift oper.s in $M_X$ \\[0.2cm]
\> \textbf{while} $\left( \; q \not = 0_X \; \right)$ \textbf{do} \> \> \> \> \> - - while current state $q$ is not initial \\[0.2cm]
\> \> - - try to backwards recover a terminal shift move $p \stackrel {x_k} \to q$, i.e., \\[0.0cm]
\> \> - - check if node $X$ has terminal $x_k$ as its current child leaf \\[0.2cm]
(a) \> \> \textbf{if} $\left( \def\arraystretch{1.25}\begin{array}{l} \exists \, h = k - 1 \ \ \exists \, p \in Q_X \ \ \text{such that} \\
\left\langle \; p, \; j \; \right\rangle \in E \, [h] \ \land \ \text{net has $p \stackrel {x_k} \to q$} \end{array} \right)$ \textbf{then} \\[0.2cm]
\> \> \> $\mathcal{C} := x_k \cdot \mathcal{C}$ \> \> \> - - concatenate leaf $x_k$ to list $\mathcal{C}$ \\[0.2cm]
\> \> \textbf{end if} \\[0.2cm]
\> \> - - try to backwards recover a nonterm. shift oper. $p \stackrel Y \to q$, i.e., \\[0.0cm]
\> \> - - check if node $X$ has nonterm. $Y$ as its current child node \\[0.2cm]
(b) \> \> \textbf{if} $\left( \def\arraystretch{1.25}\begin{array}{l} \exists \, Y \in V \ \ \exists \, e \in F_Y \ \ \exists \, h \; j \leq h \leq k \leq i \ \ \exists \, p \in Q_X \ \ \text{s.t.} \\ \left\langle \; e, \; h \; \right\rangle \in E \, [k] \ \land \ \left\langle \; p, \; j \; \right\rangle \in E \, [h] \ \land \ \text{net has $p \stackrel Y \to q$} \end{array} \right)$ \textbf{then} \\[0.2cm]
\> \> \> - - recursively build the subtree of the derivation: \\[0.0cm]
\> \> \> - - $Y \underset G {\stackrel + \Rightarrow} x_{h + 1} \; \dots \; x_k$ if $h < k$ \ or \ $Y \underset G {\stackrel + \Rightarrow} \varepsilon$ if $h = k$ \\[0.0cm]
\> \> \> - - and concatenate to list $\mathcal{C}$ the subtree of node $Y$ \\[0.2cm]
\> \> \> $\mathcal{C} := \emph{BuildTree} \; \left( \; Y, \; e, \; h, \; k \; \right) \cdot \mathcal{C}$ \\[0.2cm]
\> \> \textbf{end if} \\[0.2cm]
\> \> $q := p$        \> \> \> \> - - shift the current state $q$ back to $p$ \\[0.2cm]
\> \> $k := h$        \> \> \> \> - - drag the current index $k$ back to $h$ \\[0.2cm]
\> \textbf{end while} \\[0.2cm]
\> \textbf{return} $\left( \; \mathcal{C} \; \right)_X$ \> \> \> \> \> - - return the tree rooted at node $X$ \qed
\end{tabbing}
\par
Figure \ref{figTraceEarley} reports the analysis trace of Example \ref{exEarley1} and also shows solid edges that correspond to iterations of the \textit{while} loop in the procedure, and dashed edges that match the recursive calls. Notice that calling function \emph{BT} with equal indices $i$ and $j$, means building a subtree of the empty string, which may be made of one or more nullable nonterminals. This happens in the Figure \ref{FigBuildTree2}, witch shows the tree of the \emph{BT} calls and of the returned subtrees for Example \ref{exEarley1}, with the call $BT \, (B, \, 0_B, \, 4, \, 4 \,)$.
\par
\begin{figure}[htbp!]
\begin{center}
\scalebox{1.1}{\pspar\psset{arrows=-,levelsep=2.0cm,nodesep=5pt,treesep=1.5cm}
\pstree{\TR[name=root]{$\def\arraystretch{1.5}\begin{array}{c}  \emph{BT} \, \left( \, S, \, 4_S, \, 0, \, 6 \, \right) \\ = \Big( \; a \; a \; b \; \big( \; b \; \left( \; \varepsilon \; \right)_B \; a \; \big)_B \; a \; \Big)_S \end{array}$}}{
    \TR{$a_1$}
    \TR{$a_2$}
    \TR{$b_3$}
    \pstree[treesep=1.25cm]{\TR[name=B1]{$\def\arraystretch{1.5}\begin{array}{c} \emph{BT} \, \left( \, B, \, 3_B, \, 3, \, 5 \, \right) \\
    = \big( \; b \; \left( \; \varepsilon \; \right)_B \; a \; \big)_B \end{array}$}}{
        \TR{$b_4$}
        \pstree[treesep=1.0cm,levelsep=1.25cm]{\TR[name=B2]{$\def\arraystretch{1.5}\begin{array}{c}  \emph{BT} \, \left( \, B, \, 0_B, \, 4, \, 4 \, \right) \\ = \left( \; \varepsilon \; \right)_B \end{array}$}}{
            \TR{$\varepsilon$}
		}
  		\TR{$a_5$}
  	}
  	\TR{$a_6$}
}}
\end{center}
\caption{\label{FigBuildTree2} Calls and return values of \emph{BuildTree} for Example \ref{exEarley1}.}
\end{figure}
\par
Notice that since a leftmost derivation uniquely identifies a syntax tree, the conditions in the two mutually exclusive \emph{if-then}
conditionals inside the while loop, are always satisfied in only one way: otherwise the analyzed string would admit several
distinct left derivations and therefore it would be ambiguous.
\par
\addvspace{10pt}
\par
As previously remarked, nullable terms are dealt with by the Earley algorithm through a chain of \emph{Closure} and \emph{Nonterminal Shift} operations. An optimized version of the algorithm can be defined to perform the analysis of nullable terminals in a single step, along the same lines as defined by \cite{AycockHorspool2002}; further work by the same authors also defines optimized procedures for building the parse tree in the presence of nullable nonterminals, which can also be adjusted and applied to our version of the Earley algorithm.
\section{Conclusion}
We hope that this  extension and conceptual compaction of classical parser construction methods, will be appreciated by compiler and language designers as well as by instructors. Starting from syntax  diagrams, which are the most readable representation of grammars in language reference manuals, our method directly constructs deterministic shift-reduce parsers, which are more general and accurate than all preceding proposals. Then we have extended to the $EBNF$ case Beatty's old  theoretical comparisons of $LL \, (1)$ versus $LR \, (1)$ grammars, and exploited it to  derive general deterministic top-down parsers through   step-wise simplifications of shift-reduce parsers. We have evidenced that such simplifications are correct if multiple-transitions and left-recursive derivations are excluded. For completeness, to address the needs of non-deterministic $EBNF$ grammars, we have included an accurate presentation of the tabular Earley parsers, including syntax tree generatio. Our goal of coming up with a minimalist comprehensive presentation of parsing methods for Extended $BNF$ grammars, has thus been attained.
\par
To finish we mention a practical development. There are circumstances that suggest or impose to use separate parsers, for different language parts identified by a grammar partition, i.e., by the sublanguages generated by certain subgrammars. The idea of \emph{grammar partition} dates back to \cite{Korenjak69}, who wanted to reduce the size of an $LR \, (1)$ pilot by decomposing the original parser into a family of subgrammar parsers, and to thus reduce the number of candidates and macro-states. Parser size reduction remains a goal of parser partitioning in the domain of natural language processing, e.g., see \cite{MengEtAl2002}. In  such projects the component parsers are all homogeneous, whereas we are more interested in heterogeneous partitions, which use different parsing algorithms. Why should one want to diversify the algorithms used for different language parts? First, a language may contain parts that are harder to parse than others; thus a simpler $ELL \, (1)$ parser should be used whenever possible, limiting the use of an $ELR \, (1)$ parser - or even of an Earley one - to the sublanguages that warrant a more powerful method. Second, there are well-known examples of language embedding, such as SQL inside C, where the two languages may be biased towards different parsing methods.
\par
Although heterogeneous parsers can be built on top of legacy parsing programs, past experimentation of mixed-mode parsers (e.g., \cite{CrespiPsaila98}) has met with practical rather than conceptual difficulties, caused by the need to interface different parser systems. Our unifying approach looks promising for building seamless heterogeneous parsers that switch from an algorithm to another as they proceed. This is due to the homogeneous representation of the parsing stacks and tables, and to the exact formulation of the conditions that enable to switch from a more to a less general algorithm.
Within the current approach, mixed mode parsing should have little or no implementation overhead, and can be viewed as a pragmatic technique for achieving greater parsing power without committing to a more general but less efficient non-deterministic algorithm, such as the so-called generalized $LR$ parsers initiated by \cite{Tomita86}.
\par
\paragraph*{Aknowledgement} We are grateful to the students of Formal Languages and Compiler courses at Politecnico di Milano, who bravely accepted to study this theory while in progress. We also thank Giorgio Satta for helpful discussions of Earley algorithms.
\bibliographystyle{acmtrans}
\bibliography{LFC_bib}
\newpage
\section{Appendix}
\subsection*{Proof of theorem \ref{theorELR1}}
Let $G$ be an $EBNF$ grammar represented by machine net $\mathcal{M}$ and let $\hat{G}$ be the equivalent right-linearized grammar. Then net $\mathcal{M}$ meets the $ELR \, (1)$ condition if, and only if, grammar $\hat{G}$ meets the $LR \, (1)$ condition.
\begin{proof}
Let $Q$ be the set of the states of $\mathcal{M}$. Clearly, for any non-empty rule $X \to Y \; Z$ of $\hat{G}$, it holds $X \in Q$, $Y \in  \Sigma \cup \set{ \; 0_A \; \mid \quad 0_A \in Q \; }$ and $Z \in  Q \setminus \set{ \; 0_A \; \mid \quad 0_A \in Q \; }$.
\par
Let $\mathcal{P}$ and $\mathcal{\hat{P}}$ be the $ELR$ and $LR$ pilots of $G$ and $\hat{G}$, respectively. Preliminarily we study the correspondence between their transition functions, $\vartheta$ and $\hat{\vartheta}$, and their m-states, denoted by $I$ and $\hat{I}$. It helps us compare the pilot graphs for the running example in Figures \ref{figRunningExELRpilot} and \ref{figPilotRightLinearized}, and for Example \ref{exConvergenceProperty} in Figure \ref{figMultiCandidatesInBaseRightLin}. Notice that in the m-states of the $LR$ pilot $\mathcal{\hat{P}}$, the candidates are denoted by marked rules (with a $\bullet$).
\par
We observe that since grammar $\hat{G}$ is $BNF$, the graph of $\mathcal{\hat{P}}$ has the well-known property that all the edges entering the same m-state, have identical labels. In contrast, the identical-label property (which is a form of locality) does not hold for $\mathcal{P}$ and therefore a m-state of $\mathcal{P}$ is possibly split into several m-states of $\mathcal{\hat{P}}$.
\par
Due to the very special right-linearized grammar form, the following mutually exclusive classification of the m-states of $\mathcal{\hat{P}}$, is exhaustive:
\begin{itemize}
\item the m-state $\hat{I_0}$ is \emph{initial}
\item a m-state $\hat{I}$ is \emph{intermediate} if every candidate in $\hat{I}_{\vert base}$ has the form $p_A \to Y \; \bullet \; q_A$
\item a m-state $\hat{I}$ is a \emph{sink reduction} if every candidate has the form $p_A \to Y \; q_A \; \bullet$
\end{itemize}
\par
\begin{figure}[hbt!]
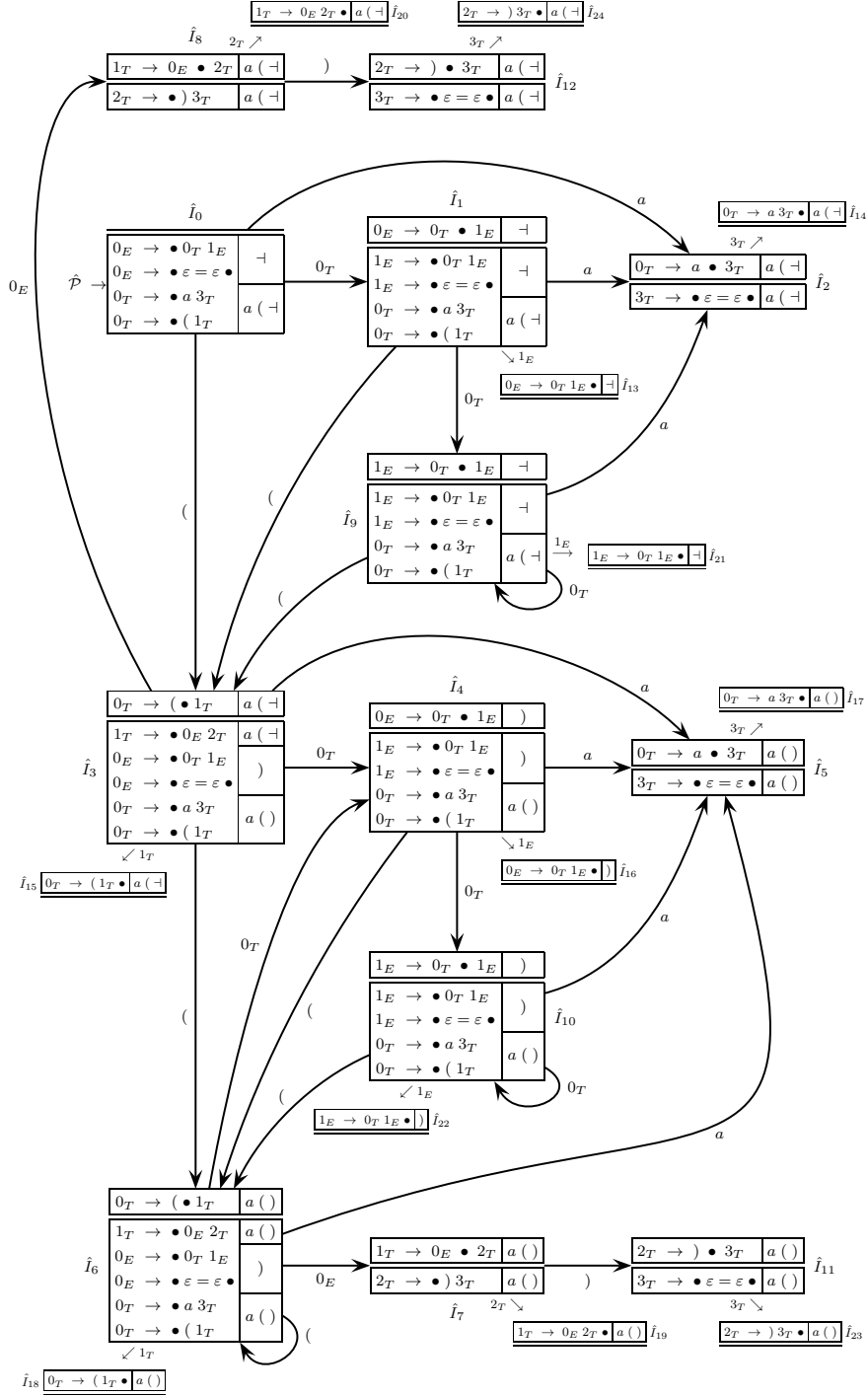

\begin{center}
\vspace{0.25cm}
\scalebox{0.75}{
\pspar\psset{border=0pt,nodesep=0pt,rowsep=1.9cm}
\begin{psmatrix}

\rnode{J8}{$\def\arraystretch{1.25}\begin{array}{|rcl|c|} \hline
1_T & \to & 0_E \; \bullet \; 2_T & a \; ( \; \dashv \\
\hline \hline
2_T & \to & \bullet \; ) \; 3_T & a \; ( \; \dashv \\
\hline
\end{array}$} &

\rnode{J12}{$\def\arraystretch{1.25}\begin{array}{|rcl|c|} \hline
2_T & \to & ) \; \bullet \; 3_T & a \; ( \; \dashv \\
\hline \hline
3_T & \to & \bullet \; \varepsilon = \varepsilon \; \bullet & a \; ( \; \dashv \\
\hline
\end{array}$} & \\

\rnode{J0}{$\def\arraystretch{1.25}\begin{array}{|rcl|c|}
\hline\hline 0_E & \to & \bullet \; 0_T \; 1_E & \multirow{2}{*}{$\dashv$} \\
0_E & \to & \bullet \; \varepsilon = \varepsilon \; \bullet & \\
\cline{4-4} 0_T & \to & \bullet \; a \; 3_T & \multirow{2}{*}{$a \; ( \; \dashv$} \\
0_T & \to & \bullet \; ( \; 1_T & \\
\hline
\end{array}$} &

\rnode{J1}{$\def\arraystretch{1.25}\begin{array}{|rcl|c|} \hline
0_E & \to & 0_T \; \bullet \; 1_E & \dashv \\
\hline \hline
1_E & \to & \bullet \; 0_T \; 1_E & \multirow{2}{*}{$\dashv$} \\
1_E & \to & \bullet \; \varepsilon = \varepsilon \; \bullet &  \\
\cline{4-4} 0_T & \to & \bullet \; a \; 3_T & \multirow{2}{*}{$a \; ( \; \dashv$} \\
0_T & \to & \bullet \; ( \; 1_T & \\
\hline
\end{array}$} &

\rnode{J2}{$\def\arraystretch{1.25}\begin{array}{|rcl|c|} \hline
0_T & \to & a \; \bullet \; 3_T & a \; ( \; \dashv  \\
\hline \hline
3_T & \to & \bullet \; \varepsilon = \varepsilon \; \bullet & a \; ( \; \dashv  \\
\hline
\end{array}$} \\ &

\rnode{J9}{$\def\arraystretch{1.25}\begin{array}{|rcl|c|} \hline
1_E & \to & 0_T \; \bullet \; 1_E & \dashv \\
\hline \hline
1_E & \to & \bullet \; 0_T \; 1_E & \multirow{2}{*}{$\dashv$} \\
1_E & \to & \bullet \; \varepsilon = \varepsilon \; \bullet &  \\
\cline{4-4} 0_T & \to & \bullet \; a \; 3_T & \multirow{2}{*}{$a \; ( \; \dashv$} \\
0_T & \to & \bullet \; ( \; 1_T & \\
\hline
\end{array}$} & \\

\rnode{J3}{$\def\arraystretch{1.25}\begin{array}{|rcl|c|} \hline
0_T & \to & ( \; \bullet \; 1_T & a \; ( \; \dashv \\
\hline \hline 1_T & \to & \bullet \; 0_E \; 2_T & a \; ( \; \dashv \\
\cline{4-4} 0_E & \to & \bullet \; 0_T \; 1_E & \multirow{2}{*}{$)$} \\
0_E & \to & \bullet \; \varepsilon = \varepsilon \; \bullet & \\
\cline{4-4} 0_T & \to & \bullet \; a \; 3_T & \multirow{2}{*}{$a \; ( \; )$} \\
0_T & \to & \bullet \; ( \; 1_T & \\
\hline
\end{array}$} &

\rnode{J4}{$\def\arraystretch{1.25}\begin{array}{|rcl|c|} \hline
0_E & \to & 0_T \; \bullet \; 1_E & ) \\
\hline \hline
1_E & \to & \bullet \; 0_T \; 1_E & \multirow{2}{*}{$)$} \\
1_E & \to & \bullet \; \varepsilon = \varepsilon \; \bullet &  \\
\cline{4-4} 0_T & \to & \bullet \; a \; 3_T & \multirow{2}{*}{$a \; ( \; )$} \\
0_T & \to & \bullet \; ( \; 1_T & \\
\hline
\end{array}$} &

\rnode{J5}{$\def\arraystretch{1.25}\begin{array}{|rcl|c|} \hline
0_T & \to & a \; \bullet \; 3_T & a \; ( \; )  \\
\hline \hline
3_T & \to & \bullet \; \varepsilon = \varepsilon \; \bullet & a \; ( \; )  \\
\hline
\end{array}$} \\ &

\rnode{J10}{$\def\arraystretch{1.25}\begin{array}{|rcl|c|} \hline
1_E & \to & 0_T \; \bullet \; 1_E & ) \\
\hline \hline
1_E & \to & \bullet \; 0_T \; 1_E & \multirow{2}{*}{$)$} \\
1_E & \to & \bullet \; \varepsilon = \varepsilon \; \bullet &  \\
\cline{4-4} 0_T & \to & \bullet \; a \; 3_T & \multirow{2}{*}{$a \; ( \; )$} \\
0_T & \to & \bullet \; ( \; 1_T & \\
\hline
\end{array}$} & \\

\rnode{J6}{$\def\arraystretch{1.25}\begin{array}{|rcl|c|} \hline
0_T & \to & ( \; \bullet \; 1_T & a \; ( \; ) \\
\hline \hline 1_T & \to & \bullet \; 0_E \; 2_T & a \; ( \; ) \\
\cline{4-4} 0_E & \to & \bullet \; 0_T \; 1_E & \multirow{2}{*}{$)$} \\
0_E & \to & \bullet \; \varepsilon = \varepsilon \; \bullet & \\
\cline{4-4} 0_T & \to & \bullet \; a \; 3_T & \multirow{2}{*}{$a \; ( \; )$} \\
0_T & \to & \bullet \; ( \; 1_T & \\
\hline
\end{array}$} &

\rnode{J7}{$\def\arraystretch{1.25}\begin{array}{|rcl|c|} \hline
1_T & \to & 0_E \; \bullet \; 2_T & a \; ( \; ) \\
\hline \hline
2_T & \to & \bullet \; ) \; 3_T & a \; ( \; ) \\
\hline
\end{array}$} &

\rnode{J11}{$\def\arraystretch{1.25}\begin{array}{|rcl|c|} \hline
2_T & \to & ) \; \bullet \; 3_T & a \; ( \; ) \\
\hline \hline
3_T & \to & \bullet \; \varepsilon = \varepsilon \; \bullet & a \; ( \; ) \\
\hline
\end{array}$}

\nput[labelsep=0.5cm]{-55}{J1}{\scalebox{0.8}{$\def\arraystretch{1.25}\begin{array}[c]{|rcl|c|} \hline 0_E & \to & 0_T \; 1_E \; \bullet & \dashv \\ \hline\hline \end{array}$\;$\hat{I}_{13}$}}
\nput[labelsep=0.1cm]{-55}{J1}{\scalebox{0.8}{$\searrow 1_E$}}

\nput[labelsep=0.75cm]{-15}{J9}{\scalebox{0.8}{$\def\arraystretch{1.25}\begin{array}[c]{|rcl|c|} \hline 1_E & \to & 0_T \; 1_E \; \bullet &\dashv \\ \hline\hline \end{array}$\;$\hat{I}_{21}$}}
\nput[labelsep=0.1cm]{-20}{J9}{\scalebox{0.8}{$\longrightarrow$}}
\nput[labelsep=0.15cm]{-13}{J9}{\scalebox{0.8}{$1_E$}}

\nput[labelsep=0.5cm]{-55}{J4}{\scalebox{0.8}{$\def\arraystretch{1.25}\begin{array}[c]{|rcl|c|} \hline 0_E & \to & 0_T \; 1_E \; \bullet & ) \\ \hline\hline \end{array}$\;$\hat{I}_{16}$}}
\nput[labelsep=0.1cm]{-55}{J4}{\scalebox{0.8}{$\searrow 1_E$}}

\nput[labelsep=0.5cm]{-115}{J10}{\scalebox{0.8}{$\def\arraystretch{1.25}\begin{array}[c]{|rcl|c|} \hline 1_E & \to & 0_T \; 1_E \; \bullet & ) \\ \hline\hline \end{array}$\;$\hat{I}_{22}$}}
\nput[labelsep=0.1cm]{-115}{J10}{\scalebox{0.8}{$\swarrow 1_E$}}

\nput[labelsep=0.5cm]{-120}{J3}{\scalebox{0.8}{$\hat{I}_{15}$\;$\def\arraystretch{1.25}\begin{array}[c]{|rcl|c|} \hline 0_T & \to & ( \; 1_T \; \bullet & a \; ( \; \dashv \\ \hline\hline \end{array}$}}
\nput[labelsep=0.1cm]{-120}{J3}{\scalebox{0.8}{$\swarrow 1_T$}}

\nput[labelsep=0.5cm]{-120}{J6}{\scalebox{0.8}{$\hat{I}_{18}$\;$\def\arraystretch{1.25}\begin{array}[c]{|rcl|c|} \hline 0_T & \to & ( \; 1_T \; \bullet & a \; ( \; ) \\ \hline\hline \end{array}$}}
\nput[labelsep=0.1cm]{-120}{J6}{\scalebox{0.8}{$\swarrow 1_T$}}

\nput[labelsep=0.5cm]{60}{J5}{\scalebox{0.8}{$\def\arraystretch{1.25}\begin{array}[c]{|rcl|c|} \hline 0_T & \to & a \; 3_T \; \bullet & a \; ( \; ) \\ \hline\hline \end{array}$\;$\hat{I}_{17}$}}
\nput[labelsep=0.1cm]{60}{J5}{\scalebox{0.8}{$3_T\nearrow$}}

\nput[labelsep=0.5cm]{60}{J2}{\scalebox{0.8}{$\def\arraystretch{1.25}\begin{array}[c]{|rcl|c|} \hline 0_T & \to & a \; 3_T \; \bullet & a \; ( \; \dashv \\ \hline\hline \end{array}$\;$\hat{I}_{14}$}}
\nput[labelsep=0.1cm]{60}{J2}{\scalebox{0.8}{$3_T\nearrow$}}

\nput[labelsep=0.5cm]{-45}{J7}{\scalebox{0.8}{$\def\arraystretch{1.25}\begin{array}[c]{|rcl|c|} \hline 1_T & \to & 0_E \; 2_T \; \bullet & a \; ( \; ) \\ \hline\hline \end{array}$\;$\hat{I}_{19}$}}
\nput[labelsep=0.1cm]{-45}{J7}{\scalebox{0.8}{$2_T\searrow$}}

\nput[labelsep=0.5cm]{-60}{J11}{\scalebox{0.8}{$\def\arraystretch{1.25}\begin{array}[c]{|rcl|c|} \hline 2_T & \to & ) \; 3_T \; \bullet & a \; ( \; ) \\ \hline\hline \end{array}$\;$\hat{I}_{23}$}}
\nput[labelsep=0.1cm]{-60}{J11}{\scalebox{0.8}{$3_T\searrow$}}

\nput[labelsep=0.5cm]{45}{J8}{\scalebox{0.8}{$\def\arraystretch{1.25}\begin{array}[c]{|rcl|c|} \hline 1_T & \to & 0_E \; 2_T \; \bullet & a \; ( \; \dashv \\ \hline\hline \end{array}$\;$\hat{I}_{20}$}}
\nput[labelsep=0.1cm]{45}{J8}{\scalebox{0.8}{$2_T\nearrow$}}

\nput[labelsep=0.5cm]{60}{J12}{\scalebox{0.8}{$\def\arraystretch{1.25}\begin{array}[c]{|rcl|c|} \hline 2_T & \to & ) \; 3_T \; \bullet & a \; ( \; \dashv \\ \hline\hline \end{array}$\;$\hat{I}_{24}$}}
\nput[labelsep=0.1cm]{60}{J12}{\scalebox{0.8}{$3_T\nearrow$}}

\nput[labelsep=5pt]{90}{J0}{$\hat{I}_0$}

\nput[labelsep=5pt]{90}{J1}{$\hat{I}_1$}

\nput[labelsep=5pt]{180}{J9}{$\hat{I}_9$}

\nput[labelsep=5pt]{0}{J2}{$\hat{I}_2$}

\nput[labelsep=5pt]{180}{J3}{$\hat{I}_3$}

\nput[labelsep=5pt]{90}{J4}{$\hat{I}_4$}

\nput[labelsep=5pt]{0}{J10}{$\hat{I}_{10}$}

\nput[labelsep=5pt]{0}{J5}{$\hat{I}_5$}

\nput[labelsep=5pt]{180}{J6}{$\hat{I}_6$}

\nput[labelsep=5pt]{-90}{J7}{$\hat{I}_7$}

\nput[labelsep=5pt]{0}{J11}{$\hat{I}_{11}$}

\nput[labelsep=5pt]{90}{J8}{$\hat{I}_8$}

\nput[labelsep=5pt]{0}{J12}{$\hat{I}_{12}$}

\nput[labelsep=0pt]{180}{J0}{$\hat{\mathcal{P}} \; \to$}

\ncline{J0}{J1} \naput{$0_T$}

\ncarc[arcangle=45]{J0}{J2} \aput(0.85){$a$}

\ncline{J1}{J2} \naput{$a$}

\ncline{J1}{J9} \naput{$0_T$}

\nccurve[angleA=15,angleB=-110,ncurvA=0.8,ncurvB=0.6]{J9}{J2} \nbput{$a$}

\ncarc[arcangle=-20]{J9}{J3} \nbput{$($}

\nccurve[angleA=-30,angleB=-60,ncurvA=2,ncurvB=2]{J9}{J9} \aput(0.25){$0_T$}

\ncline{J0}{J3} \nbput{$($}

\ncarc[arcangle=-15]{J1}{J3} \nbput{$($}

\nccurve[angleA=120,angleB=-180,ncurvA=1,ncurvB=0.3]{J3}{J8} \naput{$0_E$}

\ncline{J8}{J12} \naput{$)$}

\ncarc[arcangle=45]{J3}{J5} \aput(0.85){$a$}

\ncline{J4}{J5} \naput{$a$}

\ncline{J3}{J4} \naput{$0_T$}

\ncline{J3}{J6} \nbput{$($}

\nccurve[angleA=-30,angleB=-60,ncurvA=2,ncurvB=2]{J10}{J10} \aput(0.25){$0_T$}

\ncline{J4}{J10} \naput{$0_T$}

\nccurve[angleA=15,angleB=-110,ncurvA=0.8,ncurvB=0.6]{J10}{J5} \nbput{$a$}

\ncarc[arcangle=-20]{J10}{J6} \nbput{$($}

\ncarc[arcangle=-10]{J4}{J6} \naput{$($}

\nccurve[angleB=-160,angleA=80,ncurvB=0.4,ncurvA=0.7]{J6}{J4} \naput{$0_T$}

\nccurve[angleA=-30,angleB=-60,ncurvA=2,ncurvB=2]{J6}{J6} \aput(0.25){$($}

\ncline{J6}{J7} \nbput{$0_E$}

\nccurve[angleA=20,angleB=-75,ncurvA=1.5,ncurvB=1.5]{J6}{J5} \nbput{$a$}

\ncline{J7}{J11} \nbput{$)$}

\end{psmatrix}}
\vspace{0.5cm}
\end{center}
\caption{Pilot graph of the right-linearized grammar $\hat{\mathcal{G}}$ of the running example (see Ex. \ref{reteEsprAritmRightLin}, p.\pageref{reteEsprAritmRightLin}).}
\label{figPilotRightLinearized}
\end{figure}
For instance, in Figure \ref{figPilotRightLinearized} the intermediate m-states are numbered from $1$ to $12$ and the sink reduction m-states from $13$ to $24$.
\par
We say that a candidate $\langle q_X, \, \lambda \rangle$ of $\mathcal{P}$ \emph{corresponds} to a candidate of $\hat{\mathcal{P}}$ of the form $\langle p_X \to s \; \bullet \; q_X, \, \rho \rangle$, if the look-ahead sets are identical, i.e., if $\lambda = \rho$. Then two m-states $I$ and $\hat{I}$ of $\mathcal{P}$ and $\hat{\mathcal{P}}$, respectively, are called \emph{correspondent} if the candidates in $I_{ \vert base}$ and in $\hat{I}_{ \vert base}$ correspond to each other. Moreover, we arbitrarily define as correspondent the initial m-states $I_0$ and $\hat{I}_0$. To illustrate, in the running example a few pairs of correspondent m-states are: $(\hat{I}_1, \, I_1)$, $(\hat{I}_9, \, I_1)$, $(\hat{I}_4, \, I_4)$ and $(\hat{I}_{10}, \, I_4)$.
\par
The following straightforward properties of correspondent m-states will be needed.
\begin{lemma}\label{lemmaRightLinPilot}
The mapping defined by the correspondence relation from the set containing m-state $\hat{I}_0$ and the intermediate m-states of $\hat{\mathcal{P}}$, to the set of the m-states of $\mathcal{P}$, is total, many-to-one and onto (surjective).
\begin{enumerate}
\item\label{lemmaItem1} For any terminal or nonterminal symbol $s$ and for any correspondent m-states $I$ and $\hat{I}$,
transition $\vartheta \, (I, \, s) = I'$ is defined $\iff$ transition $\vartheta \, (\hat{I}, \, s) = \hat{I'}$ is defined and m-state $\hat{I'}$ is intermediate.   moreover $I', \, \hat{I'}$ are correspondent.
\item\label{lemmaItem2} Let state $f_A$ be final non-initial. Candidate $\langle f_A, \, \lambda \rangle$ is in m-state $I$ (actually
in its base $I_{ \vert base}$) $\iff$ a correspondent m-state $\hat{I}$ contains candidates $\langle p_A \to s \; \bullet \; f_A, \, \lambda \rangle$ and $\langle f_A \to \varepsilon \; \bullet, \, \lambda \rangle$.
\item\label{lemmaItem3} Let state $0_A$ be final initial with $A \neq S$. Candidate $\langle 0_A, \, \pi \rangle$ is in m-state $I$ $\iff$
a correspondent m-state $\hat{I}$ contains candidate $\langle 0_A \to \varepsilon \; \bullet, \, \pi \rangle$.
\item\label{lemmaItem4} Let state $0_S$ be final. Candidate $\langle 0_S, \, \pi \rangle$ is in m-state $I_0$ $\iff$ candidate
$\langle 0_S \to \varepsilon \; \bullet, \, \pi \rangle$ is in m-state $\hat{I}_0$.
\item\label{lemmaItem5} For any pair of correspondent m-states $I$ and $\hat{I}$, and for any initial state $0_A$, it holds
$0_A \in I_{ \vert closure}$ $\iff$ $0_A \to \bullet \; \alpha \in \hat{I}_{ \vert closure}$ for every alternative $\alpha$ of $0_A$.
\end{enumerate}
\end{lemma}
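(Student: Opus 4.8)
The plan is to prove the whole statement by a single induction that runs the two subset constructions in lockstep. Both $\mathcal P$ and $\hat{\mathcal P}$ are generated by Algorithm~\ref{algMacchPilotaELR(1)} (and its textbook $LR(1)$ counterpart for $\hat G$): starting from the initial m-state one alternately takes a closure and a shift/goto for every symbol. I would show, by induction on the number of shifts used to reach an m-state, that (i) shifts on symbols of $G$ preserve the correspondence relation and always land in $\hat I_0$ or in an \emph{intermediate} m-state, and (ii) whenever the bases of $I$ and $\hat I$ correspond, their closures agree in the precise sense spelled out by the final four items. Claim (i) is exactly item~\ref{lemmaItem1}, while items~\ref{lemmaItem2}--\ref{lemmaItem5} are the reading of (ii) on final and on initial candidates; the totality/surjectivity claims are a corollary of reachability.

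The heart of the argument is one look-ahead identity that makes the two closure fixpoints coincide. In $\mathcal P$, clause~\eqref{eqDef1-closure} adds $\langle 0_B,\,b\rangle$ from a candidate $\langle q,\,a\rangle$ and an edge $q\stackrel B\longrightarrow r$ exactly when $b\in Ini\big(L(r)\cdot a\big)$; in $\hat{\mathcal P}$ the classical $LR(1)$ closure adds $\langle 0_B\to\bullet\,\alpha,\,c\rangle$, for every alternative $\alpha$ of $0_B$, from the item $\langle q\to\bullet\,0_B\,r,\,a\rangle$ exactly when $c\in \mathit{First}_{\hat G}(r\,a)$. Because $\hat G$ derives from the nonterminal $r$ precisely the terminal strings of $L(r)$, we have $Ini\big(L(r)\big)=\mathit{First}_{\hat G}(r)$ and $L(r)$ is nullable iff $r$ is, hence $Ini\big(L(r)\cdot a\big)=\mathit{First}_{\hat G}(r\,a)$. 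Granting the inductive hypothesis that the candidates $\langle q,a\rangle$ present in $I$ match the items of $\hat I$ whose dot stands immediately before the state $q$ with look-ahead $a$ (a base item $p\to s\bullet q$ when $q$ is non-initial, the closure items $q\to\bullet\,\alpha$ when $q$ is initial), this identity lets the two least fixpoints be built simultaneously, adding the initial candidate $0_B$ on one side exactly when \emph{all} dotted rules $0_B\to\bullet\,\alpha$ are added on the other, with identical look-aheads. This yields item~\ref{lemmaItem5}; items~\ref{lemmaItem2} and~\ref{lemmaItem3} then follow because a final net state $f$ contributes in $\hat G$ the rule $f\to\varepsilon$, whose only marked form is the complete item $f\to\varepsilon\,\bullet$ produced by that same closure step, so a reduction candidate $\langle f,\lambda\rangle$ of $\mathcal P$ sits opposite exactly that complete item, together with the intermediate base item $p\to s\bullet f$ when $f$ is non-initial. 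Item~\ref{lemmaItem4} is the base case: by the normalization forbidding re-entrance into $0_S$ (Def.~\ref{retiRicorsMacch}) the axiom candidate occurs only in $I_0$, and $I_0,\hat I_0$ are correspondent by fiat.

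For item~\ref{lemmaItem1} I would argue as follows. Take correspondent $I,\hat I$ and a symbol $s\in\Sigma\cup V$, and write $\bar s=s$ if $s\in\Sigma$ and $\bar s=0_C$ if $s=C\in V$. Only a closure item $\langle q\to\bullet\,\bar s\,q',\,\mu\rangle$ of $\hat I$ can advance on $\bar s$ (base items shift on the trailing state, producing sink reductions); since $q'$ ends the two-symbol rule, its $LR(1)$ look-ahead $\mu$ equals the look-ahead set $\lambda$ carried by $q$, which is exactly what $\mathcal P$ propagates along $q\stackrel s\longrightarrow q'$. Hence $\hat\vartheta(\hat I,\bar s)$ is defined iff $\vartheta(I,s)$ is, the resulting item $\langle q\to\bar s\bullet q',\lambda\rangle$ has its dot immediately before the trailing state $q'$ so the target is intermediate, and its base $\{\langle q\to\bar s\bullet q',\lambda\rangle\}$ corresponds to the base $\{\langle q',\lambda\rangle\}$ of $\vartheta(I,s)$; the closures then agree by the previous paragraph. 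Totality, surjectivity and the many-to-one character follow by reachability: every m-state of $\mathcal P$ is reached from $I_0$ along a sequence of shifts, the parallel $\bar s$-shifts in $\hat{\mathcal P}$ reach a correspondent intermediate m-state, and distinct intermediate m-states differing only in the shifted first symbol $\bar s$ (which $\hat{\mathcal P}$ keeps apart through its identical-entering-label property but $\mathcal P$ does not record) collapse onto the same m-state of $\mathcal P$, as witnessed by $\hat I_1,\hat I_9\mapsto I_1$ in Figure~\ref{figPilotRightLinearized}.

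The main obstacle is the simultaneous fixpoint argument of the second paragraph: both closures are least fixpoints of monotone clauses, so establishing that they stay in lockstep requires setting the induction on the fixpoint iteration itself, carrying as invariant not merely which states or rules appear but the equality of their look-ahead sets, and invoking the identity $Ini\big(L(r)\cdot a\big)=\mathit{First}_{\hat G}(r\,a)$ at each step. Everything else is bookkeeping over the two possible shapes of the dotted rules.
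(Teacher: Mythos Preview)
Your proposal is correct and follows essentially the same approach as the paper's proof: both establish that the closures of correspondent m-states agree by appealing to the equality of the two look-ahead computations, and then read off Items~\ref{lemmaItem1}--\ref{lemmaItem5} from that closure agreement together with the shape of $\hat G$-rules. The paper compresses the closure argument into a single sentence (``it is easy to check that our Def.~\ref{defLookaheadSet} of closure function computes exactly the same sets''), whereas you make the underlying identity $Ini\big(L(r)\cdot a\big)=\mathit{First}_{\hat G}(r\,a)$ explicit and organize the whole proof as a single lockstep induction; this is a cleaner packaging of the same idea rather than a different route.
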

\begin{proof} (of Lemma). First, to prove that the mapping is onto, assume by contradiction that a m-state $I$ such that $\langle q_A, \, \lambda \rangle \in I$, has no correspondent intermediate m-state in $\hat{\mathcal{P}}$. Clearly, there exists a $\hat{I}$ such that the kernels of $\hat{I}_{ \vert base}$ and $I_{ \vert base}$ are identical because the right-linearized grammar has the same derivations as $\mathcal{M}$ (vs Sect. \ref{subSectDerivation}), and it remains that the look-aheads of a correspondent candidate in $\hat{I}_{ \vert base}$ and $I_{ \vert base}$ differ. The definition of look-ahead set for a $BNF$ grammar (not included) is the traditional one; it is easy to check that our Def. \ref{defLookaheadSet} of closure function computes exactly the same sets, a contradiction.
\par
\emph{Item} \ref{lemmaItem1}. We observe that the bases of $I$ and $\hat{I}$ are identical, therefore any candidate $\langle 0_B, \, \pi\rangle$ computed by  Def. \ref{defLookaheadSet}, matches a candidate in $\hat{I}$, with $\langle 0_B \to \bullet \; s \; q_B, \, \pi \rangle$ computed by the traditional closure function for $\hat{\mathcal{P}}$. Therefore if $\vartheta \, (I, \, s)$ is defined, also $\vartheta \, (\hat{I}, \, s)$ is defined and yields an intermediate m-state.
Moreover the next m-states are clearly correspondent since their bases are identical.
On the other hand, if  $\hat{\vartheta} \, (\hat{I}, \, s)$ is a sink reduction m-state then $\vartheta \, (I, \, s)$ is undefined.
\par
\emph{Item} \ref{lemmaItem2}. Consider an edge $p_A \stackrel s \to f_A$ entering state $f_A$. By item \ref{lemmaItem1}, consider $J$ and $\hat{J}$ as correspondent m-states that include the predecessor state $p_A$, resp. within the  candidate $\langle p_A, \, \lambda \rangle$ and $\langle q_A \to t \; \bullet \; p_A, \, \lambda \rangle$. Then $\vartheta \, (J, \, s)$ includes $\langle f_A, \, \lambda \rangle$ and $\hat{\vartheta} \, (\hat{J}, \, s)$ includes $\langle p_A \to s \; \bullet \; f_A, \, \lambda \rangle$, hence also $\hat{\vartheta} \, (\hat{J}, \, s)_{ \vert closure}$ includes $\langle f_A \to \varepsilon \; \bullet, \, \lambda \rangle$.
\par
\emph{Item} \ref{lemmaItem3}. If candidate $\langle 0_A, \, \pi \rangle$ is in $I$, it is in $I_{ \vert closure}$. Thus $I_{ \vert base}$ contains a candidate $\langle p_B, \, \lambda \rangle$ such that $closure \, \left(\langle p_B, \, \lambda \rangle \right) \ni \langle 0_A, \, \pi \rangle$. Therefore, some intermediate m-state contains a candidate $\langle q_B \to s \; \bullet \; p_B, \, \lambda \rangle$ in the base and candidate
$\langle 0_A \to \varepsilon \; \bullet, \, \pi \rangle$ in its closure. The converse reasoning is analogous.
\par
\emph{Item} \ref{lemmaItem4}. This case is obvious.
\par
\emph{Item} \ref{lemmaItem5}. We consider the case where the m-state $I$ is not initial and the state $0_A \in I_{ \vert closure}$ results from the closure operation applied to a candidate of $I_{ \vert base}$ (the other cases, where $I$ is the initial m-state $I_0$ or the state $0_A$ results from the closure operation applied to a candidate of $I_{ \vert closure}$, can be similarly dealt with): $0_A \in I_{ \vert closure}$ $\iff$ some machine $M_B$ includes the edge $p_B \stackrel A \to q_B$ $\iff$ $p_B \in I_{ \vert base}$ $\iff$ for every correspondent state $\hat{I}$ there are some suitable $X$ and $r_B$ such that $r_B \to X \bullet \; p_B \in \hat{I}_{ \vert base}$ $\iff$ $p_B \to \bullet \; 0_A \; q_B \in \hat{I}_{ \vert closure}$ $\iff$
$0_A \to \bullet \; \alpha \in \hat{I}_{ \vert closure}$ for every alternative $\alpha$ of nonterminal $0_A$.
\par
This concludes the proof of the Lemma.
\end{proof}
\vspace{0.25cm}
\par
\textbf{Part ``if''} (of Theorem). We argue that a violation of the $ELR \, (1)$ condition in $\mathcal{P}$ implies an $LR \, (1)$ conflict in $\mathcal{\hat{P}}$. Three cases need to be examined.
\begin{description}
\item[\textbf{Shift - Reduce conflict}]
\par
Consider a conflict in m-state:
\[
I \ni \langle f_B, \, \set{ \; a \; } \rangle \qquad \text{where} \ f_B \ \text{is final non-initial and} \ \vartheta \, (I, \, a) \ \text{is defined}
\]
By Lemma \ref{lemmaRightLinPilot}, items (1) and (2), there exists a correspondent m-state $\hat{I}$ such that $\vartheta \, (\hat{I}, \, a)$ is defined and $\langle f_B \to \varepsilon \; \bullet, \, \set{ \; a \; } \rangle \in \hat{I}$, thus proving that the same conflict is in $\mathcal{\hat{P}}$.
\par
Similarly, consider a conflict in m-state:
%\[
%I \supseteq \left\{\langle p_A , \pi \rangle,\;  \langle 0_B , \{a\}  \rangle \right\}, \text{ where } 0_B \text{ is final and initial  and  } \vartheta(I, a) \text{is defined.}
%\]
\[
I \ni \langle 0_B, \, \set{ \; a \; } \rangle \qquad \text{where} \ 0_B \ \text{is final and initial and } \ \vartheta \, (I, \, a) \ \text{is defined}
\]
By Lemma \ref{lemmaRightLinPilot}, items (1) and (3), there exists a correspondent  m-state $\hat{I}$ such that $\vartheta \, (\hat{I}, \, a)$ is defined and $\langle 0_B \to \varepsilon \; \bullet, \, \set { \; a \; } \rangle \in \hat{I}$, thus proving that the same conflict is in $\mathcal{\hat{P}}$.
\par
\item[\textbf{Reduce - Reduce conflict}]
\par
Consider a conflict in m-state:
\[
I \supseteq \big\{ \, \langle f_A , \, \set{ \; a \; } \rangle, \ \langle f_B , \, \set{ \; a \; } \rangle \, \big\} \qquad \text{where} \ f_A \ \text{and} \ f_B \ \text{are final non-initial}
\]
By item (2) of the Lemma, the same conflict exists in a m-state:
\[
\hat{I} \supseteq \big\{ \, \langle f_A \to \varepsilon \; \bullet, \, \set{ \; a \; } \rangle, \ \langle f_B \to \varepsilon \; \bullet, \ \set{ \; a \; } \rangle \, \big\}
\]
Similarly, a conflict in m-state:
\[
I \supseteq \big\{ \, \langle 0_A , \, \set{ \; a \; } \rangle, \ \langle f_B, \, \set{ \; a \; } \rangle \, \big\} \qquad\text{where} \ 0_A \ \text{and} \ f_B \ \text{are final}
\]
corresponds, by items (2) and (3), to a conflict in m-state:
\[
\hat{I} \supseteq \big\{ \, \langle 0_A \to \varepsilon \; \bullet, \, \set{ \; a \; } \rangle, \ \langle f_B \to \varepsilon \; \bullet, \set{ \; a \; } \rangle \, \big\}
\]
By item (4) the same holds true for the special case $0_A = 0_S$.
\par
\item[\textbf{Convergence conflict}]
\par
Consider a convergence conflict $I \stackrel X \to I'$, where:
\[
I \supseteq \big\{ \, \langle p_A, \, \set{ \; a \; } \rangle, \langle q_A, \, \set{ \; a \; } \rangle \, \big\}
\]
\[
\delta \, (p_A, \, X) = \delta \, (q_A, \, X) = r_a \qquad \ I'_{ \vert base} \ni \langle r_A, \, \set{ \; a \; } \rangle
\]
First, if neither $p_A$ nor $q_A$ are the initial state, both candidates are in the base of $I$. By item (1) there are correspondent intermediate m-states and transition $\hat{I} \stackrel X \to \hat{I'}$ with $\hat{I'} \supseteq \big\{ \, \langle p_A \to X \; \bullet \; r_A, \; \set{ \; a \; } \rangle, \ \langle p_A \to X \; \bullet \; r_A, \, \set{ \; a \; } \rangle \, \big\}$. Therefore m-state $\hat{\vartheta} \, (\hat{I'}, \, r_A)$ contains two reduction candidates with identical look-ahead, which is a conflict.
\par
Quite similarly, if (arbitrarily) $q_A \equiv 0_A$, then candidate $\langle p_A, \, \set{ \; a \; } \rangle$ is in the base of $I$ and $I_{ \vert base}$ necessarily contains a candidate $C = \langle s, \, \rho \rangle$ such that $closure \, (C) = \langle 0_A, \, \set{ \; a \; } \rangle$. Therefore for some $t$ and $Y$, there exists a m-state $\hat{I}$ correspondent of $I$ such that $\langle t \to Y \; \bullet \; s, \, \rho \rangle \in \hat{I}_{ \vert base}$ and $\langle 0_A \to \bullet \; X \; r_A, \, \set{ \; a \; } \rangle \in \hat{I}_{ \vert closure}$, hence it holds
$\hat{\vartheta} \, (\hat{I}, \, X) = \hat{I}'$, and $\langle 0_A \to X \; \bullet \; r_A, \, \set{ \; a \; } \rangle \in \hat{I}'$.
\par
\end{description}
\vspace{0.25cm}
\par
\textbf{Part ``only if''} (of Theorem). We argue  that every $LR \, (1)$ conflict in $\mathcal{\hat{P}}$ entails a violation of the $ELR \, (1)$ condition in $\mathcal{P}$.
\begin{description}
\item[\textbf{Shift - Reduce conflict}]
The  conflict  occurs in a m-state $\hat{I}$ such that $\langle f_B \to \varepsilon \; \bullet, \, \set{ \; a \; } \rangle \in \hat{I}$ and $\vartheta \, (\hat{I}, \, a)$ is defined. By items (1) and (2) (or (3)) of the Lemma, the correspondent m-state $I$ contains $\langle f_B, \, \set{ \; a \; } \rangle$ and the move $\vartheta \, (I, \, a)$ is defined, thus resulting in the same conflict.
\par
\item[\textbf{Reduce - Reduce conflict}]
\par
First, consider a m-state having the form:
\[
\hat{I}_{ \vert closure} \supseteq \big\{ \, \langle f_A \to \varepsilon \; \bullet, \, \set{ \; a \; } \rangle, \
\langle f_B \to \varepsilon \; \bullet, \, \set{ \; a \;} \rangle \, \big\}
\]
where $f_A$ and $f_B$ are final non-initial. By item (2) of the Lemma, the correspondent m-state $I$ contains the candidates
$I \supseteq \big\{ \, \langle f_A, \, \set{ \; a \; } \rangle, \ \langle f_B, \, \set{ \; a \; } \rangle \, \big\}$ and has the same conflict.
\par
Second, consider a m-state having the form:
\[
\hat{I}_{ \vert closure} \supseteq \big\{ \, \langle f_A \to \varepsilon \; \bullet, \, \set{ \, a \; } \rangle, \ \langle 0_B \to \varepsilon \; \bullet, \, \set{ \; a \; } \rangle \, \big\}
\]
where $f_A$ is final non-initial. By items (2) and (3) the same conflict is in the correspondent m-state $I$.
\par
At last, consider  a reduce-reduce conflict in a sink reduction m-state:
\[
\hat{I} \supseteq \big\{ \, \langle p_A \to X \; r_A \; \bullet, \, \set{ \; a \; } \rangle, \ \langle q_A \to X \; r_A \; \bullet, \ \set{ \; a \; } \rangle \, \big\}
\]
Then there exist a m-state and a transition $\hat{I}' \stackrel {r_A} \to \hat{I}$ such that $\hat{I}'$ contains candidates $\langle p_A \to X \; \bullet \; r_A, \, \set{ \; a \; } \rangle$ and $\langle q_A \to X \; \bullet \; r_A, \, \set{ \; a \; } \rangle$. Therefore the correspondent m-state $I'$ contains candidate $\langle r_A, \, \set{ \; a \; } \rangle$, and there are a m-state $\hat{I}''$ and a transition $\hat{I}'' \stackrel X \to \hat{I}'$ such that it holds $\big\{ \, \langle p_A \to \bullet \; X \; r_A, \set{ \; a \; } \rangle, \ \langle q_A \to \bullet \; X \; r_A, \, \set{ \; a \; } \rangle \, \big\} \subseteq \hat{I}''_{ \vert closure} $. Since $\hat{I}''_{ \vert closure} \neq \emptyset$, $\hat{I}''$ is not
a sink reduction state; let us call $I''$ its correspondent state. Then:
\begin{itemize}
\item If $p_A$ is initial then $\langle p_A, \set{ \; a \; } \rangle \in I''_{ \vert closure}$ by virtue of Item \ref{lemmaItem5}.
\item If $p_A$ is not initial then there exists a candidate $\langle t_A \to Z \bullet \; p_A, \, \set{ \; a \; } \rangle \in
\hat{I}''_{ \vert base}$ (notice that the look-ahead is the same because we are still in the same machine $M_A$)
and $\langle p_A, \, \set{ \; a \; } \rangle \in I''$. A similar reasoning applies to state $q_A$. Therefore
$\langle p_A, \, \set{ \; a \; } \rangle \in I''$ and $I'' \stackrel X \to I'$ has a convergence conflict.
\end{itemize}
\end{description}
This concludes the `if'' and ``only if'' parts, and the Theorem itself. \qed
\end{proof}
\vspace{0.25cm}
\par
As a second example to illustrate convergence conflicts, the pilot graph $\hat{\mathcal{P}}$ equivalent to $\mathcal{P}$ of Figure \ref{figPilotWithConvergence}, p.\pageref{figPilotWithConvergence}, is shown in Figure \ref{figMultiCandidatesInBaseRightLin}.
\par
\begin{figure}[htb!]
\vspace{10pt}
\begin{center}
\scalebox{1.0}{
\pspar\psset{border=0cm,nodesep=0pt,colsep=30pt,rowsep=30pt}
\begin{psmatrix}

\rnode{I0}{$\def\arraystretch{1.25}\begin{array}{|rcl|c|} \hline\hline
0_S & \to & \bullet \; a \; 1_S & \multirow{3}{*}{$\dashv$} \\
0_S & \to & \bullet \; b \; 4_S & \\
0_S & \to & \bullet \; 0_A \; 5_S & \\ \cline{4-4}
0_A & \to & \bullet \; a \; 1_A & e \\ \hline
\end{array}$}

&

\rnode{I1}{$\def\arraystretch{1.25}\begin{array}{|rcl|c|} \hline
0_S & \to & a \; \bullet \; 1_S & \dashv \\ \cline{4-4}
0_A & \to & a \; \bullet \; 1_A & e \\ \hline\hline
1_S & \to & \bullet \; b \; 2_S & \dashv \\ \cline{4-4}
1_A & \to & \bullet \; 0_S \; 2_A & \multirow{5}{*}{$e$} \\
0_S & \to & \bullet \; a \; 1_S & \\
0_S & \to & \bullet \; b \; 4_S & \\
0_S & \to & \bullet \; 0_A \; 5_S & \\
0_A & \to & \bullet \; a \; 1_A & \\ \hline
\end{array}$}

&

\rnode{I4}{$\def\arraystretch{1.25}\begin{array}{|rcl|c|} \hline
0_S & \to & a \; \bullet \; 1_S & \multirow{2}{*}{$e$} \\
0_A & \to & a \; \bullet \; 1_A & \\ \hline\hline
1_S & \to & \bullet \; b \; 2_S & \multirow{6}{*}{$e$}  \\
1_A & \to & \bullet \; 0_S \; 2_A & \\
0_S & \to & \bullet \; a \; 1_S & \\
0_S & \to & \bullet \; b \; 4_S & \\
0_S & \to & \bullet \; 0_A \; 5_S & \\
0_A & \to & \bullet \; a \; 1_A & \\ \hline
\end{array}$} \\

\rnode{gram}{\multirow{3}{*}{
$\left\{\def\arraystretch{1.25}\begin{array}{rcl}
0_S & \to & a \; 1_S \\
0_S & \to & b \; 4_S \\
0_S & \to & 0_A \; 5_S \\
1_S & \to & b \; 2_S \\
2_S & \to & c \; 3_S \\
2_S & \to & d \; 3_S \\
3_S & \to & \varepsilon \\
4_S & \to & c \; 3_S \\
5_S & \to & e \; 3_S \\
0_A & \to & a \; 1_A \\
1_A & \to & 0_S \; 2_A \\
2_A & \to & \varepsilon \\
\end{array}\right. \qquad$}}

&

\rnode{I5}{$\def\arraystretch{1.25}\begin{array}{|rcl|c|} \hline
1_S & \to & b \; \bullet \; 2_S & \dashv \\ \cline{4-4}
0_S & \to & b \; \bullet \; 4_S & e \\ \hline\hline
2_S & \to & \bullet \; c \; 3_S &  \multirow{2}{*}{$\dashv$} \\
2_S & \to & \bullet \; d \; 3_S & \\ \cline{4-4}
4_S & \to & \bullet \; c \; 3_S & e \\ \hline
\end{array}$}

&

\rnode{I8}{$\def\arraystretch{1.25}\begin{array}{|rcl|c|} \hline
1_S & \to & b \; \bullet \; 2_S & \multirow{2}{*}{$e$} \\
0_S & \to & b \; \bullet \; 4_S & \\ \hline\hline
2_S & \to & \bullet \; c \; 3_S & \multirow{3}{*}{$e$} \\
2_S & \to & \bullet \; d \; 3_S & \\
4_S & \to & \bullet \; c \; 3_S & \\ \hline
\end{array}$} \\

&

\rnode{I10}{$\def\arraystretch{1.25}\begin{array}{|rcl|c|} \hline
2_S & \to & c \; \bullet \; 3_S & \dashv \\ \cline{4-4}
4_S & \to & c \; \bullet \; 3_S & e \\ \hline\hline
3_S & \to & \bullet \; \varepsilon = \varepsilon \; \bullet & e \; \dashv \\ \hline
\end{array}$}

&

\rnode{I11}{$\def\arraystretch{1.25}\begin{array}{|rcl|c|} \hline
2_S & \to & c \; \bullet \; 3_S & \multirow{2}{*}{$e$} \\
4_S & \to & c \; \bullet \; 3_S & \\ \hline\hline
3_S & \to & \bullet \; \varepsilon = \varepsilon \; \bullet & e \\ \hline
\end{array}$} \\

&

\rnode{I10sink}{$\def\arraystretch{1.25}\begin{array}{|rcl|c|} \hline
2_S & \to & c \; 3_S \; \bullet & \dashv \\ \cline{4-4}
4_S & \to & c \; 3_S \; \bullet & e \\ \hline\hline
\end{array}$}

&
\rnode{I11sink}{$\def\arraystretch{1.25}\begin{array}{|rcl|c|} \hline
2_S & \to & c \; 3_S \; \bullet & \multirow{2}{*}{$e$} \\
4_S & \to & c \; 3_S \; \bullet & \\ \hline\hline
\end{array}$}

\nput[labelsep=5pt]{90}{I0}{$\hat{I}_0$}

\nput[labelsep=5pt]{90}{I1}{$\hat{I}_1$}

\nput[labelsep=5pt]{90}{I4}{$\hat{I}_4$}

\nput[labelsep=5pt]{-180}{I5}{$\hat{I}_5$}

\nput[labelsep=5pt]{-180}{I8}{$\hat{I}_8$}

\nput[labelsep=5pt]{-180}{I10}{$\hat{I}_{10}$}

\nput[labelsep=5pt]{-180}{I11}{$\hat{I}_{11}$}

\nput[labelsep=5pt]{-180}{I10sink}{$\hat{I}_{10\,sink}$}

\nput[labelsep=5pt]{-180}{I11sink}{$\hat{I}_{11\,sink}$}

\nput[labelsep=0pt]{180}{I0}{$\hat{\mathcal{P}} \; \to$}

\nput[labelsep=5pt]{-90}{I11sink}{\parbox{3cm}{\centering reduce-reduce conflict \par $\iff$ \par convergence conflict}}

\nput[labelsep=5pt]{90}{gram}{\parbox{3cm}{\centering right-linearized \par grammar for Ex. \ref{exConvergenceProperty}}}

\ncline{I0}{I1} \naput{$a$}

\ncline{I1}{I4} \naput{$a$}

\ncline{I1}{I5} \nbput{$b$}

\ncline{I4}{I8} \nbput{$b$}

\nccurve[angleA=-20,angleB=20,ncurvA=2.5,ncurvB=2.5]{I4}{I4} \nbput{$a$}

\ncline{I5}{I10} \nbput{$c$}

\ncline{I8}{I11} \nbput{$c$}

\ncline{I10}{I10sink} \nbput{$3_S$}

\ncline{I11}{I11sink} \nbput{$3_S$}

\end{psmatrix}}
\vspace{1cm}
\end{center}
\caption{Part of the (traditional - with marked rules) pilot of the right-linearized grammar for Ex. \ref{exConvergenceProperty}; the reduce-reduce conflict in m-state $\hat{I}_{11 \, sink}$ matches the convergence conflict of the edge $I_8 \stackrel c \to I_{11}$ of $\mathcal{P}$ (Fig. \ref{figPilotWithConvergence}, p.\pageref{figPilotWithConvergence}). }\label{figMultiCandidatesInBaseRightLin}
\end{figure}
\par
\subsection*{Proof of property \ref{PropDisjointGuideSetsConv}}
If the guide sets of a $PCFG$ are disjoint, then the machine net satisfies the $ELL \, (1)$ condition of Definition \ref{defELL(1)condition}.
\begin{proof}
Since the $ELL \, (1)$ condition consists of the three properties of the $ELR \, (1)$ pilot: (1) absence of left recursion; (2) $STP$, i.e., absence of multiple transitions; and (3) absence of shift-reduce and reduce-reduce conflicts, we will prove that the presence of disjoint guide sets in the $PCFG$ implies that all these three conditions hold.
\begin{enumerate}
\item We prove that if a grammar (represented as a net) is left recursive then its guide sets are not disjoint. If the grammar is left recursive
then $\exists \, n > 1$ such that in the $PCFG$ there are $n$ call edges $0_{A_1} \dashrightarrow 0_{A_2}$,
$0_{A_2} \dashrightarrow 0_{A_3}$, $\ldots$, $0_{A_n} \dashrightarrow 0_{A_1}$. Then since $\nexists \, A \in V$ such that $L_A \, (G) = \set{ \; \varepsilon \; }$, it holds $\exists \, a \in \Sigma$ and $\exists \, A_i \in V$ such that there is a shift edge $0_{A_i} \stackrel a \to p_A$ and $a \in Gui \, (0_{A_i} \dashrightarrow 0_{A_{i + 1}})$, hence the guide sets for these two edges in the $PCFG$ are not disjoint. Notice that the presence of left recursion due to rules of the kind $A \to X \; A \; \ldots$ with $X$ a nullable nonterminal, can be ruled out because this kind of left recursion leads to a shift-reduce conflict, as discussed in Section \ref{subsectELL(1)condition} and shown in Figure \ref{FigLeftRecDerivNoELL1}.
\item \label{i2} We prove that the presence of multiple transitions implies that the guide sets in the $PCFG$
are not all disjoint. This is done by induction, through starting from the initial m-state of the pilot
automaton (which has an empty base) and showing that all reachable m-states of the pilot automaton
satisfy the Single Transition Property ($STP$), unless the guide sets of the $PCFG$ are not all disjoint.
We also note that the transitions from the m-states satisfying $STP$ lead to m-states the base of which is a
singleton set: we call such m-states Singleton Base and we say that they satisfy the Singleton Base
Property ($SBP$).
\par
\underline{Induction base}: Assume there exists a multiple transition from the initial m-state
$I_0 = closure \, (\left\langle 0_S, \, \set{ \; \dashv \; } \right\rangle)$. Hence $I_0$ includes $n > 1$ candidates
$\left\langle 0_{X_1}, \, \pi_1 \right\rangle$, $\left\langle 0_{X_2}, \, \pi_2 \right\rangle$, $\ldots$,
$\left\langle 0_{X_n}, \, \pi_n \right\rangle$, and for some $h$ and $k$ with $1 \leq h < k \leq n$,
the machine net includes transitions $0_{X_h} \stackrel X \to p_{X_h}$ and $0_{X_k} \stackrel X \to p_{X_k}$,
with $X$ being a terminal symbol s.t. $X = a \in \Sigma$ or being a nonterminal one s.t. $X = Z \in V$.
Let us first consider the case $X = a \in \Sigma$ and assume the candidate $\left\langle 0_{X_k}, \, \pi_k \right\rangle$ derives from candidate $\left\langle 0_{X_h}, \, \pi_h \right\rangle$ through a (possibly iterated) closure operation;
then the $PCFG$ includes the call edges $0_{X_h} \dashrightarrow 0_{X_{h+1}}$, $\ldots$, $0_{X_{k-1}} \dashrightarrow 0_{X_k}$, the inclusions
$\set{ \; a \; } \subseteq Gui \, (0_{X_{k-1}} \dashrightarrow 0_{X_k}) \subseteq \, \ldots \, \subseteq Gui \, (0_{X_{h}} \dashrightarrow 0_{X_{h+1}})$ hold, and there are two non-disjoint guide sets $Gui \, (0_{X_h} \stackrel a \to p_{X_h}) = \set{ \; a \; }$ and $Gui \, (0_{X_{h}} \dashrightarrow 0_{X_{h+1}}) \supseteq \set{ \; a \; }$.
\par	
Assuming instead that there is a candidate $\left\langle 0_{X_j}, \, \pi_j \right\rangle$ such that both $\left\langle 0_{X_h},\,  \pi_h \right\rangle$ and $\left\langle 0_{X_k}, \, \pi_k \right\rangle$ are derived by the closure operation through distinct paths, then the $PCFG$ includes the call edges $0_{X_j} \dashrightarrow 0_{X_{jh1}}$, $\ldots$, $0_{X_{jh-1}} \dashrightarrow 0_{X_{jh}}$ with $0_{X_{jh}} = 0_{X_h}$ and $0_{X_j} \dashrightarrow 0_{X_{jk1}} \ldots 0_{X_{jk-1}} \dashrightarrow 0_{X_{jk}}$ with $0_{X_{jk}} = 0_{X_k}$; thus the following inclusions hold: $\set{ \; a \; } \subseteq Gui \, (0_{X_{jh-1}} \dashrightarrow 0_{X_{jh}}) \subseteq \ldots \subseteq Gui \, (0_{X_j} \dashrightarrow 0_{X_{jh1}})$ and $\set{ \; a \; } \subseteq Gui \, (0_{X_{jk-1}} \dashrightarrow 0_{X_{jk}}) \subseteq \ldots \subseteq Gui \, (0_{X_j} \dashrightarrow 0_{X_{jk1}})$; therefore the two guide sets $Gui \, (0_{X_j} \dashrightarrow 0_{X_{jh1}})$ and $Gui \, (0_{X_j} \dashrightarrow 0_{X_{jk1}})$ are not disjoint.
\par	
Let us now consider the case $X = Z \in V$. Then if the candidate $\left\langle 0_{X_k}, \, \pi_k \right\rangle$
derives from $\left\langle 0_{X_h}, \, \pi_h \right\rangle$, the $PCFG$ includes the call edges
$0_{X_h} \dashrightarrow 0_{X_{h+1}}$,  $\ldots$, $0_{X_{k-1}} \dashrightarrow 0_{X_k}$, as well as
$0_{X_h} \dashrightarrow 0_Z$ and $0_{X_k} \dashrightarrow 0_Z$, hence
$Ini \, (0_Z) \subseteq Gui \, (0_{X_h} \dashrightarrow 0_Z)$ and
$Ini \, (0_Z) \subseteq Gui \, (0_{X_k} \dashrightarrow 0_Z) \subseteq Gui \, (0_{X_h} \dashrightarrow 0_{X_{h+1}})$,
and the two guide sets $Gui \, (0_{X_h} \dashrightarrow 0_Z)$ and $Gui \, (0_{X_h} \dashrightarrow 0_{X_{h+1}})$
are not disjoint. The case of both candidates $\left\langle 0_{X_h}, \, \pi_h \right\rangle$
and $\left\langle 0_{X_k}, \, \pi_k \right\rangle$ deriving from a common candidate
$\left\langle 0_{X_j}, \, \pi_j \right\rangle$ through distinct sequences of closure operations,
is similarly dealt with and leads to the conclusion that in the $PCFG$ there are two call edges originating
from state $0_{X_j}$, the guide sets of which are not disjoint. This concludes the induction base.
\par
\underline{Inductive step}: Consider a non-initial, singleton base m-state $I$, such that a multiple transition $\vartheta \, (I, \, X)$ is defined. Since $I$ has a singleton base, of the two candidates from where the multiple transition originates, at least one is in $I_{ \vert closure}$. The case of both candidates in the closure is at all similar to the one treated in the base case of the induction. Therefore, we consider the case of one candidate in the base with a non-initial state $q_A$ and one in the closure with an initial state $0_Y$. Then if $X = a \in \Sigma$ the $PCFG$ has: states $r_A$, $r_Y$ and $0_{Y_1}$, $\ldots$, $0_{Y_n}$, with $n \geq 1$ and $Y_n = Y$; shift edges $q_A \stackrel a \to r_A$ and $0_Y \stackrel a \to r_Y$; and call edges $q_A \dashrightarrow 0_{Y_1}$, $\ldots$, $0_{Y_{n-1}} \dashrightarrow 0_Y$ such that $\set{ \; a \; } \subseteq Ini \, (0_Y) \subseteq Gui \, (0_{Y_{n-1}} \dashrightarrow 0_Y) \subseteq \ldots \subseteq Gui \, (q_A \dashrightarrow 0_{Y_1})$ and $\set { \; a \; } = Gui \, \left(q_A \stackrel a \to r_A\right)$. Thus the two guide sets $Gui \, \left(q_A \stackrel a \to r_A\right)$ and $Gui \, (q_A \dashrightarrow 0_{Y_1})$ are not disjoint. Otherwise if $X = Z \in V$, the $PCFG$ includes the nonterminal shift edges $q_A \stackrel Z \to r_A$ and $0_{Y_n} \stackrel Z \to r_{Y_n}$ with $r_{Y_n} \in Q_{Y_n}$, and the call edges $q_A \dashrightarrow 0_{Y_1}$, $\ldots$, $0_{Y_{n-1}} \dashrightarrow 0_{Y_n}$, and also the two call edges $q_A \dashrightarrow 0_Z$ and $0_{Y_n} \dashrightarrow 0_Z$. Then it holds $Gui \, (q_A \dashrightarrow 0_Z) \, \cap \, Gui \, (q_A \dashrightarrow 0_{Y_1}) \supseteq Ini \, (Z)$, and	the two guide sets $Gui \, (q_A \dashrightarrow 0_Z)$ and $Gui \, (q_A \dashrightarrow 0_{Y_1})$ are not disjoint. This concludes the induction.
\item We prove that the presence of shift-reduce or reduce-reduce conflicts implies that the guide sets are not disjoint. We can assume that
all m-states are singleton base, hence if there are two conflicting candidates at most one of them
is in the base, as in the point (\ref{i2}) above. So:
\begin{enumerate}
\item First we consider reduce-reduce conflicts and the two cases of candidates
being both in the closure or only one.
\begin{enumerate}
\item If both candidates are in the closure then there are $n > 1$ candidates
$\left\langle 0_{X_1}, \, \pi_1 \right\rangle$, $\ldots$, $\left\langle 0_{X_n}, \, \pi_n \right\rangle$ such that the candidates $\left\langle 0_{X_1}, \, \pi_1 \right\rangle$ and $\left\langle 0_{X_n}, \, \pi_n \right\rangle$ are conflicting, hence for some $a \in \Sigma$ it holds $a \in Gui \, (0_{X_1} \to)$ and $a \in Gui \, (0_{X_n} \to)$. If the candidate $\left\langle 0_{X_n}, \, \pi_n \right\rangle$ derives from candidate $\left\langle 0_{X_1}, \, \pi_1 \right\rangle$ through a sequence of closure operations, then the $PCFG$ includes the chain of call edges $0_{X_1} \dashrightarrow 0_{X_2}$, $\ldots$, $0_{X_{n-1}} \dashrightarrow 0_{X_n}$ with $a \in Gui \, (0_{X_{n-1}} \dashrightarrow 0_{X_n}) \subseteq \ldots \subseteq Gui \, (0_{X_{1}} \dashrightarrow 0_{X_{2}})$, therefore $a \in Gui \, (0_{X_1} \to) \, \cap \, Gui \, (0_{X_1} \dashrightarrow 0_{X_2})$ and these two guide sets are not disjoint. If instead there is a candidate $\left\langle 0_{X_j}, \, \pi_j \right\rangle$ such that both $\left\langle 0_{X_1}, \, \pi_1 \right\rangle$ and $\left\langle 0_{X_n}, \, \pi_n \right\rangle$ are obtained from that candidate through distinct paths, it can be shown that in the $PCFG$ there are two call edges departing from state $0_{X_j}$, the guide sets of which are not disjoint, as in the point (\ref{i2}) above.
\item The case where one candidate is in the base and one is in the closure, is quite similar to the one in the previous point:
there are a candidate $\left\langle f_A, \, \pi \right\rangle$ and $n \geq 1$ candidates $\left\langle 0_{X_1}, \, \pi_1 \right\rangle$, $\ldots$, $\left\langle 0_{X_n}, \, \pi_n \right\rangle$ such that the candidates $\left\langle f_A, \, \pi \right\rangle$ and $\left\langle 0_{X_n}, \, \pi_n \right\rangle$ are conflicting, hence for some $a \in \Sigma$ it holds  $a \in Gui \, (f_A \to)$ and $a \in Gui \, (0_{X_n} \to)$, the $PCFG$ includes the call edges $f_A \dashrightarrow 0_{X_1}$, $\ldots$, $0_{X_{n-1}} \dashrightarrow 0_{X_n}$, whence $a \in Gui \, (f_A \to) \, \cap \, Gui \, (f_A \dashrightarrow 0_{X_1})$ and these two guide sets are not disjoint.
\end{enumerate}
\item Then we consider shift-reduce conflicts and the three cases that arise depending on whether the conflicting candidates
are both in the closure or only one, or whether there is one candidate that is both shift and reduction.
\begin{enumerate}
\item \label{c1i} If there are two conflicting candidates, both in the closure of m-state $I$, then either the closure of $I$ includes
$n > 1$ candidates $\left\langle 0_{X_1}, \, \pi_1 \right\rangle$, $\ldots$, $\left\langle 0_{X_n}, \, \pi_n \right\rangle$ and the $PCFG$ includes the call edges $0_{X_1} \dashrightarrow 0_{X_2}$, $\ldots$, $0_{X_{n-1}} \dashrightarrow 0_{X_n}$, or the closure includes three candidates $\left\langle 0_{X_j}, \, \pi_j \right\rangle$, $\left\langle 0_{X_h}, \, \pi_h \right\rangle$ and $\left\langle 0_{X_k}, \, \pi_k \right\rangle$ such that $\left\langle 0_{X_h}, \, \pi_h \right\rangle$ and $\left\langle 0_{X_k}, \, \pi_k \right\rangle$ derive from $\left\langle 0_{X_j}, \pi_j \right\rangle$ through distinct chains of closure operations.
\par
We first consider a linear chain of closure operations from $\left\langle 0_{X_1}, \, \pi_1 \right\rangle$ to $\left\langle 0_{X_n}, \, \pi_n \right\rangle$. Let us first consider the case where $\left\langle 0_{X_1}, \, \pi_1 \right\rangle$ is a reduction candidate (hence $0_{X_1}$ is a final state), and $\exists \, a \in \Sigma$ such that $a \in \pi_1$ and $\exists \, q \in Q_{X_{n}}$ such that $0_{X_n} \stackrel a \to q$. Then it holds $a \in Gui \, \left(0_{X_{n-1}} \dashrightarrow 0_{X_n}\right) \subseteq \ldots \subseteq Gui \, \left(0_{X_1} \dashrightarrow 0_{X_2}\right)$, therefore $a \in Gui \, (0_{X_1} \to) \, \cap \, Gui \, (0_{X_1} \dashrightarrow 0_{X_2})$ and these two guide sets are not disjoint. Let us then consider the symmetric case where $0_{X_n}$ is a final state (hence $\left\langle 0_{X_n}, \, \pi_n \right\rangle$ is a reduction candidate), and $\exists \, a \in \Sigma$ such that $a \in \pi_n$ and $\exists \, q \in Q_{X_1}$ such that $0_{X_1} \stackrel a \to q$. Then it holds $a \in Gui \, \left(0_{X_{n-1}} \dashrightarrow 0_{X_n}\right) \subseteq \ldots \subseteq Gui \, \left(0_{X_1} \dashrightarrow 0_{X_2}\right)$, therefore $\set{ \; a \; } = Gui \, \left(0_{X_1} \stackrel a \to q\right) \, \cap \, Gui \, \left(0_{X_1} \dashrightarrow 0_{X_2}\right)$ and these two guide sets are not disjoint.
\par
The other case, where two candidates $\left\langle 0_{X_h}, \, \pi_h \right\rangle$ and $\left\langle 0_{X_k}, \, \pi_k \right\rangle$ derive from a third one $\left\langle 0_{X_j}, \, \pi_j \right\rangle$ through distinct chains of closure operations, is treated in a similar way and leads to the conclusion that in the $PCFG$ there are two call edges departing from state $0_{X_j}$, the guide sets of which are not disjoint.
\par
\item If there are two conflicting candidates, one in the base and one in the closure, then we consider the two cases that arise
depending on whether the reduction candidate is in the closure or in the base, similarly to point \ref{c1i} above.
\par
Let us first consider the case where the base contains the candidate $\langle p_A, \, \pi \rangle$, the closure includes $n \geq 1$ candidates $\langle 0_{X_1}, \, \pi_1 \rangle$, $\ldots$, $\langle 0_{X_n}, \, \pi_n \rangle$, state $0_{X_n}$ is final hence $\langle 0_{X_n}, \, \pi_n \rangle$ is a reduction candidate, and $\exists \, a \in \Sigma$ such that $a \in \pi_n$ and $\exists \, q \in Q_A$ such that $p_A \stackrel a \to q$. Then since $0_{X_n}$ is final, it holds $Nullable \, (X_n)$ and $\set{ \; a \; } \in Gui \, (0_{X_{n-1}} \dashrightarrow 0_{X_n}) \subseteq \ldots \subseteq Gui \, (0_{p_A} \dashrightarrow 0_{X_1})$, whence $\set{ \; a \; } = Gui \, \left(p_A \stackrel a \to q\right) \, \cap \, Gui \, (p_A \dashrightarrow 0_{X_1})$ and these two guide sets are not disjoint.
\par
Let us then consider the symmetric case where the base contains a candidate $\langle f_A, \, \pi \rangle$ with $f_A \in F_A$ (hence $\langle f_A, \, \pi\rangle$ is a reduction candidate), the closure includes $n \geq 1$ candidates $\langle 0_{X_1}, \, \pi_1 \rangle$, $\ldots$, $\langle 0_{X_n}, \, \pi_n \rangle$, and $\exists \, a \in \Sigma$ such that $a \in \pi$ and $\exists \, q \in Q_{X_n}$ such that $0_{X_n} \stackrel a \to q$. Then it holds $a \in Gui \, (0_{X_{n-1}} \dashrightarrow 0_{X_n}) \subseteq \ldots \subseteq Gui \, (f_A \dashrightarrow 0_{X_1})$, therefore $\set{ \; a \; } \subseteq Gui \, (f_A \to) \, \cap \, Gui \, (f_A \dashrightarrow 0_{X_1})$ and these two guide sets are not disjoint.
\par	
\item If there is one candidate $\langle p_A, \, \pi \rangle$ that is both shift and reduce, then $p_A \in F_A$, and
$\exists \, a \in \Sigma$ such that $a \in \pi$ and $\exists \, q_A \in Q_A$ such that $p_A \stackrel a \to q_A$. Then it holds $\set{ \; a \; } = Gui \, (p_A \to) \, \cap \, Gui \, \left(p_A \stackrel a \to q_A\right)$ and so there are two guide sets on edges departing from the same $PCFG$ node, that are not disjoint.
\end{enumerate}
\end{enumerate}
\end{enumerate}
This concludes the proof of the Property.
\end{proof}
\subsection*{Proof of lemma \ref{EarleyCorrect}}
If it holds $\left\langle \; q_A, \; j \; \right\rangle \in E \, [i]$, which implies inequality $j \leq i$, with $q_A \in Q_A$, i.e., state $q_A$ belongs to the machine $M_A$ of nonterminal $A$, then it holds $\left\langle \; 0_A, \; j \; \right\rangle \in E \, [j]$ and the right-linearized grammar $\hat{G}$ admits a leftmost derivation $0_A \stackrel \ast \Rightarrow x_{j+1} \, \ldots \, x_i \, q_A$ if $j < i$ or $0_A \stackrel \ast \Rightarrow q_A$ if $j = i$.
\begin{proof}
By induction on the sequence of insertion operations performed in the vector $E$ by the Analysis Algorithm.
\begin{description}
\item[Base] $\langle 0_S, \, 0 \rangle \in E \, [0]$ and the property stated by the Lemma is trivially satisfied:
$\langle 0_S, \, 0 \rangle \in E \, [0]$ and $0_S \stackrel \ast \Rightarrow 0_S$.
\par
\item[Induction] We examine the three operations below:
\par
\emph{TerminalShift}: if $\langle q_A, \, j \rangle  \in E \, [i]$ results from a TerminalShift operation,
then it holds $\exists \, q'_A$ such that $\delta \, (q'_A, \, x_i) = q_A$ and $\langle q'_A, \, j \rangle  \in E \, [i-1]$, as well as
$\langle 0_A, \, j \rangle  \in E \, [j]$ and $0_A \stackrel \ast \Rightarrow x_{j+1} \, \ldots \, x_{i-1} \; q'_A$ by the inductive hypothesis;
hence $0_A \stackrel \ast \Rightarrow x_{j+1} \, \ldots \, x_{i-1} \; x_i \; q_A$.
\par
\emph{Closure}: if $\langle 0_B, \, i \rangle \in E \, [i]$ then the property is trivially satisfied:
$\langle 0_B, \, i \rangle \in E \, [i]$ and $0_B \stackrel \ast \Rightarrow 0_B$.
\par
\emph{NonterminalShift}: suppose first that $j < i$; if $\langle q_{F_A}, \, j \rangle \in E \, [i]$ then $\langle 0_A, \, j \rangle \in E \, [j]$ and $0_A \stackrel \ast \Rightarrow x_{j+1} \, \ldots \, x_i \; q_{F_A}$ by the inductive hypothesis; furthermore, if  $\langle q_B, \, k \rangle \in E \, [j]$ then $\langle 0_B, \, k \rangle \in E \, [k]$ and $0_B \stackrel \ast \Rightarrow x_{k+1}  \ldots \, x_j \; q_B$ by the inductive hypothesis; also if $\delta \, (q_B, \, A) = q'_B$ then $q_B \Rightarrow 0_A \; q'_B$; and since $\langle q'_B, \, k \rangle$ is added to $E \, [i]$, it is eventually proved that
$\langle q'_B, \, k \rangle \in E \, [i]$ implies $\langle 0_B, \, k \rangle \in E \, [k]$ and:
\begin{eqnarray*}
0_B & \stackrel \ast \Rightarrow & x_{k+1} \, \ldots \, x_j \; q_B \\
& \stackrel \ast \Rightarrow & x_{k+1} \, \ldots \, x_j \; 0_A \; q'_B \\
& \stackrel \ast \Rightarrow & x_{k+1} \, \ldots \, x_j \; x_{j+1} \, \ldots \, x_i \; q_{F_A} \; q'_B \\
& \Rightarrow & x_{k+1} \, \ldots \, x_i \; q'_B
\end{eqnarray*}
\end{description}
If $j = i$ then the same reasoning applies, but there is not any \emph{TerminalShift} since the derivation does not generate any terminal. In particular, for the \emph{NonterminalShift} case, we have
that if $\langle q_{F_A}, \, i \rangle \in E \, [i]$, then $\langle 0_A, \, i \rangle \in E \, [i]$ and $0_A \stackrel \ast \Rightarrow q_{F_A}$ by the inductive hypothesis; and the rest follows as before but with $k = i$ and $0_B \stackrel \ast \Rightarrow q_B$.
\par
The reader may easily complete by himself with the two remaining proof cases where $k < j = i$ or $k = j < i$, which are combinations.
\par
This concludes the proof of the Lemma.
\end{proof}
\subsection*{Proof of lemma \ref{EarleyComplete}}
Take an \emph{EBNF} grammar $G$ and a string $x = x_1 \, \dots \, x_n$ of length $n$ that belongs to language $L \, (G)$. In the right-linearized grammar $\hat{G}$, consider any leftmost derivation $d$ of a prefix $x_1 \, \dots \, x_i$ ($i \leq n$) of $x$, that is:
\[
d \colon 0_S \stackrel + \Rightarrow x_1 \, \ldots \, x_i \, q_A \, W
\]
with $q_A \in Q_A$ and $W \in Q_A^\ast$. The two points below apply:
\begin{enumerate}
\item \label{Earleycase1app}
if it holds $W \neq \varepsilon$, i.e., $W = r_B \, Z$ for some $r_B \in Q_B$, then it holds $\exists \, j \; 0 \leq j \leq i$ and $\exists \, p_B \in Q_B$ such that the machine net has an arc $p_B \stackrel A \to r_B$ and grammar $\hat{G}$ admits two leftmost derivations $d_1 \colon 0_S \stackrel + \Rightarrow x_1 \, \ldots \, x_j \, p_B \, Z$ and $d_2 \colon 0_A \stackrel + \Rightarrow x_{j+1} \, \ldots \, x_i \, q_A$, so that derivation $d$ decomposes as follows:
\[\def\arraystretch{1.5}
\begin{array}{rclcl}
d \colon 0_S & \stackrel {d_1} \Rightarrow & x_1 \, \ldots \, x_j \, p_B \, Z & \stackrel {p_B \to 0_A \, r_B} \Longrightarrow & x_1 \, \ldots \, x_j \, 0_A \, r_B \, Z \\
& \stackrel {d_2} \Rightarrow & x_1 \, \ldots \, x_j  \, x_{j+1} \, \ldots \, x_i \, q_A \, r_B \, Z & = & x_1 \, \ldots \, x_i \, q_A \, W
\end{array}
\]
as an arc $p_B \stackrel A \to r_B$ in the net maps to a rule $p_B \to 0_A \, r_B$ in grammar $\hat{G}$
\item
this point is split into two steps, the second being the crucial one: \label{Earleycase2app}
\begin{enumerate}
\item
if it holds $W = \varepsilon$, then it holds $A = S$, i.e., nonterminal $A$ is the axiom, $q_A \in Q_S$ and $\left\langle \; q_A, \; 0 \, \right\rangle
\in E \, [i]$ \label{Earleycase2aapp}
\item if it also holds $x_1 \ldots x_i \in L \, (G)$, i.e., the prefix also belongs to language $L \, (G)$,
then it holds $q_A = f_S \in F_S$, i.e., state $q_A = f_S$ is final for the axiomatic machine $M_S$, and the prefix is accepted by the Earley algorithm \label{Earleycase2bapp}
\end{enumerate}
\end{enumerate}
Limit cases: if it holds $i = 0$ then it holds $x_1 \, \ldots \, x_i = \varepsilon$; if it holds $j = i$ then it holds $x_{j+1} \, \ldots \, x_i = \varepsilon$; and if it holds $x = \varepsilon$ (so $n = 0$) then both cases hold, i.e., $j = i = 0$.
\par
If the prefix coincides with the whole string $x$, i.e., $i = n$, then step \eqref{Earleycase2bapp} implies that string $x$, which by hypothesis belongs to language $L \, (G)$, is accepted by the Earley algorithm, which therefore is complete. \qed
\begin{proof}
By induction on the length of the derivation $S \stackrel \ast \Rightarrow x$.
\begin{description}
\item[Base] Since $0_S \stackrel \ast \Rightarrow 0_S$ the thesis (case (\ref{Earleycase1app})) is satisfied by taking $i = 0$.
\item[Induction] We examine a few cases:
\begin{itemize}
\item If $0_S \stackrel \ast \Rightarrow x_1 \, \ldots \, x_i \; q_A \; W \Rightarrow x_1 \, \ldots \, x_i \; 0_X \; r_A \; W$
(because $q_A \stackrel X \to r_A$) then the closure operation adds to $E \, [i]$ the item $\langle 0_X, \, i \rangle$, hence the thesis (case \eqref{Earleycase1app}) holds by taking for $j$ the value $i$, for $q_A$ the value $0_X$, for $q_B$ the value $r_A$ and for $W$ the value $r_A \, Z$.
\item If $0_S \stackrel \ast \Rightarrow x_1 \, \ldots \, x_i \; q_A \; W \Rightarrow x_1 \, \ldots \, x_i \; x_{i+1} \; r_A \; W$
(because $q_A \stackrel {x_{i+1}} \to r_A$) then the operation \emph{TerminalShift} ($E$, $i + 1$) adds to $E \, [i + 1]$ the item $\langle r_A, \, j \rangle$, hence the thesis (case \eqref{Earleycase1app}) holds by taking for $j$	the same value and for $q_A$ the value $r_A$.
\item If $0_S \stackrel \ast \Rightarrow x_1 \, \ldots \, x_i \; q_A \; W \Rightarrow x_1 \, \ldots \, x_i \; W$ (because
$q_A = f_A \in F_A$ and $f_A \to \varepsilon$) then:
\begin{itemize}
\item If $Z = \varepsilon$, $q_A = f_S \in F_S$ and $0_S \stackrel \ast \Rightarrow x_1 \, \ldots \, x_i \; f_S \Rightarrow x_1 \,
\ldots \, x_i$, then the current derivation step is the last one and the string is accepted because the pair $\langle f_S, \, 0 \rangle \in E \, [i]$ by the inductive hypothesis.
\item If $W = q_B \; Z$, then by applying the inductive hypothesis to $x_1 \, \ldots \, x_j$, the following facts hold:
$\exists \, k \; 0 \leq k \leq j$, $\exists \, p_C \in Q_C$ such that $0_S \stackrel \ast \Rightarrow x_1 \, \ldots \, x_k \; p_C \; Z'$, $p_C \stackrel B \to q_C$, $0_B \stackrel \ast \Rightarrow x_{k+1} \, \ldots \, x_j \; p_B$, $p_B \stackrel A \to q_B$, and the vector $E$ is such that $\langle p_C, \, h \rangle \in E \, [k]$, $\langle 0_B, \, k \rangle \in E \, [k]$, $\langle p_B, \, k \rangle \in E \, [j]$, $\langle 0_A, \, j \rangle \in E \, [j]$ and $\langle q_A, \, j \rangle \in E \, [i]$; then the nonterminal shift operation in the \emph{Completion} procedure adds to $E \, [i]$ the pair $\langle q_B, \, k \rangle$.
\par
First, we notice that from $0_B \stackrel \ast \Rightarrow x_{k+1} \, \ldots \, x_j \, p_B$, $p_B \stackrel A \to q_B$ and $0_A \stackrel \ast \Rightarrow x_{j+1} \ldots x_i f_A$, it follows that $0_B \stackrel \ast \Rightarrow x_{k+1} \, \ldots \, x_i \; q_B$. Next, since $p_C \stackrel B \to q_C$, it holds:
\[
0_S \stackrel \ast \Rightarrow x_1 \, \ldots \, x_k \; p_C \; Z \Rightarrow x_1 \, \ldots \, x_k \; 0_B \; q_C \; Z'
\]
therefore:
\[
0_S \stackrel \ast \Rightarrow x_1 \, \ldots \, x_k \; x_{k+1} \, \ldots \, x_i \; q_B \; q_C \; Z'
\]
and the thesis holds by taking for $j$ the value $k$, for $q_A$ the value $q_B$, for $q_B$ the value $q_C$ and for $W$ the value $q_C \; Z'$. The above situation is schematized in Figure \ref{figTraceEarleyCompletenessProof}.
\end{itemize}
\end{itemize}
\end{description}
\begin{figure}[h!]
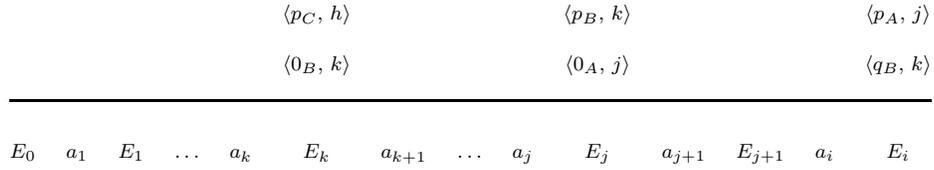

\begin{center}
\scalebox{0.95}{
\psset{border=0pt,nodesep=0pt,labelsep=5pt, arrows=->, rowsep=10pt, colsep=12pt}
\begin{psmatrix}

&&&&& \rnode{k1}{$\langle p_C, \, h \rangle$} &&&&  \rnode{j1}{$\langle p_B, \, k \rangle$} &&&& \rnode{i1}{$\langle p_A, \, j \rangle$} \\

&&&&& \rnode{k2}{$\langle 0_B, \, k \rangle$} &&&& \rnode{j2}{$\langle 0_A, \, j \rangle$} &&&& \rnode{i2}{$\langle q_B, \, k \rangle$} \\ \hline \\

$E_0$ & $a_1$ & $E_1$ & $\ldots$ & $a_k$ & $E_k$ & $a_{k+1}$ & $\ldots$ & $a_j$& $E_j$ & $a_{j+1}$ & $E_{j+1}$ & $a_{i}$ & $E_{i}$
\end{psmatrix}}
\end{center}
\caption{Schematic trace of tabular parsing.} \label{figTraceEarleyCompletenessProof}
\end{figure}
This concludes the proof of the Lemma.
\end{proof}
\end{document}